\newtheorem{theorem}{Theorem}
\newtheorem{corollary}{Corollary}
\newtheorem{lemma}{Lemma}
\newtheorem{claim}{Claim}
\theoremstyle{definition}
\newtheorem{definition}{Definition}
\newcounter{example}
\newenvironment{example}%
{\refstepcounter{example}\par\vspace{10pt} \textit{Example \arabic{example}:}}%
{\hfill$\Diamond$}
\DeclareMathOperator{\E}{E}
\DeclareMathOperator{\Gr}{Gr}
\DeclareMathOperator{\Pj}{Pj}
\DeclareMathOperator{\rank}{rk}
\DeclareMathOperator{\Fr}{Fr}
\DeclareMathOperator{\entropy}{\mathcal{H}}
\DeclareMathOperator{\mutual}{\mathit{I}}
\DeclareMathOperator{\locrate}{\mathit{J}}
\newcommand{\etal}{\textit{et al.}}
\newcommand{\subs}{\text{SS}}
\newcommand{\ffield}{\mathbb{F}}
\newcommand{\lspan}[1]{\langle #1 \rangle}
\newcommand{\tr}{\top}
\newcommand{\gcos}[2]{\begin{bmatrix} #1 \\
#2 \end{bmatrix}}
\newcommand{\gco}[2]{\left[\substack{#1\\#2}\right]}
\newcommand{\cmat}[2]{\chi^{#1}_{#2}}
\newcommand{\cmatt}[2]{\zeta^{#1}_{#2}}
\newcommand{\bX}{\mathbf{X}}
\newcommand{\bY}{\mathbf{Y}}
\newcommand{\bH}{\mathbf{H}}
\newcommand{\loc}{\text{LOC}}
\begin{document}
\title{Capacity Analysis of Linear Operator Channels \\ over Finite Fields}

\author{Shenghao~Yang,~\IEEEmembership{Member,~IEEE},~Siu-Wai~Ho,~\IEEEmembership{Member,~IEEE},~Jin~Meng,
  and~En-hui~Yang~\IEEEmembership{Fellow,~IEEE}%
  \thanks{This paper was presented in part at IEEE Information Theory
    Workshop, Cairo, Egypt 2010, and at the IEEE International
    Symposium on Information Theory, Austin, USA 2010.}%
  \thanks{The work of S. Yang was supported in part
    by the National Basic Research Program of China Grant
    2011CBA00300, 2011CBA00301, the National Natural Science
    Foundation of China Grant 61033001, 61361136003.}%
  \thanks{The work of S.-W. Ho was supported by the Australian
    Research Council under an Australian Postdoctoral Fellowship as
    part of Discovery Project DP1094571.}%
  \thanks{The research of J. Meng and E.-h. Yang in this paper is
    supported in part by the Natural Sciences and Engineering Research
    Council of Canada under Grant RGPIN203035-11, and by the Canada
    Research Chairs Program.}%
  \thanks{S. Yang is with the Institute for Theoretical Computer
    Science, Institute for Interdisciplinary Information Sciences,
    Tsinghua University, Beijing, 100084, P. R. China. (e-mail:
    shyang@tsinghua.edu.cn)}%
  \thanks{S.-W. Ho is with the Institute for Telecommunications
    Research, University of South Australia, Australia. (e-mail:
    siuwai.ho@unisa.edu.au)} \thanks{J. Meng and E.-h. Yang are with
    the Department of Electrical and Computer Engineering, Waterloo
    University, Waterloo, ON, Canada. (e-mail:\{j4meng,
    ehyang\}@uwaterloo.ca)}%
}

\maketitle

\begin{abstract}
  Motivated by communication through a network employing linear
  network coding, capacities of linear operator channels (LOCs) with
  arbitrarily distributed transfer matrices over finite fields are
  studied.  Both the Shannon capacity $C$ and the subspace coding
  capacity $C_{\subs}$ are analyzed.  By establishing and comparing
  lower bounds on $C$ and upper bounds on $C_{\subs}$, various
  necessary conditions and sufficient conditions such that $C=
  C_\subs$ are obtained. A new class of LOCs such that $C=C_\subs$ is
  identified, which includes LOCs with uniform-given-rank transfer
  matrices as special cases. It is also demonstrated that $C_\subs$ is
  strictly less than $C$ for a broad class of LOCs.  In general, an
  optimal subspace coding scheme is difficult to find because it
  requires to solve the maximization of a non-concave
  function. However, for a LOC with a unique subspace degradation,
  $C_{\subs}$ can be obtained by solving a convex optimization problem
  over rank distribution. Classes of LOCs with a unique subspace
  degradation are characterized. Since LOCs with uniform-given-rank
  transfer matrices have unique subspace degradations, some existing
  results on LOCs with uniform-given-rank transfer matrices are
  explained from a more general way.
\end{abstract}

\begin{IEEEkeywords}
   Linear operator channel, network coding, 
   subspace coding
\end{IEEEkeywords}

\section{Introduction}
Fix a finite field $\ffield$ with $q$ elements.  A linear operator
channel (LOC), also called a multiplicative matrix channel, with input
random variable $X\in\ffield^{T\times M}$ and output random variable
$Y\in\ffield^{T\times N}$ is given by
\begin{equation}\label{eq:formu}
  Y = XH,
\end{equation}
where $H\in \ffield^{M\times N}$ is called a \emph{transfer matrix}. We assume that $X$ and $H$ are
independent, and the transfer matrices in different channel uses are
independent and follow the same distribution.
For both the transmitter and  receiver,
the distribution of $H$ is given a priori, but the instances of $H$ are unknown.

A LOC is used to model communication through a network employing
linear network coding \cite{linear,alg}. Consider a network coding
scenario where the source node encodes its message into batches (also
called generations, classes or chunks), each of which contains $M$
packets of $T$ symbols \cite{chou03, maym06}. Intermediate network
nodes generate new packets by taking linear combinations of the
packages among the same batch. There may be packet loss and network
topological dynamics during the transmission.  The finally received
$N$ packets of a batch are all linear combinations of the original
packets of the batch. Such a network transmission can be modeled by a
LOC.

Coding problems for LOCs have been studied for various scenarios.  If
$T$ is much larger than $M$, parts of $X$ can be used to transmit an
identity matrix so that the receiver can recover the instances of
$H$. Such a scheme, called \emph{channel training}, has been widely
used for random linear network coding \cite{ho06j} and is
asymptotically optimal when $T$ goes to infinity.
The maximum achievable
rate of channel training (by multiple uses of the channel) can be
achieved using random linear codes \cite{yang10bf}, and a channel training
scheme with low encoding/decoding complexity has been proposed
\cite{yang11ac,yang12bats} by generalizing fountain codes. However, if $T$ is
not much larger than $M$, the overhead used to explicitly
recover the instances of $H$ is dominating, and hence different coding schemes must be studied.

We call the vector space spanned by the column vectors of a matrix $\bX$ the
column space of the matrix, denoted by $\lspan{\bX}$.
For a LOC, with probability one $\lspan{Y}$ is a subspace of $\lspan{X}$.
 Koetter and Kschischang
\cite{koetter08j} defined a channel with subspaces as input and output
to capture this property,
and discussed subspace codes for one use of this subspace channel.
They defined the minimum distance of a subspace code in terms of a
subspace distance between codewords, and used the minimum distance to
characterize the error (or erasure) correction capability of the
subspace code.  Thereafter, subspace coding has generated a lot of
research interests (see e.g., \cite{silva08j,silva09j,gadouleau10})
and the study of subspace coding has also been extended from one use to
multiple uses of the channel \cite{nobrega09,nobrega10}.

In this paper, we are interested in the achievable rates
of coding schemes when the error probability goes to zero asymptotically.  Most
existing works on subspace coding try to design large codebooks with
large minimum distances.  However, subspace codes designed under the
minimum distance criteria may not have a good performance for
multiple uses of a LOC
\cite{yang10bf}.

Towards better understanding of the coding problems and identifying new
directions to study coding for LOCs, an information theoretic study
of LOCs becomes necessary.  Existing works have studied several
classes of distributions of $H$.  When $M=N$, Silva \etal{} \cite{silva08c}
studied the case that $H$ is uniformly chosen from all full rank
$M\times M$ matrices. Siavoshani \etal{} \cite{siavoshani11} studied the
case that $H$ contains uniformly i.i.d. components.  N\'obrega \etal{}
\cite{nobrega11, nobrega11a} studied LOCs with
\emph{uniform-given-rank} transfer matrices, which include the
transfer matrices studied in \cite{silva08c, siavoshani11} as special
cases.
For all the above special distributions of $H$, it is shown
that $I(X;Y)=I(\lspan{X};\lspan{Y})$ for any input $X$,
which in turn implies that
using subspaces for encoding and decoding indeed achieves the Shannon capacity of these special LOCs; in addition, the
Shannon capacity of these LOCs  can be found by maximizing over 
input rank distribution.

However, many typical scenarios in linear network coding cannot be
covered by those special cases studied in the existing literature. Even
though the transfer matrix is full rank with high
probability for random linear network coding when both the field size
and the maximum flow from the source node to the destination node are
sufficiently large \cite{ho06j}, such a transfer matrix may not have the uniform
distribution studied in \cite{silva08c}. The transfer matrix studied in \cite{siavoshani11} can be
formed by using random linear network coding in the intermediate node
in Fig.~\ref{fig:three}, where node $a$ caches $M$ packets transmitted
by node $s$ before encoding, and transmits $N$ independent random linear
combinations of these $M$ packets. 
But if we take the packet loss during the transmission on both links
into consideration, the transfer matrix will not have independent
components since a packet loss will force a row/column to be zero.
Moreover, encoding after collecting $M$ packets 
introduces delay, so it is more practical to apply random linear network
coding in a causal way: the
intermediate node keeps transmitting the linear combinations of the packets it
has received~\cite{chou03,Lun2008}, which results in a transfer matrix
of the form (take $M=4$ as an example)
\begin{equation*}
  \begin{bmatrix}
    h_{1,1} & h_{1,2} & h_{1,3} & h_{1,4} & h_{1,5} & \cdots \\
    0 & h_{2,2} & h_{2,3} & h_{2,4} & h_{2,5} & \cdots \\
    0 & 0 & 0 & h_{3,4} & h_{3,5} & \cdots \\
    0 & 0 & 0 & 0 & h_{4,5} & \cdots
  \end{bmatrix},
\end{equation*}
where i) all nonzero rows are above any rows of all zeros, ii) 
the leading coefficient (the first nonzero component from the left) of
a nonzero row is not to the left of the leading coefficient of the row
above it, iii) all nonzero components are i.i.d. over a finite field. But such
a transfer matrix is even not uniform-given-rank. 
Furthermore, subspace coding is not capacity achieving in general.
 For example, when
$H$ is an $M\times M$ identity matrix and $T=1$, the Shannon capacity is
$M\log q$ bits per use and the subspace coding capacity is $1$ bit.

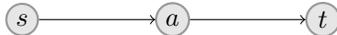
\begin{figure}
  \centering
  \begin{tikzpicture}[dot/.style={circle,draw=gray!80,fill=gray!20,thick,inner
 sep=2pt,minimum size=5pt}]
     \node[dot] (s) at(-2,0) {$s$};
     \node[dot] (a) at(0,0) {$a$} edge[<-] (s);
     \node[dot] (t) at(2,0) {$t$} edge[<-] (a);
  \end{tikzpicture}
  \caption{In this network, $s$ is the source node, $t$ is the destination node, and
    $a$ is the intermediate node that does not demand the file.}
  \label{fig:three}
\end{figure}

In this paper, we are motivated to study LOCs with arbitrarily
distributed transfer matrices.  We analyze both the Shannon capacity
and subspace coding capacity of LOCs, and we try to answer the
following questions: How to achieve or approach the Shannon capacity
of a LOC?  What is the performance of subspace coding and when is
subsapce coding optimal? How to
design subspace coding for general LOCs? Our results are for general values
of $T$, $M$, $N$ and $q$.

We first discuss some symmetry properties of LOCs, which lead to the
discovery that there exists a \emph{uniform-given-row-space} input
distribution achieving the Shannon capacity $C$ of a LOC (Theorem~\ref{the:diq}). We then derive an upper
bound and a lower bound on the Shannon capacity $C$, where the lower
bound is tight for  \emph{row-space-symmetric}
LOCs (Theorem~\ref{the:89s}) and is in general
at least as good as the lower bound derived using uniform-given-rank transfer
matrices in \cite{nobrega11, nobrega11a}.

We then turn our attention to the subspace coding capacity $C_{\subs}$
of a LOC. Note that a LOC has matrices as input and output, while
subspace coding uses subspaces for encoding and decoding. A general
way to study subspace coding for a LOC is to look at a subspace
degradation of the LOC, which is induced by a transition probability from
subspaces to matrices.  The subspace degradations induced by a LOC
are not unique in general, and finding an optimal subspace degradation
involves maximizing a non-concave function, which is in general
difficult to solve.  We study subspace coding with uniform-given-row-space input
distributions to obtain a lower bound  on the subspace
coding capacity (Theorem~\ref{the:diq2}), where the lower bound is further shown to be tight for LOCs with a unique
subspace degradation.  Optimal uniform-given-row-space input distributions for
subspace coding are characterized (Lemma~\ref{the:optssd} and
Theorem~\ref{the:mmm}), and the maximum achievable rate of
constant-rank uniform-given-row-space input distribution is given explicitly.  For a LOC with
a unique subspace degradation, the subspace coding capacity  can be
obtained by solving a convex optimization over the input rank
distribution (Theorem~\ref{the:uniquesd}), which generalizes the
similar result obtained for LOCs with uniform-given-rank transfer
matrices in \cite{nobrega11, nobrega11a}.
For row-space symmetric LOCs, an upper bound on $C_{\subs}$ is also
obtained (Lemma~\ref{lemma:rsuloc}).

To compare  $C_{\subs}$ with $C$,  we characterize, for both LOCs with a unique subspace degradation and row-space-symmetric LOCs,
 necessary conditions and sufficient conditions for $C_{\subs} = C$ (Theorem~\ref{the:capacitysd} and~\ref{the:non2}).
Subspace coding is not Shannon capacity achieving for both classes of LOCs if certain
Markov conditions are not satisfied.  On the other hand, 
subspace coding is capacity achieving for \emph{degraded}
LOCs, which has $I(X;Y)=I(\lspan{X};\lspan{Y})$ for
all input distributions.
A degraded LOC has a unique subspace degradation and is also row-space symmetric
(Theorem~\ref{the:degradedrowspace}).
The LOCs studied in
\cite{silva08c, siavoshani11, nobrega11, nobrega11a} are all
degraded.
We further characterize a new
class of degraded LOCs, called \emph{rank-symmetric} LOCs, and show
that a LOC with a uniform-given-rank transfer matrix is always rank
symmetric, but not vice versa when $T<M$
(Theorem~\ref{the:uniformeq}).

The relationship among the classes of LOCs characterized in this paper is
demonstrated in Fig.~\ref{fig:class}.  Note that when $T\geq M$, a
row-space-symmetric LOC always has a unique subspace degradation, but when
$T<M$, a row-space-symmetric LOC may not have a unique subspace degradation.

\begin{figure}
  \centering
  \begin{tikzpicture}
    \draw (-3.5,-1.6) rectangle (3.5,1.6) node[left=15pt,below=3pt] {$\Omega$};
    \draw (0:0.6cm) ellipse (2.5cm and 1.4cm) node[right=50pt] {$a$};
    \draw (180:0.6cm) ellipse (2.5cm and 1.4cm) node[left=50pt] {$b$};
    \draw (0,0.0) circle (1.0cm) node[above=10pt] {$c$};
    \draw (-90:0.2cm) circle (0.4cm) node {$d$};
  \end{tikzpicture}
  \caption{The Venn diagram about LOCs. $\Omega$ is the set of all
    LOCs. In additional to all LOCs, we
    study four subsets of LOCs: $a$ is the set of row-space-symmetric
    LOCs; $b$ is the set of LOCs with a
    unique subspace degradation; $c$ is the set of degraded
    LOCs; and $d$ is
    the set of rank-symmetric LOCs. Note that $a \subset b$ when
    $T\geq M$, and $d$ includes the LOCs studied in
\cite{silva08c, siavoshani11, nobrega11, nobrega11a}.}
  \label{fig:class}
\end{figure}
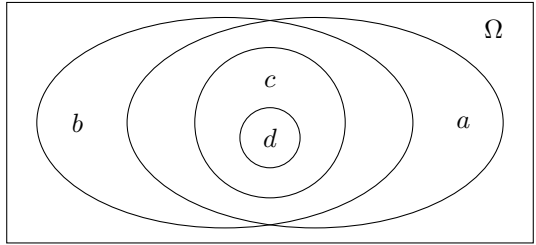

The rest of this paper is organized as follows.
After introducing some notations and mathematical results in
 Section~\ref{sec:pre}, we discuss symmetry properties of LOCs
 and bounds on $C$ in Section~\ref{sec:mul}.
Subspace coding for LOCs is studied in
Section~\ref{sec:subspace}. The comparison between $C$ and $C_{\subs}$
is made in Section~\ref{sec:subspace}. Finally, conclusion remarks are drawn in Section~\ref{sec:con}.

\section{Preliminaries}
\label{sec:pre}

Readers can skip
this section and come back later when these definitions/results are
referred to.

\subsection{Counting in Projective Space}

Let $\ffield$ be the finite field with $q$ elements.
Regard the vectors in $\ffield^t$ as column vectors.
For a matrix $\bX$, let $\rank(\bX)$ be the rank of $\bX$, let $\bX^\tr$ be the
transpose of $\bX$, and let $\lspan{\bX}$ be the subspace spanned by the
columns of $\bX$. We call $\lspan{\bX}$ and $\lspan{\bX^\tr}$ the
column space and the row space of $\bX$, respectively.

For a matrix $\mathbf B$
and a set of matrices $\mathcal A$, we define
\begin{equation*}
  \mathbf B + \mathcal{A} \triangleq \{\mathbf{B}+\mathbf{D} :\mathbf D \in \mathcal{A}\},
\end{equation*}
and
\begin{equation}\label{eq:msm}
  \mathbf B \mathcal{A} \triangleq \{\mathbf{BD}:\mathbf D \in \mathcal{A}\}.
\end{equation}
The multiplication $\mathcal{A}\mathbf B$ can be similarly defined.

The \emph{projective space} $\Pj(\ffield^t)$ is the collection of
all subspaces of $\ffield^t$. If $V$ is a subspace of $U$, we write
$V\leq U$. Define
\begin{equation*}
  \Pj(m,\ffield^t) \triangleq \{V:V\leq \ffield^t, \dim(V)\leq m\}.
\end{equation*}
This paper involves some counting results in projective spaces, some
of which
have been discussed in previous works (see
\cite{andrews76,gabidulin85,cooper00,gadouleau08,koetter08j,gadouleau10}
and the reference therein).
A self-contained discussion can be found in \cite{yang09v}.

Let $\Fr(\ffield^{m\times r})$ be the set of full rank matrices
in $\ffield^{m\times r}$. Define
\begin{equation}\label{eq:111}
\cmat{m}{r} \triangleq \left\{\begin{array}{ll}
    (q^m-1)(q^m-q)\cdots(q^m-q^{r-1}) &
    0 < r \leq m \\ 1 & r=0 \end{array} \right.
\end{equation}
For $r\leq m$, it is well-known that $|\Fr(\ffield^{m\times r})| =
\cmat{m}{r}$.
Define
\begin{equation}\label{eq:speaker}
  \cmatt{m}{r} \triangleq \cmat{m}{r} q^{-mr}.
\end{equation}
Since the number of $m\times r$ matrices is $q^{mr}$,
$\cmatt{m}{r}$ is equal to the probability that a
randomly chosen $m\times r$ matrix is full rank.

The \emph{Grassmannian}
$\Gr(r,\ffield^t)$ is the set of all $r$-dimensional
subspaces of $\ffield^t$. Thus $\Pj(m,\ffield^t) = \bigcup_{r\leq
  m}\Gr(r,\ffield^t)$.
The \emph{Gaussian binomial}  \cite{andrews76}
\begin{equation*} %
  \gcos{m}{r} \triangleq \frac{\cmat{m}{r}}{\cmat{r}{r}}
\end{equation*}
is the number of $r$-dimensional subspaces of $\ffield^m$,
i.e., $|\Gr(r,\ffield^m)|=\gco{m}{r}$.
Let
\begin{equation*} %
 \cmat{m,n}{r} \triangleq \frac{\cmat{m}{r}\cmat{n}{r}}{\cmat{r}{r}},
\end{equation*}
which is the number of $m\times n$ matrices with rank $r$ \cite{gabidulin85}.
So we have
\begin{equation}\label{eq:ckss66}
  \sum_{r} \cmat{m,n}{r} = q^{mn}.
\end{equation}

The following counting result is
a special case of \cite[Lemma 2]{gadouleau10}.

\begin{lemma}\label{lemma:c1}
  Let $V$ be an $s$-dimensional subspace of $\ffield^t$. For any
  integer $r$ with $s\leq r\leq t$,
    $$|\{U \in \Gr(r,\ffield^t): V\leq U \}| = \gcos{t-s}{r-s} = \gcos{t}{r}\frac{\cmat{r}{s}}{\cmat{t}{s}}.$$
\end{lemma}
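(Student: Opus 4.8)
The plan is to prove the first equality, $|\{U\in\Gr(r,\ffield^T):V\subset U\}|=\gco{T-s}{r-s}$, by a quotient-space bijection, and then to deduce the second equality by a double-counting argument (alternatively, by a short manipulation of the products defining $\cmat{\cdot}{\cdot}$). For the first equality, I would fix the $s$-dimensional subspace $V\leq\ffield^T$ and pass to the quotient space $\ffield^T/V$, which has dimension $T-s$. The subspaces $U$ with $V\leq U\leq\ffield^T$ are in an inclusion-preserving bijection with the subspaces of $\ffield^T/V$, the map being $U\mapsto U/V$ with inverse $W\mapsto\pi^{-1}(W)$, where $\pi\colon\ffield^T\to\ffield^T/V$ is the canonical projection; under this bijection $\dim(U/V)=\dim(U)-s$. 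Hence $U\mapsto U/V$ restricts to a bijection between $\{U\in\Gr(r,\ffield^T):V\subset U\}$ and $\Gr(r-s,\ffield^T/V)$, and since $\ffield^T/V\cong\ffield^{T-s}$ the latter set has $\gco{T-s}{r-s}$ elements by the definition of the Gaussian binomial in \eqref{eq:gcos}. The extreme cases are consistent with this: when $r=s$ the only such $U$ is $V$ itself and $\gco{0}{0}=1$, and when $r=T$ the only such $U$ is $\ffield^T$ and $\gco{T-s}{T-s}=1$.

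For the second equality I would double-count the set $\mathcal P=\{(V',U'):V'\in\Gr(s,\ffield^T),\,U'\in\Gr(r,\ffield^T),\,V'\subset U'\}$. Choosing $V'$ first and then $U'\supseteq V'$ --- the number of the latter being the quantity just computed, which does not depend on the particular $V'$ --- gives $|\mathcal P|=\gco{T}{s}\cdot|\{U\in\Gr(r,\ffield^T):V\subset U\}|$. Choosing $U'$ first and then the $s$-dimensional subspace $V'\leq U'$ gives $|\mathcal P|=\gco{T}{r}\gco{r}{s}$. Equating and solving yields $|\{U:V\subset U\}|=\gco{T}{r}\gco{r}{s}/\gco{T}{s}$, and expanding the Gaussian binomials $\gco{r}{s}$ and $\gco{T}{s}$ via \eqref{eq:gcos} the common factor $\cmat{s}{s}$ cancels, leaving exactly $\gco{T}{r}\cmat{r}{s}/\cmat{T}{s}$. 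Equivalently, one can verify $\gco{T-s}{r-s}=\gco{T}{r}\cmat{r}{s}/\cmat{T}{s}$ directly from the factorization $\cmat{m}{r}=\cmat{m}{s}\,q^{s(r-s)}\,\cmat{m-s}{r-s}$, which follows by splitting the product in \eqref{eq:111} at index $s$ and pulling a factor $q^s$ out of each of the last $r-s$ terms, applied with $m=T$ and with $m=r$.

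I do not expect a genuine obstacle here: the only steps that need a little care are verifying that $U\mapsto U/V$ really is a bijection between the interval $[V,\ffield^T]$ of the subspace lattice and the full subspace lattice of $\ffield^T/V$ that shifts dimensions by $s$, and bookkeeping the boundary cases $r=s$ and $r=T$ together with the convention $\cmat{m}{0}=1$ so that the stated formulas remain valid throughout.
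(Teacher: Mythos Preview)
Your proof is correct. For the first equality, your quotient-space bijection $U\mapsto U/V$ is essentially the same argument as the paper's, which instead fixes a complement $\bar V$ of $V$ in $\ffield^T$ and uses the bijection $U\mapsto U\cap\bar V$; both are standard realizations of the lattice isomorphism $[V,\ffield^T]\cong[\{0\},\ffield^T/V]$ and land on the same count $\gco{T-s}{r-s}$. For the second equality the paper simply says it follows ``by reorganizing the formula,'' i.e.\ the direct product manipulation you list as your alternative, whereas your primary route is the double-counting of pairs $(V',U')$ with $V'\subset U'$. That double-count is a slightly different and arguably more transparent argument, since it explains combinatorially why the identity $\gco{T}{s}\gco{T-s}{r-s}=\gco{T}{r}\gco{r}{s}$ holds rather than just verifying it algebraically; either way the result is immediate.
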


\subsection{Probability Distribution over Matrices and Subspaces}

For a discrete random variable $X$, we use $p_X$ to denote its
probability mass function (PMF).  For two random variables $X$ and $Y$
defined on discrete alphabets $\mathcal{X}$ and $\mathcal{Y}$,
respectively, we write a transition probability (matrix) from
$\mathcal{X}$ to $\mathcal{Y}$ as $P_{Y|X}(\bY|\bX)$,
$\bX\in\mathcal{X}$ and $\bY\in \mathcal{Y}$.  
We say a transition matrix is \emph{deterministic} if all its entries
are either zero or one.
When it is clear from the context, we may omit the subscript of $p_X$ and $P_{Y|X}$ to simplify
  the notations.  Let $\entropy(X)$ be the entropy\footnote{The calligraphic $\mathcal{H}$ is used to denote entropy to make a distinction to the notion of the transfer matrix $H$.} of $X$ and
$\mutual(X;Y)$ be the mutual information between $X$ and $Y$.
We take logarithms to the base $2$.

For the sake of reference and comparison, we define three classes of conditionally uniform distributions that will be used in the paper. 

\begin{definition}[uniform-given-row-space distribution ($\alpha$-type
  distribution)] \label{def:uni1}
   A PMF $p$ over $\ffield^{m\times n}$ is
   \emph{uniform-given-row-space}
  if $p(\bX)=p(\bX')$ whenever $\lspan{\bX^\tr} =
  \lspan{\bX'^\tr}$.
  In other words, a random matrix $X\in \ffield^{m\times n}$ is uniform
  given row space if
  \begin{equation*}
    p_X(\bX) = \frac{p_{\lspan{X^\tr}}(\lspan{\bX^\tr})}{\cmat{m}{\rank(\bX)}}.
  \end{equation*}
\end{definition}

\begin{definition}[uniform-given-rank distribution] \label{def:uni2}
  A PMF $p$ over $\ffield^{m\times n}$ is \emph{uniform-given-rank} if
  $p(\bX) = p(\bX')$ whenever
  $\rank(\bX)=\rank(\bX')$. In other words, a random matrix $X\in \ffield^{m\times n}$ is 
  uniform-given-rank if
  \begin{equation*}
    p_X(\bX) = \frac{p_{\rank(X)}(\rank(\bX))}{\cmat{m,n}{\rank(\bX)}}.
  \end{equation*}
\end{definition}

\begin{definition}[uniform-given-dimension distribution]\label{def:ugd}
  A PMF $p$ over $\Pj(\ffield^T)$ is \emph{uniform-given-dimension} if
$p(V) =p(V')$ whenever $\dim(V)=\dim(V')$.
\end{definition}

A uniform-given-rank distribution is also uniform-given-row-space.
If $X$ has a uniform-given-row-space distribution,
$\lspan{X}$ has a uniform-given-dimension distribution.
Further define two classes of transition matrices with certain
symmetry properties as follows.

\begin{definition}[row-space-symmetric transition matrix]
  \label{def:rsu}
  A transition matrix $P(\cdot|\cdot):\ffield^{t\times
  m}\rightarrow \ffield^{t\times n}$ is said to be \emph{row-space-symmetric} if
  \begin{equation*}
    P(\bY|\bX) = P(\bY'|\bX')
  \end{equation*}
  whenever $\lspan{\bY}\leq \lspan{\bX}$,  $\lspan{\bY'}\leq
  \lspan{\bX'}$, $\lspan{\bX^\tr}=\lspan{\bX'^\tr}$ and $\lspan{\bY^\tr}=\lspan{\bY'^\tr}$.
  In other words, the transition probability $P(\bY|\bX)$, $\lspan{\bY}\leq \lspan{\bX}$, is determined by the row spaces of the input and output
matrices.
\end{definition}

\begin{definition}[rank-symmetric transition matrix]
  \label{def:rsu2}
  A transition matrix $P(\cdot|\cdot):\ffield^{t\times
  m}\rightarrow \ffield^{t\times n}$ is said to be \emph{rank-symmetric} if
  \begin{equation*}
    P(\bY|\bX) = P(\bY'|\bX')
  \end{equation*}
  whenever $\lspan{\bY}\leq \lspan{\bX}$,  $\lspan{\bY'}\leq
  \lspan{\bX'}$, $\rank(\bX)=\rank(\bX')$ and $\rank(\bY)=\rank(\bY')$.
  In other words, the transition probability $P(\bY|\bX)$, $\lspan{\bY}\leq \lspan{\bX}$, is determined by the ranks of the input and output
matrices.
\end{definition}

\section{Capacity of Linear Operator Channels}
\label{sec:mul}

A LOC defined in \eqref{eq:formu}, denoted by $\loc(H,T)$, is a
\emph{discrete memoryless channel} (DMC).
The dimensions of the transfer matrices discussed in this paper are
$M\times N$ unless otherwise specified.
Under the assumption that $H$ and $X$ are independent,
the transition probability $P_{Y|X}(\bY|\bX)$ is given by
\begin{equation*}
 P_{Y|X}(\bY|\bX)  = \Pr\{\bX H=\bY\}. %
\end{equation*}
The \emph{(Shannon) capacity} of
$\loc(H,T)$ is
\begin{equation*}%
 C = C(H,T)=\max_{p_X} \mutual(X;Y).
\end{equation*}
The input $X$, the output
$Y$, their row/column spaces and their ranks form
Markov chains shown in Fig.~\ref{fig:markov}.

 In this section,  we first introduce the essential technique
  of this paper---some
  symmetry properties of LOCs. We then investigate the input distributions that
  achieve the Shannon capacity, and give upper and lower bounds on the
  Shannon capacity.

\begin{figure}
  \centering
  \begin{tikzpicture}[scale=0.5]
    \node (X) at (-1,0) {$X$};
    \node (Y) at (1,0) {$Y$} edge[<-] (X);
    \node (Y1) at (3.6,1) {$\lspan{Y^\tr}$} edge[<-] (Y);
    \node (Y2) at (3.6,-1) {$\lspan{Y}$} edge[<-] (Y);
    \node (X1) at (-3.6,1) {$\lspan{X^\tr}$} edge[->] (X);
    \node (X2) at (-3.6,-1) {$\lspan{X}$} edge[->] (X);
    \node at (6.6,0) {$\rank(Y)$} edge[<-] (Y1) edge[<-]
    (Y2);
    \node at (-6.6,0) {$\rank(X)$} edge[->] (X1) edge[->] (X2);
  \end{tikzpicture}
  \caption{Random variables and Markov chains related to $\loc(H,T)$. All the random variables in a directed path form
a Markov chain.  For example, $\rank(X) \rightarrow \lspan{X}
\rightarrow X \rightarrow Y \rightarrow \lspan{Y} \rightarrow
\rank(Y)$ forms a Markov chain.}
  \label{fig:markov}
\end{figure}
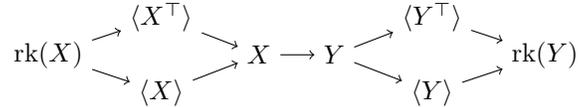

\subsection{Symmetry Properties}
\label{sec:symm}

The following lemma
demonstrates an intrinsic symmetry property of LOCs.
A matrix is said to have full column (row) rank if its rank is equal to
its number of columns (rows).

\begin{lemma}\label{the:symm}
  For $\loc(H,T)$, if $\bX = \mathbf{BD}$ and $\bY = \mathbf{BE}$
  where $\mathbf B$ has full column rank, then
  \begin{equation*}
    P_{Y|X}(\bY|\bX) = \Pr\{\bX H = \bY\} =
    \Pr\{\mathbf{D} H = \mathbf{E}\}.
  \end{equation*}
\end{lemma}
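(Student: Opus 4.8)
The plan is to exploit the fact that multiplication by a full-column-rank matrix $\mathbf B$ on the left is injective as a map $\ffield^{T\times N}\to\ffield^{T\times N}$ (equivalently, $\mathbf B\mathbf Z = \bzero$ forces $\mathbf Z = \bzero$ when $\mathbf B$ has full column rank). Starting from the definition $P_{Y|X}(\bY|\bX) = \Pr\{\bX H = \bY\}$, I would substitute $\bX = \mathbf{BD}$ and $\bY = \mathbf{BE}$ to obtain $\Pr\{\mathbf{BD}H = \mathbf{BE}\} = \Pr\{\mathbf B(\mathbf{D}H - \mathbf E) = \bzero\}$. The key observation is then that for a matrix $\mathbf B$ with full column rank, the event $\{\mathbf B\mathbf Z = \bzero\}$ coincides with the event $\{\mathbf Z = \bzero\}$ for any (random) matrix $\mathbf Z$ of the appropriate size; applying this with $\mathbf Z = \mathbf D H - \mathbf E$ gives $\Pr\{\mathbf{BD}H = \mathbf{BE}\} = \Pr\{\mathbf D H = \mathbf E\}$, which is the claim.

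The one point that needs a sentence of justification is why $\mathbf B\mathbf Z = \bzero \iff \mathbf Z = \bzero$ when $\mathbf B$ has full column rank: the forward direction is the nontrivial one, and it follows because full column rank means the columns of $\mathbf B$ are linearly independent, so the linear map $\mathbf v \mapsto \mathbf B\mathbf v$ has trivial kernel over $\ffield$; applying this column-by-column to $\mathbf Z$ shows $\mathbf B\mathbf Z = \bzero$ implies every column of $\mathbf Z$ is zero. The reverse direction is immediate. Equality of the two events then yields equality of their probabilities, and no assumption on the distribution of $H$ (nor independence of $X$ and $H$) is actually needed beyond what is implicit in writing $\Pr\{\bX H = \bY\}$ for a fixed matrix $\bX$.

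I do not anticipate a genuine obstacle here; the only thing to be careful about is keeping the dimensions straight. Here $\bX, \bY, \mathbf B$ all have $T$ rows, $\mathbf B$ is $T\times T'$ with full column rank (so $T'\leq T$), $\mathbf D$ is $T'\times M$, $\mathbf E$ is $T'\times N$, and $H$ is $M\times N$; then $\mathbf D H - \mathbf E$ is $T'\times N$ and $\mathbf B(\mathbf D H - \mathbf E)$ is $T\times N$, matching $\bX H - \bY$. With the shapes lined up, the argument is a two-line computation combined with the injectivity remark, so I would present it exactly as above.
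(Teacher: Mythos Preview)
Your proposal is correct and follows exactly the same approach as the paper: substitute $\bX=\mathbf{BD}$, $\bY=\mathbf{BE}$ and use that left multiplication by a full-column-rank matrix $\mathbf B$ is injective to conclude $\Pr\{\mathbf{BD}H=\mathbf{BE}\}=\Pr\{\mathbf D H=\mathbf E\}$. The paper's proof is in fact just this one line, without the extra justification of injectivity that you (reasonably) spell out.
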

\begin{IEEEproof}
  The lemma follows from $P_{Y|X}(\bY|\bX) = \Pr\{\mathbf{BD} H =
  \mathbf{BE}\}  = \Pr\{\mathbf{D} H = \mathbf{E}\}$, where the
  last equality follows because $\mathbf B$ has full column rank.
\end{IEEEproof}

Recall that a DMC is defined to be \emph{symmetric} \cite{gallager} if
the set of outputs can be partitioned into subsets in such a way that
for each subset the matrix of transition probabilities (using inputs
as rows and outputs of the subset as columns) has the property that
each row is a permutation of each other row and each column (if more
than one) is a permutation of each other column.  The transition
matrix of a LOC satisfies properties similar to these of a symmetric
channel, but in general, a LOC is not a symmetric channel.

\begin{lemma}\label{prop:1}
  The transition matrix of $\loc(H,T)$ satisfies the
  following properties:
  \begin{enumerate}
  \item For $\bX_1,\bX_2\in \ffield^{T\times M}$
  with $\lspan{\bX_1^\tr} = \lspan{\bX_2^\tr}$ and $V\leq \ffield^N$, the vector
  $(P_{Y|X}(\bY|\bX_1):\bY\in \ffield^{T\times N}, \lspan{\bY^\tr}=V)$ is a permutation
  of the vector $(P_{Y|X}(\bY|\bX_2):\bY\in \ffield^{T\times N}, \lspan{\bY^\tr}=V)$;
  \item For $\bY_1,\bY_2\in \ffield^{T\times N}$
  with $\lspan{\bY_1^\tr} = \lspan{\bY_2^\tr}$ and $U\leq \ffield^M$, the vector
  $(P_{Y|X}(\bY_1|\bX):\bX\in \ffield^{T\times M}, \lspan{\bX^\tr}=U)$ is a permutation
  of the vector $(P_{Y|X}(\bY_2|\bX):\bX\in \ffield^{T\times M}, \lspan{\bX^\tr}=U)$.
  \end{enumerate}
\end{lemma}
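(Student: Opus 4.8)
The plan is to deduce both parts from Lemma~\ref{the:symm} together with one elementary fact: two matrices over $\ff$ have the same row space if and only if one is carried into the other by left multiplication by an invertible matrix. The forward implication is trivial, and for the converse I would use a rank factorization: if $\lspan{\bX_1^\tr}=\lspan{\bX_2^\tr}$ has dimension $r$, fix an $r\times M$ matrix $\mathbf D$ whose rows form a basis of this space, write $\bX_i=\mathbf B_i\mathbf D$ with $\mathbf B_i\in\ff^{T\times r}$ of full column rank, extend $\mathbf B_1$ and $\mathbf B_2$ to invertible $T\times T$ matrices, and take the invertible $\Phi$ sending the first extension to the second (so $\Phi\mathbf B_1=\mathbf B_2$, hence $\Phi\bX_1=\bX_2$); I would just state this rather than belabor it. I would also record that left multiplication by a fixed invertible $\Phi\in\ff^{T\times T}$ preserves the row space of any matrix, because $\lspan{(\Phi\bZ)^\tr}=\lspan{\bZ^\tr\Phi^\tr}=\lspan{\bZ^\tr}$, and therefore bijects $\{\bY\in\ff^{T\times N}:\lspan{\bY^\tr}=V\}$ onto itself and $\{\bX\in\ff^{T\times M}:\lspan{\bX^\tr}=U\}$ onto itself.

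For part~1, I would pick an invertible $\Phi$ with $\bX_2=\Phi\bX_1$ and apply Lemma~\ref{the:symm} with $\mathbf B=\Phi$ (full column rank since invertible) to the factorizations $\bX_2=\Phi\bX_1$ and $\bY=\Phi(\Phi^{-1}\bY)$, obtaining
\begin{equation*}
  P_{Y|X}(\bY|\bX_2)=\Pr\{\bX_1 H=\Phi^{-1}\bY\}=P_{Y|X}(\Phi^{-1}\bY\,|\,\bX_1).
\end{equation*}
Since $\bY\mapsto\Phi^{-1}\bY$ is a bijection of $\{\bY:\lspan{\bY^\tr}=V\}$ onto itself, the vector $(P_{Y|X}(\bY|\bX_2):\lspan{\bY^\tr}=V)$ is just the vector $(P_{Y|X}(\bY|\bX_1):\lspan{\bY^\tr}=V)$ reindexed, hence a permutation of it.

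Part~2 is the mirror image: choosing an invertible $\Psi$ with $\bY_2=\Psi\bY_1$ and applying Lemma~\ref{the:symm} with $\mathbf B=\Psi$ to $\bX=\Psi(\Psi^{-1}\bX)$ and $\bY_2=\Psi\bY_1$ gives $P_{Y|X}(\bY_2|\bX)=\Pr\{(\Psi^{-1}\bX)H=\bY_1\}=P_{Y|X}(\bY_1\,|\,\Psi^{-1}\bX)$, and $\bX\mapsto\Psi^{-1}\bX$ bijects $\{\bX:\lspan{\bX^\tr}=U\}$ onto itself. I do not expect any genuine obstacle: the statement is essentially the full-column-rank symmetry of Lemma~\ref{the:symm} transported through these two reindexing bijections. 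The only things I would be careful to spell out are the existence of the invertible $\Phi$ (resp.\ $\Psi$) from the equality of row spaces and the fact that it preserves the subspace cutting out the index set — and to note that $V$ and $U$ play no role beyond delimiting those index sets, so the same argument applies verbatim for every choice of them.
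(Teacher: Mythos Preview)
Your proposal is correct and is essentially identical to the paper's proof: both choose an invertible $T\times T$ matrix sending $\bX_1$ to $\bX_2$ (the paper calls it $\mathbf T$, you call it $\Phi$), invoke Lemma~\ref{the:symm} to get $P_{Y|X}(\bY|\bX_1)=P_{Y|X}(\mathbf T\bY|\bX_2)$, and observe that $\bY\mapsto\mathbf T\bY$ bijects $\{\bY:\lspan{\bY^\tr}=V\}$ onto itself. You spell out the existence of the invertible matrix via rank factorization a bit more carefully than the paper, which simply asserts it; otherwise the arguments coincide, and part~2 is handled symmetrically in both.
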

\begin{IEEEproof}%
  Let $\phi(V) = \{\bY\in \ffield^{T\times N}, \lspan{\bY^\tr}=V\}$.
  To prove 1), we show that there exists a bijection $f:
  \phi(V)\rightarrow \phi(V)$ such that $\Pr\{\bX_1 H = \bY \} =
  \Pr\{\bX_2 H = f(\bY)\}$. Since $\lspan{\bX_1^\tr} =
  \lspan{\bX_2^\tr}$, there exists a full rank matrix $\mathbf T$ such
  that $\bX_2 = \mathbf T \bX_1$. Define $f:\phi(V)\rightarrow
  \phi(V)$ as $f(\bY) = \mathbf{T}\bY$. Since $\mathbf T$ is a full
  rank square matrix, $f$ is a bijection. The claim in 1) is verified
  by $\Pr\{\bX_2 H = f(\bY)\} = \Pr\{\bX_2 H = \mathbf{T}\bY\} =
  \Pr\{\mathbf T^{-1} \bX_2 H = \bY\} = \Pr\{\bX_1 H = \bY\}$, where
  the second equality follows from Lemma~\ref{the:symm}.

  The proof of 2) is similar and hence omitted.
\end{IEEEproof}

For the input matrices
with different row spaces, the rows of the transition matrix 
are usually not a permutation of each other.
The following result implied by Lemma~\ref{prop:1} will be used in
this paper.

\begin{lemma}\label{lemma:cond}
  For a LOC, if $\lspan{\bX^\tr} = \lspan{\bX'^\tr}$, then
  \begin{equation*}
    P_{\lspan{Y^\tr}|X}(V|\bX) = P_{\lspan{Y^\tr}|X}(V|\bX')
  \end{equation*}
  and
  \begin{equation*}
    P_{\rank(Y)|X}(s|\bX) = P_{\rank(Y)|X}(s|\bX').
  \end{equation*}
\end{lemma}
\begin{IEEEproof}
  Since $$P_{\lspan{Y^\tr}|X}(V|\bX)  = \sum_{\bY:\lspan{\bY^\tr}=V}
  P_{Y|X}(\bY|\bX),$$ the first equality follows from 1) in
  Lemma~\ref{prop:1}. The second equality follows from the first one.
\end{IEEEproof}

\subsection{Uniform-given-row-space Input Distributions}
\label{sec:alpha}

The intrinsic symmetry property of LOCs implies that the capacity-achieving
input distributions should have certain symmetry property, which is
characterized in the following theorem. Recall the definition of
uniform-given-row-space distribution in Definition~\ref{def:uni1}.

\begin{theorem}\label{the:diq}
  There exists a uniform-given-row-space input distribution that
  maximizes $\mutual(X;Y)$ for any LOC.
\end{theorem}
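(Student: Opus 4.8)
The plan is to symmetrize an optimal input distribution over the left action of the invertible square matrices. Let $\mathcal{G}$ be the group of invertible $T\times T$ matrices over $\ff$, acting on $\ff^{T\times M}$ by $\bX\mapsto\mathbf{T}\bX$. The key preliminary observation is that this action, paired with the same action on the output space $\ff^{T\times N}$, is a symmetry of $\loc(H,T)$: for every $\mathbf{T}\in\mathcal{G}$ and all $\bX,\bY$, $P_{Y|X}(\mathbf{T}\bY\mid\mathbf{T}\bX)=\Pr\{\mathbf{T}\bX H=\mathbf{T}\bY\}=\Pr\{\bX H=\bY\}=P_{Y|X}(\bY\mid\bX)$, where the middle equality is Lemma~\ref{the:symm} with $\mathbf{B}=\mathbf{T}$ (which has full column rank), $\mathbf{D}=\bX$, $\mathbf{E}=\bY$. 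Equivalently, if $X$ has distribution $p_X$ and $X'\triangleq\mathbf{T}X$, then the channel output for $X'$ is $X'H=\mathbf{T}XH=\mathbf{T}Y$, so the joint law of $(X',Y')$ is the image of that of $(X,Y)$ under $(\bX,\bY)\mapsto(\mathbf{T}\bX,\mathbf{T}\bY)$, a bijection acting separately on each coordinate; hence $\mutual(X';Y')=\mutual(X;Y)$.

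Next I would fix a capacity-achieving input distribution $p_X^{*}$ (one exists since $\mutual(X;Y)$ is continuous in $p_X$ on the compact simplex over the finite set $\ff^{T\times M}$), and for each $\mathbf{T}\in\mathcal{G}$ let $p_X^{\mathbf T}$ denote the law of $\mathbf{T}X^{*}$, that is $p_X^{\mathbf T}(\bX)=p_X^{*}(\mathbf{T}^{-1}\bX)$; by the previous paragraph each $p_X^{\mathbf T}$ is also capacity-achieving. Put $\bar p_X\triangleq\frac{1}{|\mathcal{G}|}\sum_{\mathbf{T}\in\mathcal{G}}p_X^{\mathbf T}$. Because $p_X\mapsto\mutual(X;Y)$ is concave for a fixed discrete memoryless channel \cite{gallager}, the input $\bar p_X$ achieves $\mutual(\bar X;\bar Y)\ge\frac{1}{|\mathcal{G}|}\sum_{\mathbf{T}\in\mathcal{G}}\mutual(X^{\mathbf T};Y^{\mathbf T})=C$, and since $C$ is the maximum, $\mutual(\bar X;\bar Y)=C$. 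It then remains only to check that $\bar p_X$ is $\alpha$-type.

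By construction $\bar p_X$ is $\mathcal{G}$-invariant: reindexing the defining sum by $\mathbf{T}\mapsto\mathbf{S}^{-1}\mathbf{T}$ shows $\bar p_X(\mathbf{S}\bX)=\bar p_X(\bX)$ for all $\mathbf{S}\in\mathcal{G}$, so $\bar p_X$ is constant on each $\mathcal{G}$-orbit of $\ff^{T\times M}$. The theorem therefore reduces to the elementary claim that the $\mathcal{G}$-orbits are exactly the fibres of the row-space map, i.e.\ $\lspan{\bX^\tr}=\lspan{\bX'^\tr}$ implies $\bX'=\mathbf{S}\bX$ for some $\mathbf{S}\in\mathcal{G}$ (the converse being immediate since left multiplication by an invertible matrix does not change the row space). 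To prove the claim I would set $r=\rank(\bX)=\rank(\bX')$, fix a full-row-rank $r\times M$ matrix $\mathbf{R}$ spanning the common row space, and write $\bX=\mathbf{A}\mathbf{R}$, $\bX'=\mathbf{A}'\mathbf{R}$ with $\mathbf{A},\mathbf{A}'\in\ff^{T\times r}$ of full column rank; since $\mathbf{R}$ has full row rank it is right-cancellable, so it suffices to find $\mathbf{S}\in\mathcal{G}$ with $\mathbf{S}\mathbf{A}=\mathbf{A}'$. Completing the independent columns of $\mathbf{A}$ and of $\mathbf{A}'$ to bases of $\ff^{T}$ yields invertible $T\times T$ matrices $[\mathbf{A}\ \mathbf{A}_0]$ and $[\mathbf{A}'\ \mathbf{A}'_0]$, and $\mathbf{S}\triangleq[\mathbf{A}'\ \mathbf{A}'_0]\,[\mathbf{A}\ \mathbf{A}_0]^{-1}$ is invertible and satisfies $\mathbf{S}\mathbf{A}=\mathbf{A}'$. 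Hence $\bar p_X$ is $\alpha$-type and maximizes $\mutual(X;Y)$.

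The only non-routine ingredients here are the concavity of mutual information in the input distribution and the orbit/row-space identification. I expect the latter to be the step most in need of care — in particular, keeping straight that the relevant invariant is the \emph{row} space, because the symmetry group acts on the \emph{left} — although, as sketched above, the argument itself is short.
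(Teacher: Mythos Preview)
Your proof is correct and follows essentially the same approach as the paper: symmetrize a capacity-achieving input over the group $\Fr(\ffield^{T\times T})$ acting on the left, use Lemma~\ref{the:symm} to see that each transformed input is also capacity-achieving, and invoke concavity of $p_X\mapsto\mutual(X;Y)$ to conclude the averaged distribution is optimal. Your explicit verification that the $\mathcal{G}$-orbits coincide with the row-space fibres is a detail the paper leaves implicit (it simply asserts that $p^*$ is $\alpha$-type), so your write-up is in fact slightly more complete on that point.
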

\begin{IEEEproof}%
  Let $p$ be an optimal input distribution for $\loc(H,T)$.
  For $\Phi\in \Fr(\ffield^{T\times T})$, define $p^{\Phi}$ as
  $p^{\Phi}(\bX) = p(\Phi\bX)$.
  First $p^{\Phi}$ is a PMF because $0\leq p^{\Phi}(\bX) =p(\Phi\bX) \leq 1$
  and $\sum_{\bX\in \ffield^{T\times M}} p^{\Phi}(\bX) = 1$.

  We show that
  $p^{\Phi}$ also achieves the capacity of the LOC. For the
  simplicity of the notations, we write $p' = p^{\Phi}$.
 Let $p_Y$ and $p_Y'$ be the PMF of $Y$ when the input distributions
 are $p$ and $p'$, respectively. We have
 \begin{IEEEeqnarray*}{rCl}
  p'_Y(\bY)
	& = &
	\sum_{\bX\in \ffield^{T\times M}} p'(\bX) P_{Y|X}(\bY|\bX) \\
	& = & \IEEEyesnumber \label{eq:osa-1}
	\sum_{\bX\in \ffield^{T\times M}}p(\Phi\bX) P_{Y|X}(\Phi\bY|\Phi\bX)  \\
	& = & \IEEEyesnumber \label{eq:osa-2}
	\sum_{\bX'\in \ffield^{T\times M}}p(\bX') P_{Y|X}(\Phi\bY|\bX') \\
	& = &
       p_Y(\Phi\bY),
 \end{IEEEeqnarray*}
 where \eqref{eq:osa-1} follows from Lemma~\ref{the:symm}
 and $p'(\bX)=p(\Phi\bX)$,
 and \eqref{eq:osa-2} follows by letting $\bX'=\Phi\bX$ and noting
 $\Phi\ffield^{T\times M} = \ffield^{T\times M}$. Therefore,
 \begin{IEEEeqnarray*}{rCl}
   \mutual(X;Y)|_{p'}
	& = &
	\sum_{\bX\in \ffield^{T\times M}}p'(\bX) \sum_{\bY\in \ffield^{T\times N}} P(\bY|\bX)
  \log  \frac{P(\bY|\bX)}{p'_Y(\bY)}  \\
	& = & %
	\sum_{\bX\in \ffield^{T\times M}}p(\Phi\bX) \sum_{\bY\in \ffield^{T\times N}} P(\Phi\bY|\Phi\bX) \log  \frac{P(\Phi\bY|\Phi\bX)}{p(\Phi\bY)} \\
	& = &
	\sum_{\bX'\in \ffield^{T\times M}}p(\bX') \sum_{\bY'\in \ffield^{T\times N}} P(\bY'|\bX') \log
        \frac{P(\bY'|\bX')}{p(\bY')} \\
	& = &
       \mutual(X;Y)|_{p},
 \end{IEEEeqnarray*}
 where the second equality follows from $p'_Y(\bY)=p_Y(\Phi\bY)$ and
 $P(\bY|\bX) = P(\Phi \bY|\Phi \bX)$ (cf. Lemma~\ref{the:symm}).

  Define $p^*$ as $$p^*(\bX) =
 \frac{1}{|\Fr(\ffield^{T\times T})|}\sum_{\Phi\in \Fr(\ffield^{T\times T})}
 p^{\Phi}(\bX).$$
 Since mutual information is a concave function of the input
 distribution \cite{gallager},
 \begin{IEEEeqnarray*}{rCl}
    \mutual(X;Y)|_{p^*} & \geq & \frac{1}{|\Fr(\ffield^{T\times T})|}\sum_{\Phi\in \Fr(\ffield^{T\times T})}
    \mutual(X;Y)|_{p^{\Phi}} \\
    & = & C(H,T).
 \end{IEEEeqnarray*}
Thus, $p^*$ is also an optimal input distribution for the channel.
The proof is completed by noting that $p^*$ is uniform-given-row-space.
\end{IEEEproof}

Theorem~\ref{the:diq} reveals that 
uniform-given-row-space input distributions can match the intrinsic
symmetry of LOCs.
We will show more 
applications of uniform-given-row-space input distributions in this paper.

In the remaining part of this subsection, we discuss how the
calculation of the transition matrix and the channel capacity can be
simplified by 
the symmetry properties and uniform-given-row-space input distributions. 
To compute the channel
capacity of $\loc(H,T)$, the first step is to
compute the matrix of transition probabilities using the distribution
of $H$. 
A straightforward computation of the transition matrix from the
distribution of $H$ requires the calculation of
  $q^{T(M+N)}$ components of the transition matrix. 
But using the symmetry properties, this number can be reduced to
\begin{equation*}
  \sum_{k=0}^{\min\{T,M\}} \gcos{M}{k} q^{kN}
  < \left\{\begin{array}{ll}
      cq^{MN} & M\leq \min\{T,N\} \\
      c'q^{L(M+N-L)} & \text{otherwise},
    \end{array} \right.
\end{equation*}
where $L=\min\{T,(M+N)/2\}$, $c$ and $c'$ are constants (ref. Appendix~\ref{sec:symmopt}).

The input distribution of a LOC has $q^{TM}$ probability masses. To
find an optimal input distribution, a straightforward approach needs
to determine $q^{TM}-1$ out of them. 
Theorem~\ref{the:diq} enable us to focus on uniform-given-row-space
input distributions, which is determined by
 a PMF over $\Pj(\min\{M,T\},\ffield^{M})$.
Thus the number of probability masses to determine can be
reduced to
\begin{equation*}
  \sum_{k=0}^{\min\{M,T\}} \gcos{M}{k} <
  \left\{
      \begin{array}{ll}
        \Theta_1q^{M^2/4} & \text{for}\ T\geq M/2 \\
        \Theta_2q^{T(M-T)} & \text{otherwise},
      \end{array}
   \right.
\end{equation*}
where $\Theta_1$ and $\Theta_2$ are constants (ref. Appendix~\ref{sec:symmopt}).
Note that those computations are still complicated for relatively
large $M$ and $T$.

\subsection{Upper and Lower Bounds on $C$}
\label{sec:boundsss}

We derive bounds on
$\mutual(X;Y)$ with the addition of two terms: one corresponds to the intrinsic
symmetry and another one is the achievable rate of the
channel given by $P_{\lspan{Y^\tr}|\lspan{X^\tr}}$.
Note that in Lemma~\ref{prop:1}, the symmetry property only holds for
input (output) matrices sharing the same row space.  Roughly, the
transition matrix $P_{\lspan{Y^\tr}|\lspan{X^\tr}}$ captures some
property of a LOC that is opposite to the intrinsic symmetry.

The transition matrix
$P_{\lspan{Y^\tr}|\lspan{X^\tr}}$ is solely determined by $p_H$. 
By Lemma~\ref{lemma:cond}, $P_{\lspan{Y^\tr}|\lspan{X^\tr}}(V|U) =
P_{\lspan{Y^\tr}|X}(V|\bX)$ for any $\bX$ with $\lspan{\bX^\tr}=U$. 
Let
\begin{equation}\label{eq:ratej}
    \locrate(\rank(X);\rank(Y)) \triangleq \sum_{s,r}
  p_{\rank(X)\rank(Y)}(r,s) \log\frac{\cmat{T}{s}}{\cmat{r}{s}}.
\end{equation}
Note that $\locrate(\rank(X);\rank(Y))$ is always nonnegative (cf. the
definition of $\cmat{r}{s}$ in \eqref{eq:111}), and
$p_{\rank(X)\rank(Y)}(r,s)$ can be solely derived from
$p_{\lspan{X^\tr}}$ and $P_{\lspan{Y^\tr}|\lspan{X^\tr}}$  as
\begin{equation*}%
  p_{\rank(X)\rank(Y)}(r,s)
  =
  \sum_{U\in \Gr(r,\ffield^M)} P_{\rank(Y)|\lspan{X^\tr}}(s|U) p_{\lspan{X^\tr}}(U),
\end{equation*}
where $P_{\rank(Y)|\lspan{X^\tr}}(s|U) = \sum_{V:\dim(V)=s}P_{\lspan{Y^\tr}|\lspan{X^\tr}}(V|U)$.

\begin{theorem}\label{the:89s}
Consider $\loc(H,T)$ with input $X$ and output $Y$.
  For a uniform-given-row-space input distribution,
  \begin{equation}\label{eq:lower11}
    \mutual(X;Y) \geq \locrate(\rank(X);\rank(Y))  + \mutual(\lspan{X^{\tr}}; \lspan{Y^{\tr}})
  \end{equation}
  and
  \begin{IEEEeqnarray*}{rCl}
    \mutual(X;Y) & \leq & \locrate(\rank(X);\rank(Y))  +
    \mutual(\lspan{X^{\tr}}; \lspan{Y^{\tr}}) \\
    & & \IEEEyesnumber \label{eq:upper11}
    + \sum_{s,r}
  p_{\rank(X)\rank(Y)}(r,s) \log {\cmat{r}{s}},
  \end{IEEEeqnarray*}
  where the equality in \eqref{eq:lower11} holds when the LOC has a
  row-space-symmetric transition matrix.
\end{theorem}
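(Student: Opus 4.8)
The plan is to expand $\mutual(X;Y)$ along the Markov chain $X \to Y$ by conditioning on the ranks and the row spaces, and to compare each piece against the corresponding term in \eqref{eq:lower11} and \eqref{eq:upper11}. Since the input is $\alpha$-type, $X$ is uniform given $\lspan{X^\tr}$, and by Lemma~\ref{lemma:cond} the conditional distributions $P_{\lspan{Y^\tr}|X}$ and $P_{\rank(Y)|X}$ depend on $\bX$ only through $\lspan{\bX^\tr}$. The natural decomposition I would use is
\begin{equation*}
  \mutual(X;Y) = \mutual(X;\lspan{Y^\tr}) + \mutual(X;Y \mid \lspan{Y^\tr}),
\end{equation*}
and then further split $\mutual(X;\lspan{Y^\tr})$ using the chain $\rank(X) \to \lspan{X^\tr} \to X$ and the fact (Lemma~\ref{lemma:cond}) that $\lspan{Y^\tr}$ is conditionally independent of $X$ given $\lspan{X^\tr}$, so that $\mutual(X;\lspan{Y^\tr}) = \mutual(\lspan{X^\tr};\lspan{Y^\tr})$. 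That already produces the $\mutual(\lspan{X^\tr};\lspan{Y^\tr})$ term on the right-hand side of both bounds; what remains is to bound $\mutual(X;Y\mid\lspan{Y^\tr})$ from below by $\locrate(\rank(X);\rank(Y))$ and from above by $\locrate(\rank(X);\rank(Y)) + \sum_{s,r} p_{\rank(X)\rank(Y)}(r,s)\log\cmat{r}{s}$.

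For the remaining term, condition further on $\lspan{X^\tr}$ (a function of $X$) and on $\rank(Y)$ (a function of $\lspan{Y^\tr}$), so that $\mutual(X;Y\mid\lspan{Y^\tr}) = \mutual(X;Y\mid\lspan{Y^\tr},\lspan{X^\tr})$. Given $\lspan{\bX^\tr}=U$ with $\dim U = r$ and $\lspan{\bY^\tr}=V$ with $\dim V = s$, the set of possible $\bY$ with $\lspan{\bY}\subset\lspan{\bX}$ and that row space has exactly $\cmat{r}{s}$ elements (this count appears already in the excerpt's characterization of row-space symmetry), so $\entropy(Y\mid \lspan{Y^\tr}=V,\lspan{X^\tr}=U) \le \log\cmat{r}{s}$, giving the upper bound; conversely, the lower bound on $\mutual(X;Y\mid\lspan{Y^\tr},\lspan{X^\tr})$ follows from $\mutual(X;Y\mid\cdots) = \entropy(Y\mid\lspan{Y^\tr},\lspan{X^\tr}) - \entropy(Y\mid X,\lspan{Y^\tr})$ together with the observation that, given $\bX$ of rank $r$ and $\rank(Y)=s$, the conditional entropy $\entropy(Y\mid X=\bX,\rank(Y)=s)$ is at most $\log\cmat{T}{s}$ minus... — more directly, I would show $\entropy(Y\mid\lspan{Y^\tr},\lspan{X^\tr}) \ge \locrate(\rank(X);\rank(Y)) + \entropy(Y\mid X,\lspan{Y^\tr})$ by comparing, for each $(U,V)$, the uniform spread of $Y$ over its $\cmat{r}{s}$-element support against the support of $Y$ given a specific $\bX$, which has size $\cmat{T}{s}/\cmat{?}{?}$-type counts; the key identity is $\cmat{T}{s}/\cmat{r}{s}$, exactly the argument of the logarithm in \eqref{eq:ratej}. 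Averaging $\log(\cmat{T}{s}/\cmat{r}{s})$ against $p_{\rank(X)\rank(Y)}(r,s)$ yields $\locrate(\rank(X);\rank(Y))$.

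For the equality case, when the LOC is row-space symmetric the characterization quoted in the excerpt says $P_{Y|X}(\bY|\bX) = \cmat{\rank(\bX)}{\rank(\bY)}^{-1} P_{\lspan{Y^\tr}|\lspan{X^\tr}}(\lspan{\bY^\tr}|\lspan{\bX^\tr})$ for $\lspan{\bY}\subset\lspan{\bX}$; thus, conditioned on $X=\bX$ and $\lspan{Y^\tr}=V$, the output $Y$ is exactly uniform over its $\cmat{r}{s}$-element support, so $\entropy(Y\mid X,\lspan{Y^\tr}) = \sum_{s,r} p_{\rank(X)\rank(Y)}(r,s)\log\cmat{r}{s}$ and similarly $\entropy(Y\mid\lspan{Y^\tr},\lspan{X^\tr}) = \sum_{s,r} p_{\rank(X)\rank(Y)}(r,s)\log\cmat{T}{s}$; subtracting gives $\locrate(\rank(X);\rank(Y))$ exactly, with no slack, establishing \eqref{eq:lower11} with equality. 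I expect the main obstacle to be bookkeeping: carefully justifying which variables are deterministic functions of which (using Fig.~\ref{fig:markov}), getting the conditional-independence reductions right so that the cross-terms collapse to $\mutual(\lspan{X^\tr};\lspan{Y^\tr})$, and pinning down the combinatorial count $\cmat{r}{s}$ for $\{\bY : \lspan{\bY}\subset\lspan{\bX},\ \lspan{\bY^\tr}=V\}$ uniformly in $\bX$ of rank $r$ — everything else is an entropy-chain-rule manipulation plus the inequality $\entropy(\cdot) \le \log|\mathrm{support}|$.
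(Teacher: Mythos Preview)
Your chain-rule decomposition $\mutual(X;Y)=\mutual(\lspan{X^\tr};\lspan{Y^\tr})+\mutual(X;Y\mid\lspan{Y^\tr})$ is a genuinely different route from the paper's. The paper introduces an auxiliary ``symmetrized'' channel $Y^*$ with $P_{Y^*|X}(\bY|\bX)=\cmat{\rank(\bX)}{\rank(\bY)}^{-1}P_{\lspan{Y^\tr}|\lspan{X^\tr}}(\lspan{\bY^\tr}|\lspan{\bX^\tr})$, proves $\mutual(X;Y)\ge\mutual(X;Y^*)$ by a fairly long convexity-and-averaging argument (Claim~1, averaging the channel over $\Fr(\ffield^{r\times r})$), and computes $\mutual(X;Y^*)$ exactly. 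Your approach, once completed, avoids that auxiliary construction and reduces everything to two support-size entropy bounds, which is arguably cleaner.

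However, there is a genuine gap in your lower bound. You need
\[
\mutual(X;Y\mid\lspan{Y^\tr})=\entropy(Y\mid\lspan{Y^\tr})-\entropy(Y\mid X,\lspan{Y^\tr})\ge \locrate(\rank(X);\rank(Y)),
\]
and the support count gives only $\entropy(Y\mid X,\lspan{Y^\tr})\le\sum_{r,s}p_{\rank(X)\rank(Y)}(r,s)\log\cmat{r}{s}$. To close the gap you must also have $\entropy(Y\mid\lspan{Y^\tr})=\sum_s p_{\rank(Y)}(s)\log\cmat{T}{s}$ with \emph{equality}, i.e.\ that for $\alpha$-type input the output $Y$ is itself uniform given its row space. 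This is true (for $\bY'=\Phi\bY$ with $\Phi$ full rank, Lemma~\ref{the:symm} plus $p_X(\bX)=p_X(\Phi^{-1}\bX)$ give $p_Y(\bY')=p_Y(\bY)$), but you neither state it nor indicate why it holds; without it the inequality goes the wrong way. Relatedly, your second paragraph conflates conditioning on $\lspan{X^\tr}=U$ with conditioning on a specific $X=\bX$: the count $\cmat{r}{s}$ is the support size of $Y$ given $X=\bX$ and $\lspan{Y^\tr}=V$, whereas given only $\lspan{X^\tr}=U$ and $\lspan{Y^\tr}=V$ the support has $\cmat{T}{s}$ elements, so the claimed bound $\entropy(Y\mid\lspan{Y^\tr}=V,\lspan{X^\tr}=U)\le\log\cmat{r}{s}$ is false. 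Once the ``$Y$ is $\alpha$-type'' lemma is in place, the detour through $\lspan{X^\tr}$ is unnecessary and both bounds follow immediately from $\entropy(Y\mid\lspan{Y^\tr})=\sum_s p_{\rank(Y)}(s)\log\cmat{T}{s}$ together with $0\le\entropy(Y\mid X,\lspan{Y^\tr})\le\sum_{r,s}p_{\rank(X)\rank(Y)}(r,s)\log\cmat{r}{s}$.
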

\begin{IEEEproof}
  Fix a uniform-given-row-space input distribution $p_X$.
  Let $Y^*$ be a random matrix over $\ffield^{T\times N}$ with transition probability
  \begin{equation*}
  P_{Y^*|X}(\bY|\bX) = \left \{
  \begin{array}{ll}
    \frac{P_{\lspan{Y^\tr}|\lspan{X^\tr}}(\lspan{\bY^\tr}|\lspan{\bX^\tr})}{\cmat{\rank(\bX)}{\rank(\bY)}}
    & \lspan{\bY}\leq \lspan{\bX}, \\
    0 & \text{otherwise}.
  \end{array}
  \right.
  \end{equation*}
  Note that $P_{Y^*|X}$ is row-space symmetric.

  The proofs of the following  claims are given in Appendix~\ref{sec:pf-claims}.

  \begin{claim}\label{claim:1}
    For a uniform-given-row-space  input distribution $p_X$,
    $\mutual(X;Y) \geq \mutual(X;Y^*)$, with equality when $P_{Y^*|X}=P_{Y|X}$.
  \end{claim}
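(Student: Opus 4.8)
The plan is to prove Claim~\ref{claim:1} by an auxiliary-channel (data-processing / degradation) argument. The key observation is that $Y^*$ is obtained from $Y$ by a channel that only keeps the row space of $Y$ and then re-randomizes uniformly among output matrices sharing that row space and lying inside $\lspan{\bX}$. First I would make this precise: I claim that $X \to Y \to Y^*$ forms a Markov chain, so that $\mutual(X;Y) \ge \mutual(X;Y^*)$ by the data-processing inequality. To verify the Markov structure, I need to exhibit a transition kernel $P_{Y^*|Y}$, not depending on $X$, such that $\sum_{\bY} P_{Y|X}(\bY|\bX) P_{Y^*|Y}(\bY'|\bY) = P_{Y^*|X}(\bY'|\bX)$ for all $\bX,\bY'$. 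The natural candidate is: given $\bY$, output a uniformly random $\bY'$ with $\lspan{\bY'^\tr} = \lspan{\bY^\tr}$ and $\lspan{\bY'} \subseteq \lspan{\bY}$ — but one must be slightly careful, since $P_{Y^*|X}$ insists $\lspan{\bY'}\subseteq\lspan{\bX}$, which is automatic because $\lspan{Y}\subseteq\lspan{X}$ with probability one for a LOC (as noted in the Preliminaries). So on the support of $P_{Y|X}(\cdot|\bX)$ we always have $\lspan{\bY}\subseteq\lspan{\bX}$, hence any $\bY'$ with $\lspan{\bY'}\subseteq\lspan{\bY}$ also satisfies $\lspan{\bY'}\subseteq\lspan{\bX}$.

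Next I would carry out the kernel computation. Fix $\bX$ with $\rank(\bX)=r$ and fix a target $\bY'$ with $\lspan{\bY'}\subseteq\lspan{\bX}$ and $\rank(\bY')=s$; write $V=\lspan{\bY'^\tr}$. Then
\begin{equation*}
  \sum_{\bY} P_{Y|X}(\bY|\bX)\,P_{Y^*|Y}(\bY'|\bY)
  = \sum_{\bY:\,\lspan{\bY^\tr}=V,\ \lspan{\bY}\subseteq\lspan{\bX}} \frac{P_{Y|X}(\bY|\bX)}{\cmat{\rank(\bX)}{s}},
\end{equation*}
since the kernel puts mass $1/\cmat{\rank(\bX)}{s}$ on each admissible $\bY'$ (here I use that the number of $\bY'$ with $\lspan{\bY'^\tr}=V$ and $\lspan{\bY'}\subseteq\lspan{\bX}$ equals $\cmat{r}{s}$, exactly as stated just before Theorem~\ref{the:89s}, because choosing such a $\bY'$ amounts to choosing $s$ independent columns inside the $r$-dimensional space $\lspan{\bX}$ consistent with a fixed row space $V$). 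The inner sum is $P_{\lspan{Y^\tr}|X}(V|\bX)$, which by Lemma~\ref{lemma:cond} depends on $\bX$ only through $\lspan{\bX^\tr}$, so it equals $P_{\lspan{Y^\tr}|\lspan{X^\tr}}(V|\lspan{\bX^\tr})$. Dividing by $\cmat{r}{s}$ gives exactly $P_{Y^*|X}(\bY'|\bX)$ as defined in the theorem. Hence $P_{Y|X}$ composed with $P_{Y^*|Y}$ equals $P_{Y^*|X}$, the Markov chain holds, and data processing gives $\mutual(X;Y)\ge\mutual(X;Y^*)$.

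Finally, for the equality condition: if $P_{Y^*|X}=P_{Y|X}$ then trivially $\mutual(X;Y)=\mutual(X;Y^*)$, which is the asserted ``equality when $P_{Y^*|X}=P_{Y|X}$''. I expect the main obstacle to be the bookkeeping in the kernel computation — specifically, confirming that the re-randomization kernel $P_{Y^*|Y}$ is genuinely a stochastic matrix that does not depend on $X$ (it doesn't, since it is defined purely in terms of $\lspan{\bY}$ and $\lspan{\bY^\tr}$), and pinning down the counting constant $\cmat{r}{s}$ so that the normalizations match. One subtlety worth stating carefully: the kernel must be defined on all of $\ffield^{T\times N}$, but its action is only relevant on the set $\{\bY:\lspan{\bY}\subseteq\lspan{\bX}\}$, which carries all the probability mass of $P_{Y|X}(\cdot|\bX)$; on $\bY$ outside this set one may define $P_{Y^*|Y}(\cdot|\bY)$ arbitrarily (e.g. $\bY'=\bY$) without affecting the composition. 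Note that the $\alpha$-type hypothesis on $p_X$ is used only to guarantee that $Y^*$ together with $p_X$ realizes the clean decomposition exploited later in the proof of Theorem~\ref{the:89s}; the inequality $\mutual(X;Y)\ge\mutual(X;Y^*)$ itself follows from data processing for every input distribution, but I would phrase it for $\alpha$-type $p_X$ as in the statement.
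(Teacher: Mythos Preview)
Your data-processing argument does not go through: the kernel you write down depends on $\bX$. In your displayed computation you put mass $1/\cmat{\rank(\bX)}{s}$ on each admissible $\bY'$, and the set of admissible $\bY'$ is $\{\bY':\lspan{\bY'^\tr}=V,\ \lspan{\bY'}\subseteq\lspan{\bX}\}$; both the normalization and the support involve $\bX$. Your verbal definition (``uniform over $\bY'$ with $\lspan{\bY'^\tr}=\lspan{\bY^\tr}$ and $\lspan{\bY'}\subseteq\lspan{\bY}$'') is $\bX$-free, but since $\lspan{\bY'^\tr}=\lspan{\bY^\tr}$ forces $\rank(\bY')=\rank(\bY)$, the constraint $\lspan{\bY'}\subseteq\lspan{\bY}$ then forces $\lspan{\bY'}=\lspan{\bY}$, so that kernel normalizes by $\cmat{s}{s}$, not $\cmat{r}{s}$, and does not reproduce $P_{Y^*|X}$.

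In fact $X\to Y\to Y^*$ is \emph{not} a Markov chain in general. Take $T=M=2$, $N=1$, $q=2$, and $H=\begin{pmatrix}1\\0\end{pmatrix}$ deterministic. For $\bX=I_2$ one has $\bY=[1,0]^\tr$ and $P_{Y^*|X}(\,\cdot\,|I_2)$ is uniform on the three nonzero vectors of $\ffield_2^2$; for $\bX'=\begin{pmatrix}1&0\\0&0\end{pmatrix}$ one again has $\bY=[1,0]^\tr$, but now $P_{Y^*|X}(\,\cdot\,|\bX')$ is the point mass at $[1,0]^\tr$. Two inputs give the same $\bY$ but different $Y^*$-laws, so no kernel $P_{Y^*|Y}$ can work. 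The same example with $p_X(I_2)=p_X(\bX')=\tfrac12$ (which is not $\alpha$-type) gives $I(X;Y)=0$ while $I(X;Y^*)=\tfrac12\log 3-\tfrac13>0$, so the inequality itself fails without the $\alpha$-type hypothesis, contrary to your closing remark.

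The paper's proof uses the $\alpha$-type assumption in an essential way and does not go via data processing. It writes $P_{Y^*|X}$ as (the endpoint of a sequence of) uniform averages over $\Phi\in\Fr(\ffield^{r\times r})$ of ``permuted'' channels $P_{Y^{(r,\Phi)}|X}$, shows that for $\alpha$-type $p_X$ each permuted channel has exactly the same mutual information as the original, and then invokes convexity of $I(X;Y)$ in the channel law (for fixed $p_X$) to conclude $I(X;Y^*)\le I(X;Y)$. The $\alpha$-type symmetry is precisely what makes the permuted channels equi-informative; without it the convexity step gives nothing useful.
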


  We can show Claim~\ref{claim:1} using the property that for fixed
  $p_X$, mutual information $I(X;Y)$ is a convex function of the
  transition probabilities. 
  We can further show the following claim by directly applying  the definition of $P_{Y^*|X}$.

  \begin{claim}\label{claim:2}
    For a uniform-given-row-space  input distribution $p_X$,
    \begin{equation} \label{claim2eq1}%
      \entropy(Y^*|X) = \sum_{s\leq r} p_{\rank(X)\rank(Y)}(r,s) \log {\cmat{r}{s}} + \entropy(\lspan{Y^\tr}|\lspan{X^\tr})
    \end{equation}
    and
    \begin{equation} \label{claim2eq2}%
      \entropy(Y^*) = \sum_s p_{\rank(Y)}(s) \log \cmat{T}{s} +
  \entropy(\lspan{Y^\tr}).
    \end{equation}
  \end{claim}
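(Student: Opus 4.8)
The plan is to prove the two identities by the same two-step recipe: apply the chain rule of entropy to peel off the row space $\lspan{Y^{*\tr}}$ (which is a deterministic function of $Y^*$), and then evaluate the leftover conditional entropy of $Y^*$ given its row space by pure counting. Two preliminary observations make this work. First, by the definition of $P_{Y^*|X}$ together with Lemma~\ref{lemma:cond}, the conditional law of $\lspan{Y^{*\tr}}$ given $X=\bX$ equals $P_{\lspan{Y^\tr}|\lspan{X^\tr}}(\,\cdot\,|\lspan{\bX^\tr})$, so the pair $(X,\lspan{Y^{*\tr}})$ has exactly the same joint distribution as $(X,\lspan{Y^\tr})$; in particular $\lspan{Y^{*\tr}}$, $\rank(Y^*)$ and $(\rank(X),\rank(Y^*))$ are distributed as $\lspan{Y^\tr}$, $\rank(Y)$ and $(\rank(X),\rank(Y))$, respectively. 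Second, the counting argument around \eqref{eq:111} gives $|\{\bY\in\ffield^{T\times N}:\lspan{\bY}\subseteq U,\ \lspan{\bY^\tr}=V\}|=\cmat{\dim U}{\dim V}$ for subspaces $V\subseteq\ffield^N$, $U\subseteq\ffield^T$ with $\dim V\leq\dim U$ (and in particular, taking $U=\ffield^T$, there are exactly $\cmat{T}{\dim V}$ matrices in $\ffield^{T\times N}$ with row space equal to $V$).

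For \eqref{claim2eq1} I would write $\entropy(Y^*|X)=\entropy(\lspan{Y^{*\tr}}|X)+\entropy(Y^*\mid \lspan{Y^{*\tr}},X)$. Conditioned on $X=\bX$ and $\lspan{Y^{*\tr}}=V$, the definition of $P_{Y^*|X}$ makes $Y^*$ uniform over $\{\bY:\lspan{\bY}\subseteq\lspan{\bX},\ \lspan{\bY^\tr}=V\}$, a set of size $\cmat{\rank(\bX)}{\dim V}$, so $\entropy(Y^*\mid \lspan{Y^{*\tr}}=V,X=\bX)=\log\cmat{\rank(\bX)}{\dim V}$; averaging over $(\bX,V)$ and regrouping by $(r,s)=(\rank(\bX),\dim V)$ produces $\sum_{s\leq r}p_{\rank(X)\rank(Y)}(r,s)\log\cmat{r}{s}$, where the first preliminary observation is used to replace $Y^*$ by $Y$ in the joint rank distribution. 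For the remaining term, that same observation gives $\entropy(\lspan{Y^{*\tr}}|X)=\entropy(\lspan{Y^\tr}|X)$, and the Markov chain $\lspan{Y^\tr}\rightarrow\lspan{X^\tr}\rightarrow X$ (again Lemma~\ref{lemma:cond}, since $\lspan{X^\tr}$ is a function of $X$) reduces this to $\entropy(\lspan{Y^\tr}|\lspan{X^\tr})$, which is \eqref{claim2eq1}. Incidentally this part does not use the $\alpha$-type hypothesis.

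For \eqref{claim2eq2} I would again write $\entropy(Y^*)=\entropy(\lspan{Y^{*\tr}})+\entropy(Y^*\mid\lspan{Y^{*\tr}})$, so that $\entropy(\lspan{Y^{*\tr}})=\entropy(\lspan{Y^\tr})$ by the first observation and it only remains to show $\entropy(Y^*\mid\lspan{Y^{*\tr}}=V)=\log\cmat{T}{\dim V}$, i.e.\ that once its row space is known, $Y^*$ is uniform over all of $\{\bY\in\ffield^{T\times N}:\lspan{\bY^\tr}=V\}$; averaging then gives $\sum_s p_{\rank(Y)}(s)\log\cmat{T}{s}$. This uniformity is where the $\alpha$-type hypothesis enters, and I would obtain it by showing that $p_{Y^*}$ is constant on each such set. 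Given $\bY_1,\bY_2$ with $\lspan{\bY_1^\tr}=\lspan{\bY_2^\tr}=V$, pick an invertible $\Phi\in\Fr(\ffield^{T\times T})$ with $\Phi\bY_1=\bY_2$ (possible because, writing $\bY_i=C_iB$ for a fixed basis matrix $B$ of $V$, the $C_i\in\ffield^{T\times\dim V}$ have full column rank, so some invertible matrix sends $C_1$ to $C_2$). Then in $p_{Y^*}(\bY_2)=\sum_\bX p_X(\bX)\,P_{Y^*|X}(\Phi\bY_1\mid\bX)$, substitute $\bX=\Phi\bX'$: left multiplication by the invertible $\Phi$ preserves row spaces and ranks and respects column-space containment, so $P_{Y^*|X}(\Phi\bY_1\mid\Phi\bX')=P_{Y^*|X}(\bY_1\mid\bX')$, while the $\alpha$-type property of $p_X$ gives $p_X(\Phi\bX')=p_X(\bX')$ since $\Phi\bX'$ and $\bX'$ have the same row space. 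Hence $p_{Y^*}(\bY_2)=p_{Y^*}(\bY_1)$, as needed.

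The step I expect to be the real obstacle is precisely this uniformity statement for \eqref{claim2eq2}: it is the only place that genuinely uses the $\alpha$-type hypothesis, and it rests on two facts about left multiplication by $\Fr(\ffield^{T\times T})$ — that it leaves the joint law of $(X,Y^*)$ invariant, and that it acts transitively on the matrices sharing a fixed row space. Both are elementary, but they must be set up carefully; in particular the ``equivariance'' $P_{Y^*|X}(\Phi\bY\mid\Phi\bX)=P_{Y^*|X}(\bY\mid\bX)$ is exactly the reason the surrogate channel $Y^*$ was defined in that way. Everything else is the chain rule together with the single counting identity recorded in the first paragraph.
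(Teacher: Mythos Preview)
Your argument is correct, and it takes a genuinely different route from the paper's. The paper proves both identities by brute-force expansion: for \eqref{claim2eq1} it substitutes the $\alpha$-type form $p_X(\bX)=p_{\lspan{X^\tr}}(\lspan{\bX^\tr})/\cmat{T}{r}$ into $\entropy(Y^*|X)$, groups by $(U,V)=(\lspan{\bX^\tr},\lspan{\bY^\tr})$, and cancels the counting factors; for \eqref{claim2eq2} it first computes $p_{Y^*}(\bY)=p_{\lspan{Y^\tr}}(V)/\cmat{T}{s}$ explicitly by summing over $\bX$ and invoking Lemma~\ref{lemma:c1} to count $|\{\bX:\lspan{\bX^\tr}=U,\ \lspan{\bY}\subset\lspan{\bX}\}|$, then plugs this into the entropy formula. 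Your approach instead uses the chain rule twice to peel off $\lspan{Y^{*\tr}}$ and reduces each piece to a conditional-uniformity statement; for \eqref{claim2eq2} you replace the paper's Lemma~\ref{lemma:c1} computation by the $\Fr(\ffield^{T\times T})$-invariance of the pair $(p_X,P_{Y^*|X})$, which is cleaner and makes transparent exactly where the $\alpha$-type hypothesis is used. A side benefit of your route is that it shows \eqref{claim2eq1} holds for \emph{any} $p_X$, not just $\alpha$-type ones --- the paper's computation unnecessarily invokes the $\alpha$-type form there. Both approaches end at the same closed form $p_{Y^*}(\bY)=p_{\lspan{Y^\tr}}(V)/\cmat{T}{s}$, yours implicitly via uniformity on a set of known size, the paper's explicitly via the subspace count.
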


By Claim~\ref{claim:2},
\begin{IEEEeqnarray*}{rCl}
  I(X;Y^*)
  & = &
  \entropy(Y^*) - \entropy(Y^*|X) \\
  & = &
  \sum_{s\leq r} p_{\rank(X)\rank(Y)}(r,s) \log \frac{\cmat{T}{s}}{\cmat{r}{s}} + I(\lspan{X^\tr};\lspan{Y^\tr}),
\end{IEEEeqnarray*}
which, together with Claim~\ref{claim:1}, proves \eqref{eq:lower11}.

To prove \eqref{eq:upper11}, we  have
\begin{IEEEeqnarray*}{rCl}
  \entropy(Y)
  & = &
  \sum_s \sum_{V\in \Gr(s,\ffield^N)} \sum_{\bY:\lspan{\bY^\tr}=V}p_{Y}(\bY)\log \frac{1}{p_Y(\bY)} \\
  & \leq & \IEEEyesnumber \label{eq:90wk-1}
  \sum_s \sum_{V\in \Gr(s,\ffield^N)} p_{\lspan{Y^\tr}}(V) \log \frac{\cmat{T}{s}}{p_{\lspan{Y^\tr}}(V)} \\
  & = & \IEEEyesnumber \label{eq:90wk-2}
  \entropy(Y^*),
\end{IEEEeqnarray*}
where \eqref{eq:90wk-1} is derived by the log-sum inequality (cf. \cite{yeung08}) and
\eqref{eq:90wk-2} is obtained by \eqref{claim2eq2}. Then,
\begin{IEEEeqnarray*}{rCl}
  I(X;Y)-I(X;Y^*)
  & = &
  \entropy(Y)-\entropy(Y|X) - \entropy(Y^*) + \entropy(Y^*|X) \\
  & \leq &
  \entropy(Y^*|X) - \entropy(Y|X) \\
  & \leq &
  \sum_{s\leq r} p_{\rank(X)\rank(Y)}(r,s) \log {\cmat{r}{s}},
\end{IEEEeqnarray*}
where the last inequality follows from \eqref{claim2eq1} and $\entropy(Y|X) \geq \entropy(\lspan{Y^\tr}|X) =
\entropy(\lspan{Y^\tr}|\lspan{X^\tr})$ since
$p_{\lspan{Y^\tr}|X}(V|\bX)$ depends on $\bX$ only through
$\lspan{\bX^\tr}$ (cf. Lemma~\ref{lemma:cond}).
\end{IEEEproof}

The lower bound in the above theorem can be achieved by a coding
scheme employing a superposition structure, which includes a cloud
code and a set of satellite codes, each of which corresponds to a
cloud center. The rate $I(\lspan{X^\tr};\lspan{Y^\tr})$ can be
achieved by the cloud code, while the rate
$\locrate(\rank(X);\rank(Y))$ can be achieved by the satellite
codes. Readers are referred to \cite{yang12sumas} for detailed
discussion of this coding scheme.

Our lower bound is at least as good as the lower bound
obtained in \cite{nobrega11, nobrega11a}, 
where a transfer matrix is converted to a uniform-given-rank
transfer matrix.
We will compare these two bounds at the end of Section~\ref{sec:34}.

In the following sections, we will see that the quantity
$\locrate(\rank(X);\rank(Y))$ is also
related to the coding rate of subspace coding.
In the definition of $\locrate(\rank(X);\rank(Y))$,
the inverse of the term $\frac{\cmat{T}{s}}{\cmat{r}{s}}$ has the following meaning.
Let $V$ be an $s$-dimensional subspaces of $\ffield^T$, which can be
regarded as the column space of the output matrix.
The number of $r$-dimensional
subspaces of $\ffield^T$ is $\gco{T}{r}$; and by Lemma~\ref{lemma:c1}, the number of $r$-dimensional subspaces of $\ffield^T$
that include $V$, which are the possible column spaces of the
input matrix, is $\gco{T}{r}\frac{\cmat{r}{s}}{\cmat{T}{s}}$.
Thus, the fraction of the number of $r$-dimensional subspaces of $\ffield^T$
that include $V$ as a subspace is exactly
$\frac{\cmat{r}{s}}{\cmat{T}{s}}$.

Let us look at
another property of the quantity $\locrate(\rank(X);\rank(Y))$.
Let
\begin{equation*}
\epsilon(T,q)\triangleq \sum_{s}p_{\rank(H)}(s)
\log\frac{\cmatt{T}{s}}{\cmatt{M}{s}},
\end{equation*}
where $\cmatt{m}{r}$ is defined in \eqref{eq:speaker}.
\begin{lemma}\label{lemma:c5}
If $T\geq M$ and $p_{\rank(X)}(M)=1$,
\begin{IEEEeqnarray*}{rCl}
  \locrate(\rank(X);\rank(Y))
  & = & %
  (T-M)\E[\rank(H)] \log q + \epsilon(T,q),
\end{IEEEeqnarray*}
where $0\leq \epsilon(T,q)< 1.8$ for all $T$ and $q$.
\end{lemma}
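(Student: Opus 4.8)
The plan is to first compute $\locrate(\rank(X);\rank(Y))$ in closed form using the two hypotheses, and then bound the residual term $\epsilon(T,q)$ by elementary estimates on the quantities $\cmatt{m}{s}$.

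First I would collapse the joint rank distribution. Since $p_{\rank(X)}(M)=1$ and $T\geq M$, with probability one $X$ has full column rank, and left multiplication by a full-column-rank matrix preserves rank, so $\rank(Y)=\rank(XH)=\rank(H)$ almost surely. Hence $p_{\rank(X)\rank(Y)}(r,s)=p_{\rank(H)}(s)$ when $r=M$ and $0$ otherwise, and the definition \eqref{eq:ratej} reduces to $\locrate(\rank(X);\rank(Y))=\sum_{s}p_{\rank(H)}(s)\log\frac{\cmat{T}{s}}{\cmat{M}{s}}$. Next I would split the logarithm: using $\cmat{m}{s}=\cmatt{m}{s}\,q^{ms}$ from \eqref{eq:speaker}, $\log\frac{\cmat{T}{s}}{\cmat{M}{s}}=(T-M)s\log q+\log\frac{\cmatt{T}{s}}{\cmatt{M}{s}}$. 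Summing against $p_{\rank(H)}$, the first term contributes $(T-M)\log q\sum_{s}s\,p_{\rank(H)}(s)=(T-M)\E[\rank(H)]\log q$, and the second is exactly $\epsilon(T,q)$, which yields the stated identity.

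It then remains to bound $\epsilon(T,q)$. From \eqref{eq:111} and \eqref{eq:speaker} one has $\cmatt{m}{s}=\prod_{i=0}^{s-1}(1-q^{i-m})$, whose factors are each increasing in $m$; hence $\cmatt{T}{s}\geq\cmatt{M}{s}$ for $T\geq M$, every summand $\log\frac{\cmatt{T}{s}}{\cmatt{M}{s}}$ is nonnegative, and $\epsilon(T,q)\geq 0$. For the upper bound, $\cmatt{T}{s}\leq 1$ because it is a probability, while $\cmatt{M}{s}\geq\prod_{j=1}^{M}(1-q^{-j})\geq\prod_{j=1}^{\infty}(1-q^{-j})\geq\prod_{j=1}^{\infty}(1-2^{-j})$, the last step because the infinite product increases with $q$. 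Therefore every summand is at most $-\log\prod_{j=1}^{\infty}(1-2^{-j})$, and averaging over $s$ with the PMF $p_{\rank(H)}$ gives $\epsilon(T,q)\leq-\log\prod_{j=1}^{\infty}(1-2^{-j})$.

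The only nonroutine point is the numerical check $-\log\prod_{j=1}^{\infty}(1-2^{-j})<1.8$, i.e.\ $\prod_{j=1}^{\infty}(1-2^{-j})>2^{-1.8}\approx 0.2872$. I would verify this by multiplying out the first few factors, e.g.\ $\prod_{j=1}^{5}(1-2^{-j})=9765/32768\approx 0.298$, and lower-bounding the tail by $\prod_{j\geq 6}(1-2^{-j})\geq 1-\sum_{j\geq 6}2^{-j}=31/32$, which already gives $\prod_{j=1}^{\infty}(1-2^{-j})>0.288>2^{-1.8}$; alternatively, such bounds on the $q$-Pochhammer product are available in \cite{yang09v}. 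This constant-chasing is the main (and quite mild) obstacle; everything else is bookkeeping with the definitions.
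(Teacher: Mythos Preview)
Your proposal is correct and follows essentially the same approach as the paper: reduce to $\rank(Y)=\rank(H)$ using full column rank of $X$, split $\log\frac{\cmat{T}{s}}{\cmat{M}{s}}$ into the $(T-M)s\log q$ part and the $\cmatt{}{}$-ratio part, and then bound $\epsilon(T,q)$ via the product formula $\cmatt{m}{s}=\prod_{i=0}^{s-1}(1-q^{i-m})$. The only difference is cosmetic---the paper cites \cite{cooper00} for the uniform lower bound on $\cmatt{m}{s}$, whereas you work out the $q$-Pochhammer estimate $\prod_{j\geq 1}(1-2^{-j})>2^{-1.8}$ by hand.
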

\begin{IEEEproof}
When $T\geq M$ and $p_{\rank(X)}(M)=1$,
\begin{IEEEeqnarray*}{rCl}
  \locrate(\rank(X);\rank(Y))
  & = & \IEEEyesnumber \label{eq:ci-1}
  \sum_{s}p_{\rank(H)}(s) \log q^{(T-M)s}\frac{\cmatt{T}{s}}{\cmatt{M}{s}} \\
  & = &
  (T-M)\E[\rank(H)] \log q + \epsilon(T,q),
\end{IEEEeqnarray*}
where \eqref{eq:ci-1} follows that $\rank(H)=\rank(Y)$ since $X$ has
full column rank.

The lower bound on $\epsilon(T,q)$ holds due to $T\geq M$, and the
upper bound on $\epsilon(T,q)$ is obtained by bounding $\cmatt{m}{r}$
using a constant given in \cite{cooper00}.
\end{IEEEproof}

The above lemma tells us that when $T>M$,
 $\locrate(\rank(X);\rank(Y))$ is larger than $(T-M)\E[\rank(H)]
\log q$, which is the maximum achievable rate of channel
training \cite{yang10bf}.  (Recall that in channel training, $M$ rows of
$X$ are used to recover the transfer matrix in the receiver.)
We know that subspace coding can in general do better than channel
training \cite{koetter08j}.
The lower bound in Theorem~\ref{the:89s} implies
that the rate gain is at least
\begin{equation*}
  \max_{p_X:\text{uniform-given-row-space},p_{\rank(X)}(M)=1}\epsilon(T,q).
\end{equation*}
(Note that $\mutual(\lspan{X^{\tr}}; \lspan{Y^{\tr}})=0$ when
$p_{\rank(X)}(M)=1$ since $\lspan{X^{\tr}}=\ffield^M$ is deterministic when $\rank(X)=M$.)

\begin{example}\label{ex:j}
Consider $\loc(H,1)$, for which channel training is not useful. We have
\begin{IEEEeqnarray*}{rCl}
  \locrate(\rank(X);\rank(Y))
  & = &
  \sum_{r\in\{0,1\}} \sum_{s\in
    \{0,1\}:s\leq r} p_{\rank(X)\rank(Y)}(r,s) \log
  \frac{\cmat{1}{s}}{\cmat{r}{s}} \\
  & = &
  p_{\rank(X)\rank(Y)}(0,0) \log \frac{\cmat{1}{0}}{\cmat{0}{0}}
  + p_{\rank(X)\rank(Y)}(1,0) \log \frac{\cmat{1}{0}}{\cmat{1}{0}} \\
  & &
  + p_{\rank(X)\rank(Y)}(1,1) \log \frac{\cmat{1}{1}}{\cmat{1}{1}} \\
  & = & 0.
\end{IEEEeqnarray*}
The lower bound in Theorem~\ref{the:89s} for $\loc(H,1)$ becomes 
$\mutual(\lspan{X^{\tr}}; \lspan{Y^{\tr}})$, and the gap between the
upper bound and the lower bound is $p_{\rank(X)\rank(Y)}(1,1)\log
(q-1)$. Note that $\mutual(\lspan{X^{\tr}}; \lspan{Y^{\tr}})$ can be
large. For example, when $H$ is the $M\times M$ identity matrix,
$\mutual(\lspan{X^{\tr}}; \lspan{Y^{\tr}}) = \log |\Gr(1,\ffield^M)| = \log
\frac{q^M-1}{q-1} \geq (M-1)\log q$. 
\end{example}

\subsection{Properties of Row-Space-Symmetric LOCs}

We call a LOC \emph{row-space-symmetric} if its transition matrix is
row-space-symmetric. The lower bound in Theorem~\ref{the:89s} is tight
for row-space-symmetric LOCs. We introduce some properties of such
LOCs to be used in other sections.

By definition, a LOC is row-space symmetric if and only if
for any $\bX$ and $\bY$ with $\lspan{\bY}\leq \lspan{\bX}$,
  \begin{equation*}
    P_{Y|X}(\bY|\bX) = \frac{1}{\cmat{\rank(\bX)}{\rank(\bY)}} P_{\lspan{Y^\tr}|\lspan{X^\tr}}(\lspan{\bY^\tr}|\lspan{\bX^\tr}),
  \end{equation*}
where $\cmat{\rank(\bX)}{\rank(\bY)}$ is the number of $\bY_1$ such
that $\lspan{\bY_1}\leq \lspan{\bX}$ and $\lspan{\bY_1^\tr} = \lspan{\bY^\tr}$, and $P_{\lspan{Y^\tr}|\lspan{X^\tr}}(\lspan{\bY^\tr}|\lspan{\bX^\tr})$ is only determined by $p_H$.

\begin{lemma}\label{lemma:new1}
When $T\geq M$, $\loc(H,T)$ being row-space-symmetric implies that $H$ is
uniform-given-row-space.
\end{lemma}
\begin{IEEEproof}
Let $\bX$ and $\bX'$ be two full-rank input matrices with
  $\lspan{\bX^\tr} = \lspan{\bX'^{\tr}}$. Let $\bH$ and $\bH'$ be two transfer matrices with
  $\lspan{\bH^\tr} = \lspan{\bH'^{\tr}}$. Since $T\geq M$, we have
  $\lspan{(\bX\bH)^\tr} = \lspan{\bH^\tr}$ and $\lspan{(\bX\bH')^\tr}
  = \lspan{\bH'^\tr}$. Hence $\lspan{(\bX\bH)^\tr} =
  \lspan{(\bX\bH')^\tr}$. By the definition of row-space-symmetric
  LOCs, we have $p_{Y|X}(\bX\bH|\bX) = p_{Y|X}(\bX'\bH'|\bX')$, which
  implies $p_H(\mathbf{H}) = p_H(\mathbf{H}')$. 
\end{IEEEproof}

When $T<M$, it is not
necessary that the transfer matrix of a  row-space-symmetric LOC
satisfies the above constraint. 

\begin{example}\label{ex:rss}
We denote a LOC with $T=1$ over the binary
field $\ffield_2$ as
$\loc_{2}(H,1)$, where $H\in\ffield_2^{M\times N}$ is the transfer matrix.
The input and the output of $\loc_{2}(H,1)$ are in the same set
$\ffield_2^{1\times M}$.
Since the mapping from $X$ to $\lspan{X^T}$ in this special case is a
bijection, we have $P_{Y|X}(\bY|\bX) =
P_{\lspan{Y^\tr}|\lspan{X^\tr}}(\lspan{\bY^\tr}|\lspan{\bX^\tr})$ for
any $\bX,\bY\in \ffield_2^{1\times M}$ with $\lspan{\bY}\leq
\lspan{\bX}$. Hence, $\loc_{2}(H,1)$ is row-space-symmetric for any
distribution of $H$. 
\end{example}

\section{Subspace Coding Capacity of LOCs}
\label{sec:subspace}

One of the intrinsic properties of LOCs is that $\lspan{Y} \leq
\lspan{X}$. If we restrict to the column spaces of
the input and output of a LOC, it is possible that simpler encoding/decoding schemes can
be developed. This approach, called subspace coding, was first
adopted by Koetter and Kschischang \cite{koetter08j} for random linear
network coding.
In this section, we characterize the asymptotic performance of
subspace coding with multiple uses of the channel
when the error probability goes to zero.

We discuss how to characterize the maximum achievable rate of subspace
coding (also known as the \emph{subspace coding capacity}) and provide
lower bounds on the subspace coding capacity. We also introduce an important 
class of LOCs, for which the optimal subspace coding scheme is relatively
easier to find.

\subsection{Optimal Subspace Degradations}
\label{sec:osd}

A LOC is a matrix channel, so it must be converted to a subspace
channel to use subspace coding. An $n$-block subspace code is a subset
of $(\Pj(\min\{T,M\},\ffield^T))^n$. To apply a subspace code to a
LOC, the subspaces in a codeword need to be converted to matrices. For
$U\in \Pj(\min\{T,M\},\ffield^T)$, this conversion can be done by a
transition probability $P_{X|\lspan{X}}(\cdot|U)$.  The decoding of a
subspace code also uses only the column spaces spanned by the received
matrices.  Given a transition matrix $P_{X|\lspan{X}}$, we have a new
channel with input $\lspan{X}$ and output $\lspan{Y}$.

\begin{definition}\label{def:deg}
  For $\loc(H,T)$ with a given a transition probability $P_{X|\lspan{X}}$,
  we have a new channel law given by
  \begin{equation}
    \label{eq:decom}
    P_{\lspan{Y}|\lspan{X}}(V|U) = \sum_{\bX}P_{\lspan{Y}|X}(V|\bX) P_{X|\lspan{X}}(\bX|U).
  \end{equation}
  This channel takes subspaces as input and output and is called a
  \emph{subspace degradation} of $\loc(H,T)$ with respect to
  $P_{X|\lspan{X}}$.
\end{definition}

The capacity of the subspace degradation of $\loc(H,t)$ w.r.t.
$P_{X|\lspan{X}}$ is $\max_{p_{\lspan{X}}} \mutual(\lspan{Y};
\lspan{X})$.  Therefore, the {subspace coding capacity} of $\loc(H,T)$ is
\begin{IEEEeqnarray}{rCl}
  C_\subs = C_{\subs}(H,T)
  & \triangleq & \IEEEnonumber
  \max_{P_{X|\lspan{X}}}\max_{p_{\lspan{X}}}\mutual(\lspan{X};\lspan{Y})\\
  & = &  \label{eq:cx-1}
  \max_{p_X}\mutual(\lspan{X};\lspan{Y}).
\end{IEEEeqnarray}
To verify \eqref{eq:cx-1}, we see that for given $P_{X|\lspan{X}}$ and
$p_{\lspan{X}}$, the PMF of $X$ is given by
$p_X(\bX)=p_{\lspan{X}}(\lspan{\bX})P_{X|\lspan{X}}(\bX|\lspan{\bX})$.
On the other hand, fix a distribution $p_X$. The distribution
$p_{\lspan{X}}$ can be derived, and the distribution
$P_{X|\lspan{X}}(\cdot|U)$ can be derived for any $U$ with $p_{\lspan{X}}(U)\neq 0$. If $p_{\lspan{X}}(U)= 0$, the
distribution $P_{X|\lspan{X}}(\cdot|U)$ does not appear in the maximization of
$\mutual(\lspan{X};\lspan{Y})$.

When $P_{X|\lspan{X}}$ is fixed, $P_{\lspan{Y}|\lspan{X}}(V|U)$ is
also fixed (cf. \eqref{eq:decom}), and hence
$\mutual(\lspan{X};\lspan{Y})$ is a concave function of
$p_{\lspan{X}}$. On the other hand, when $p_{\lspan{X}}$ is fixed,
$P_{\lspan{Y}|\lspan{X}}(V|U)$ is a linear function of
$P_{X|\lspan{X}}$ (cf. \eqref{eq:decom}) and
$\mutual(\lspan{X};\lspan{Y})$ is a convex function of
$P_{\lspan{Y}|\lspan{X}}(V|U)$, and hence
$\mutual(\lspan{X};\lspan{Y})$ is a convex function of
$P_{X|\lspan{X}}$.  (We can similarly argue that
$\mutual(\lspan{X};\lspan{Y})$ is not concave in $p_X$ in general.)  Hence, finding
an optimal subspace coding scheme involves maximizing a non-concave
function, which is in general a difficult problem due to computational
complexity.

Recall that a transition matrix is \emph{deterministic} if all its entries
are either zero or one.
We can simplify the problem of finding an optimal subspace degradation
by considering only deterministic transition matrices.

\begin{lemma}\label{the:beta}
There exists an optimal subspace degradation w.r.t. a deterministic
transition matrix $P_{X|\lspan{X}}$.
\end{lemma}
\begin{IEEEproof}
  Consider a procedure as follows.  Fix $p_{\lspan{X}}$ and
  $P^0_{X|\lspan{X}}$ that achieve $C_{\subs}(H,T)$. If
  $P^0_{X|\lspan{X}}$ is deterministic, the procedure stops.
  Otherwise, there must exist $U \in \Gr(\min\{T,M\},\ffield^T)$ such
  that $P^0_{X|\lspan{X}}(\bX|U) < 1$ for all input $\bX$ with
  $\lspan{\bX} = U$.

  For each $\bX$ with $\lspan{\bX} = U$, define
  $P^{\bX}_{X|\lspan{X}}$ as
  $P^{\bX}_{X|\lspan{X}}(\cdot|U')=P^{0}_{X|\lspan{X}}(\cdot|U')$ for
  $U'\neq U$ and $P^{\bX}_{X|\lspan{X}}(\bX|U)=1$.  We can write
  \begin{equation*}
    P^0_{X|\lspan{X}}(\cdot|\cdot) = \frac{1}{\cmat{M}{\dim(U)}}
      \sum_{\bX:\lspan{\bX}=U} P^0_{X|\lspan{X}}(\bX|U) P^{\bX}_{X|\lspan{X}}(\cdot|\cdot).
  \end{equation*}
  Since $\mutual(\lspan{X};\lspan{Y})$ is a convex function of
  $P_{X|\lspan{X}}$, there exists $\bX_0$ with $\lspan{\bX_0}=U$ such that
  \begin{equation*}
    \mutual(\lspan{X};\lspan{Y})\big |_{P^{\bX_0}_{X|\lspan{X}}} \geq \mutual(\lspan{X};\lspan{Y})\big |_{P^0_{X|\lspan{X}}}.
  \end{equation*}
  Hence the subspace degradation associated with
  $P^{\bX_0}_{X|\lspan{X}}$ is also optimal.
  We then repeat the above procedure with $P^{\bX_0}_{X|\lspan{X}}$ in
  place of $P^0_{X|\lspan{X}}$.

  The above procedure must stop in finite steps since $\Gr(\min\{T,M\},\ffield^T)$ has
  finite elements.  $P^0_{X|\lspan{X}}$ in the final step is
  deterministic. 
\end{IEEEproof}

Lemma~\ref{the:beta} 
  enables us to focus on a finite set of deterministic 
  transition matrices $P_{X|\lspan{X}}$ to find the optimal subspace degradation.
For small $T$, it is possible to numerically evaluate all the
deterministic transition matrices $P_{X|\lspan{X}}$.

\begin{example} \label{example:ssopt}
  We use $\loc(H,1)$ as an
  example to show how to evaluate the subspace coding capacity.
The input and  output of a subspace
degradation can be two subspaces $\lspan{0} \triangleq \{0\}$
and $\lspan{1} \triangleq \{0, 1\}$.
By Lemma~\ref{the:beta}, we only need to consider subspace
degradations with
$P_{X|\lspan{X}}(\bX|\lspan{1})=1$ for certain $\bX \in \ffield^{1\times
  M}\setminus \{\mathbf{0}\}$, where
\begin{equation*}
  P_{\lspan{Y}|\lspan{X}}(\lspan{0}|\lspan{1}) = P_{Y|X}(\mathbf{0} | \bX).
\end{equation*}

 Since
$P_{\lspan{Y}|\lspan{X}}(\lspan{1}|\lspan{0}) = 0$, the subspace
degradations of $\loc(H,1)$ are Z-channels with the crossover
probability given by $P_{\lspan{Y}|\lspan{X}}(\lspan{0}|\lspan{1})$.
We know that the capacity of Z-channel is a decreasing function of the
crossover probability. So the best subspace degradation is the one
with the smallest $P_{\lspan{Y}|\lspan{X}}(\lspan{0}|\lspan{1})$.
Therefore, the best subspace degradation can be found by evaluating
$P_{Y|X}(\mathbf{0} | \bX)$ for $\bX \in \ffield^{1\times
  M}\setminus \{\mathbf{0}\}$. 
Since different $\bX$ spanning the same row space only need to be
calculated once, we need to consider $|\Gr(1,\ffield^M)| =
\frac{q^M-1}{q-1}$ inputs. 

Since the input/output of a subspace degradation is binary, the maximum
achievable rate of subspace coding for $\loc(H,1)$ is at most $1$ bit
per use, which is much smaller than the lower bound characterized in Example~\ref{ex:j}.
\end{example}

\subsection{Lower Bound on Subspace Coding Capacity}
\label{sec:decomp}

Since it is difficult to find an optimal subspace degradation in
general, we consider in this section the achievable rate of subspace
coding for uniform-given-row-space input distributions to get a lower
bound on the subspace coding capacity. 
We will show (in the next subsection) that the lower
bound to be obtained is exactly the subspace coding capacity for
certain important special cases.

\begin{theorem}\label{the:diq2}
  For a LOC with uniform-given-row-space input distributions,
  \begin{equation} \label{eq:i8gaqif}
    \mutual(\lspan{X};\lspan{Y}) =  \locrate(\rank(X);\rank(Y)) + \mutual(\rank(X);\rank(Y)),
  \end{equation}
  where $\locrate(\rank(X);\rank(Y))$ is defined in \eqref{eq:ratej};
  and hence
  \begin{IEEEeqnarray*}{rCl}
    C_{\subs} 
    & \geq &
    \max_{p_{\lspan{X^\tr}}}[
    \locrate(\rank(X);\rank(Y)) + \mutual(\rank(X);\rank(Y))].
  \end{IEEEeqnarray*}
\end{theorem}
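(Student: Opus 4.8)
The plan is to establish the identity \eqref{eq:i8gaqif} by a direct entropy decomposition, exploiting the fact that for an $\alpha$-type input, $\lspan{X}$ is uniformly distributed conditioned on its dimension $\rank(X)$, and $\rank(X)$ is a deterministic function of $\lspan{X}$ (and likewise on the output side). The inequality for $C_{\subs}$ is then immediate from \eqref{eq:cx-1} restricted to $\alpha$-type inputs (justified by Theorem~\ref{the:diq}), together with the observation that for an $\alpha$-type input the joint rank distribution $p_{\rank(X)\rank(Y)}$ depends on $p_X$ only through $p_{\lspan{X^\tr}}$ (so the maximization over $\alpha$-type $p_X$ is the same as over $p_{\lspan{X^\tr}}$, as already noted before Theorem~\ref{the:89s}).

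First I would write $\mutual(\lspan{X};\lspan{Y}) = \entropy(\lspan{Y}) - \entropy(\lspan{Y}|\lspan{X})$ and expand each term by conditioning on ranks. For the first term, since $\rank(Y)$ is a function of $\lspan{Y}$, $\entropy(\lspan{Y}) = \entropy(\rank(Y)) + \entropy(\lspan{Y}\mid\rank(Y))$. The key structural fact is that for an $\alpha$-type input, the output $\lspan{Y}$ is uniformly distributed over $\Gr(s,\ffield^T)$ conditioned on $\rank(Y)=s$: this follows because $\lspan{X}$ is uniform conditioned on rank (by the $\alpha$-type property applied symmetrically — note the rows/columns symmetry here, I should double-check that $\alpha$-type as defined via row spaces of $X$ indeed forces $\lspan{X}$, the column space, to be uniform given rank; in fact it does not directly, but the relevant claim concerns the channel output, and I would invoke the symmetry Lemma~\ref{prop:1} or the argument already used in Claim~\ref{claim:2} of Theorem~\ref{the:89s}). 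Granting that, $\entropy(\lspan{Y}\mid\rank(Y)=s) = \log\gco{T}{s}$, so $\entropy(\lspan{Y}) = \entropy(\rank(Y)) + \sum_s p_{\rank(Y)}(s)\log\gco{T}{s}$. Similarly for the conditional term, $\entropy(\lspan{Y}|\lspan{X}) = \entropy(\lspan{Y}|\lspan{X},\rank(X),\rank(Y))$ expanded appropriately; using that $\lspan{Y}$ is uniform over the $s$-dimensional subspaces of a fixed $r$-dimensional $\lspan{X}$, and that there are $\gco{r}{s}$ of them, I get $\entropy(\lspan{Y}|\lspan{X}) = \entropy(\rank(Y)\mid\rank(X),\lspan{X}) + \sum_{s\le r}p_{\rank(X)\rank(Y)}(r,s)\log\gco{r}{s}$, and the first piece reduces to $\entropy(\rank(Y)\mid\rank(X))$ since the conditional rank distribution of $Y$ depends on $\lspan{X}$ only through $\rank(X)$ when the input is $\alpha$-type (Lemma~\ref{lemma:cond} plus uniformity).

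Subtracting, the subspace-entropy pieces combine into $\sum_{s,r}p_{\rank(X)\rank(Y)}(r,s)\log\frac{\gco{T}{s}}{\gco{r}{s}}$ and the rank-entropy pieces combine into $\entropy(\rank(Y)) - \entropy(\rank(Y)|\rank(X)) = \mutual(\rank(X);\rank(Y))$. The last step is to check that $\log\frac{\gco{T}{s}}{\gco{r}{s}} = \log\frac{\cmat{T}{s}}{\cmat{r}{s}}$, which holds by the definition \eqref{eq:gcos} of the Gaussian binomial since the $\cmat{s}{s}$ denominators cancel; this matches the definition \eqref{eq:ratej} of $\locrate(\rank(X);\rank(Y))$, giving \eqref{eq:i8gaqif}. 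Finally, from \eqref{eq:cx-1}, $C_{\subs} = \max_{p_X}\mutual(\lspan{X};\lspan{Y}) \ge \max_{p_X:\alpha\text{-type}}\mutual(\lspan{X};\lspan{Y})$, and substituting \eqref{eq:i8gaqif} yields the stated lower bound; the reformulation of the maximization over $p_{\lspan{X^\tr}}$ follows because, as recalled above, both $\locrate$ and $p_{\rank(X)\rank(Y)}$ depend on an $\alpha$-type input only through $p_{\lspan{X^\tr}}$.

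The main obstacle I anticipate is justifying cleanly that, under an $\alpha$-type input, (i) the output column space $\lspan{Y}$ is uniform over $\Gr(s,\ffield^T)$ given $\rank(Y)=s$, and (ii) it is uniform over the $s$-subspaces of $\lspan{X}$ given $\lspan{X}$ and $\rank(Y)=s$ — i.e., that the whole input–output subspace pair inherits the right conditional-uniformity. This is essentially the content already extracted in Claim~\ref{claim:2} for the channel $Y^*$, and the honest route is to note that $\lspan{Y}$ has the same distribution under $\loc(H,T)$ as under the row-space-symmetric surrogate $Y^*$ (since $\lspan{Y}$, like $\rank(Y)$, is governed by $P_{\lspan{Y^\tr}\mid\lspan{X^\tr}}$ via the column-space counting in Lemma~\ref{lemma:c1}), so Claim~\ref{claim:2} applies verbatim; everything else is bookkeeping with the identities \eqref{eq:gcos}, \eqref{eq:ratej} and Lemma~\ref{lemma:c1}.
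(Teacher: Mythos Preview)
Your entropy-decomposition strategy is sound and, once the uniformity claims (i) and (ii) are in hand, is just a reorganisation of the paper's computation. The paper proceeds slightly differently: it fixes an $\alpha$-type input and shows \emph{directly} that the joint $p_{\lspan{X}\lspan{Y}}(U,V)$ depends only on $(\dim U,\dim V)$. Concretely, for any two pairs $(U,V)$ and $(U',V')$ with $V\subset U$, $V'\subset U'$, $\dim U=\dim U'$, $\dim V=\dim V'$, it constructs a full-rank $\Phi\in\Fr(\ffield^{T\times T})$ with $\Phi U=U'$ and $\Phi V=V'$, then uses Lemma~\ref{the:symm} and the $\alpha$-type property to conclude $p_{\lspan{X}\lspan{Y}}(U,V)=p_{\lspan{X}\lspan{Y}}(U',V')$. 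From this it reads off the three formulas $p_{\lspan{X}\lspan{Y}}(U,V)=p_{\rank(X)\rank(Y)}(r,s)/(\gco{T}{r}\gco{r}{s})$, $p_{\lspan{X}}(U)=p_{\rank(X)}(r)/\gco{T}{r}$, $p_{\lspan{Y}}(V)=p_{\rank(Y)}(s)/\gco{T}{s}$ (the last via Lemma~\ref{lemma:c1}) and substitutes into $I(\lspan{X};\lspan{Y})$. Your entropy expansion is equivalent to this substitution.

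Where your proposal has a genuine gap is the justification of (i) and (ii). Routing through Claim~\ref{claim:2} and the surrogate $Y^*$ does not work cleanly: Claim~\ref{claim:2} computes $\entropy(Y^*)$ and $\entropy(Y^*|X)$, not $\entropy(\lspan{Y})$ or $\entropy(\lspan{Y}|\lspan{X})$, and your assertion that ``$\lspan{Y}$ has the same distribution under $\loc(H,T)$ as under $Y^*$'' is exactly what needs proving---you cannot extract it from row-space information ($P_{\lspan{Y^\tr}|\lspan{X^\tr}}$) without already knowing the column-space symmetry you want. Relatedly, your appeal to Lemma~\ref{lemma:cond} to reduce $\entropy(\rank(Y)\mid\rank(X),\lspan{X})$ to $\entropy(\rank(Y)\mid\rank(X))$ is off: that lemma conditions on the \emph{row} space $\lspan{\bX^\tr}$, not the column space $\lspan{\bX}$. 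The clean fix is the paper's $\Phi$-argument above, which in one stroke gives the joint dimension-dependence and hence both (i) and (ii); after that your decomposition goes through exactly as written.
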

\begin{IEEEproof}
  Fix a uniform-given-row-space input $p_X$.
  For $U \in \Gr(m, \ffield^t)$, let
  \begin{equation*}
  A(m,U) = \{\bX\in \ffield^{t\times m}:\lspan{\bX} = U\}.
  \end{equation*}
  The set $A(m,U)$ has several properties that will be used in the proof.
  For a full-column-rank matrix $\mathbf B$ with $\lspan{\mathbf B}
  = U$, we have
  \begin{equation*}%
  A(m,U) = \{\mathbf{BD}:\mathbf D\in \Fr(\ffield^{\dim(U)\times
    m})\} =  \mathbf{B} \Fr(\ffield^{\dim(U)\times m}).
  \end{equation*}
  Thus, $|A(m,U)| = |\Fr(\ffield^{\dim(U)\times m})| = \cmat{m}{\dim(U)}$.  For
  $\Phi\in \Fr(\ffield^{t\times t})$, $\lspan{\Phi \mathbf B} = \Phi
  U$.  So $A(m,\Phi U) = \Phi \mathbf B \Fr(\ffield^{r\times M}) =
  \Phi A(m,U)$.

   Fix any $V,V', U,U' \in \Pj(\ffield^T)$ satisfying
    $V\leq U$, $V'\leq U'$, $\dim(U)=\dim(U')=r$ and
    $\dim(V)=\dim(V')=s$.  We show that there exists a full rank
    $T\times T$ matrix such that $\Phi V = V'$ and $\Phi U = U'$.
    Find a basis $\{\mathbf{b}_{i}:i=1,\cdots, s\}$ of $V$, extend the
    basis of $V$ to a basis $\{\mathbf{b}_{i}:i=1,\cdots, r\}$ of $U$,
    and further extend the basis of $U$ to a basis
    $\{\mathbf{b}_{i}:i=1,\cdots, T\}$ of $\ffield^T$.  Similarly,
    find a basis $\{\mathbf{b}'_{i}:i=1,\cdots, T\}$ of $\ffield^T$
    such that $\{\mathbf{b}'_{i}:i=1,\cdots, r\}$ is a basis of $U$
    and $\{\mathbf{b}'_{i}:i=1,\cdots, s\}$ is a basis of $V$.
    Consider the linear equation system
  \begin{equation*}
    \Phi \mathbf{b}_i = \mathbf{b}_{i}',\quad i=1,\cdots,T.
  \end{equation*}
  The unique solution of the above system satisfies $\Phi V = V'$ and  $\Phi U = U'$.
  Using the above notations, we have
  \begin{IEEEeqnarray*}{rCl}
    p_{\lspan{X}\lspan{Y}}(U,V)
    & = & {\sum_{\bX\in A(M,U)} p_{X}(\bX)
      \sum_{\bY\in A(N,V)} P_{Y|X}(\bY|\bX)} \\
    & = & {\sum_{\bX\in A(M,U)} p_{X}(\Phi\bX)
      \sum_{\bY\in A(N,V)} P_{Y|X}(\Phi\bY|\Phi\bX)} \IEEEyesnumber \IEEEeqnarraynumspace \label{eq:kdka1}\\
    & = & {\sum_{\bX\in A(M,\Phi U)} p_{X}(\bX)
      \sum_{\bY\in A(N,\Phi V)} P_{Y|X}(\bY|\bX)} \\
    & = & p_{\lspan{X}\lspan{Y}}(\Phi U,\Phi V) \\
    & = & p_{\lspan{X}\lspan{Y}}(U',V'),
  \end{IEEEeqnarray*}
  where in \eqref{eq:kdka1} $p_X(\bX)=p_X(\Phi\bX)$ follows that $p_X$
  is uniform-given-row-space, and $P_{Y|X}(\Phi\bY|\Phi\bX) = P_{Y|X}(\bY|\bX)$
  follows from Lemma~\ref{the:symm}.  Then it can be verified that for
  $V, U  \leq \ffield^T $ with $V\leq U$, $\dim(U)=r$ and
  $\dim(V)=s$,
  \begin{equation}\label{eq:uv}
  p_{\lspan{X}\lspan{Y}}(U,V) = \frac{p_{\rank(X)\rank(Y)}(r,s)}{\gco{T}{r}\gco{r}{s}}.
  \end{equation}
  Similarly, we can show that $p_{\lspan{X}}(U) = p_{\lspan{X}}(U')$
  for $U,U' \leq \ffield^T$ with $\dim(U)=\dim(U')=r$, which implies
  \begin{equation}\label{eq:u}
  p_{\lspan{X}}(U) = \frac{p_{\rank(X)}(r)}{\gco{T}{r}}.
 \end{equation}

Moreover, for $V \leq \ffield^T$ with $\dim(V)=s$,
\begin{IEEEeqnarray*}{rCl}
  p_{\lspan{Y}}(V)
  & = & \sum_{r\geq s} \sum_{U:V\leq U, \dim(U)=r}
   p_{\lspan{X}\lspan{Y}}(U,V)\\
   & = & \sum_{r\geq s}  \frac{p_{\rank(X)\rank(Y)}(r,s)}{\gco{T}{r}\gco{r}{s}} \sum_{U:V\leq U,\dim(U)=r}1 \\
   & = & \sum_{r\geq s} \frac{p_{\rank(X)\rank(Y)}(r,s)}{\gco{T}{r}\gco{r}{s}} \gcos{T}{r}\frac{\cmat{r}{s}}{\cmat{T}{s}} \IEEEyesnumber \label{eq:ksc}\\
   & = & \frac{p_{\rank(Y)}(s)}{\gco{T}{s}}, \IEEEyesnumber \label{eq:v}
\end{IEEEeqnarray*}
where \eqref{eq:ksc} is obtained by Lemma~\ref{lemma:c1}.

Substituting
\eqref{eq:uv}, \eqref{eq:u} and \eqref{eq:v} into
$\mutual(\lspan{X};\lspan{Y})$ completes the proof.
\end{IEEEproof}

\subsubsection{Optimal Uniform-Given-Row-Space Input Distribution for
  Subspace Coding}

Define
\begin{equation*}
  C_{\text{USS}} \triangleq \max_{p_{\lspan{X^\tr}}}[
    \locrate(\rank(X);\rank(Y)) + \mutual(\rank(X);\rank(Y))],
\end{equation*}
which is the maximum achievable rate of subspace coding using
uniform-given-row-space input distribution. Rewrite 
$p_{\lspan{X^\tr}}(U) = p_{\rank(X)}(\dim(U))
P_{\lspan{X^\tr}|\rank(X)}(U|\dim(U))$. By treating $p_{\rank(X)}$ and
$P_{\lspan{X^\tr}|\rank(X)}$ as variables, we can rewrite the above
maximization problem as
\begin{equation}
  \label{eq:opt10}
  \max_{p_{\rank(X)}}\max_{P_{\lspan{X^\tr}|\rank(X)}}
  [\mutual(\rank(X);\rank(Y)) + \locrate(\rank(X);\rank(Y))].
\end{equation}

Both $\mutual(\rank(X);\rank(Y))$ and $\locrate(\rank(X);\rank(Y))$
depend on $P_{\rank(Y)|\rank(X)}$. We have 
\begin{IEEEeqnarray*}{rCl}
  P_{\rank(Y)|\rank(X)}(s|r)
  & = & \sum_{U\in
    \Gr(r,\ffield^M)} P_{\rank(Y)|\lspan{X^\tr}}(s|U)P_{\lspan{X^\tr}|\rank(X)}(U|r),
\end{IEEEeqnarray*}
in which $P_{\rank(Y)|\lspan{X^\tr}}(s| U)$ is a function of $p_H$ and
is not related to $p_{\rank(X)}$ and $P_{\lspan{X^\tr}|\rank(X)}$
(cf. Lemma~\ref{lemma:cond}).
The formulation of
$\locrate(\rank(X);\rank(Y))$ can be rewritten as
\begin{IEEEeqnarray*}{rCl}
\locrate(\rank(X);\rank(Y))
  & = & \IEEEyesnumber
   \sum_r p_{\rank(X)}(r) \sum_{U \in \Gr(r,\ffield^M)}
   P_{\lspan{X^\tr}|\rank(X)} (U|r) R_{H,T}(U), \IEEEeqnarraynumspace \label{eq:cks}
\end{IEEEeqnarray*}
where
\begin{equation}\label{eq:g}
  R(U) = R_{H,T}(U) \triangleq \sum_s P_{\rank(Y)|\lspan{X^\tr}}(s|U) \log \frac{\cmat{T}{s}}{\cmat{\dim( U)}{s}}
\end{equation}
is only related to the distribution of $H$.
Note that $R(U)$ is the achievable rate of subspace
  coding for the uniform-given-row-space input with $p_{\langle X^\top\rangle}(U)=1$.

  For fixed $p_{\rank(X)}$, $\mutual(\rank(X);\rank(Y))$ is a convex
  function of $P_{\lspan{X^\tr}|\rank(X)}$, and
  $\locrate(\rank(X);\rank(Y))$ is a linear function of
  $P_{\lspan{X^\tr}|\rank(X)}$. For fixed
  $P_{\lspan{X^\tr}|\rank(X)}$, $\mutual(\rank(X);\rank(Y))$ is a
  concave function of $p_{\rank(X)}$, and
  $\locrate(\rank(X);\rank(Y))$ is a linear function of
  $p_{\rank(X)}$.  Therefore, $[\mutual(\rank(X);\rank(Y)) +
  \locrate(\rank(X);\rank(Y))]$ is not concave for $p_{\rank(X)}$ and
  $P_{\lspan{X^\tr}|\rank(X)}$.  The following lemma characterizes a
  special optimizer of \eqref{eq:opt10}.

\begin{lemma}\label{the:optssd}
  There exists a deterministic transition matrix
  $P_{\lspan{X^\tr}|\rank(X)}$ achieving $C_{\text{USS}}$.
\end{lemma}
\begin{IEEEproof}
  The proof is similar to the one of Lemma~\ref{the:beta}, and hence omitted.
\end{IEEEproof}

\subsubsection{Optimal Uniform-Given-Row-Space Input Distribution for
  Large $T$ and $q$}

We can further
narrow down the range to search an optimal uniform-given-row-space input
distribution when both $T$ and $q$ are large.
For a random matrix $H$, define
\begin{equation*}
  \rank^*(H) \triangleq \max\{r:\Pr\{\rank(H)=r\}>0\}.
\end{equation*}

\begin{theorem}\label{the:mmm}
  There exists $T_0$ and $R_0$ as functions of $M$ and the rank
  distribution of $H$, such that
  when $T\geq T_0$ and $(T-M)\log q \geq R_0$, $C_{\text{USS}}$ is achieved by the uniform-given-row-space
  input distribution with $\Pr\{\rank(X) \geq \rank^*(H)\} = 1$.
\end{theorem}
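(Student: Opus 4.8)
The plan is to reduce \eqref{eq:opt10} to a concave maximization over the input rank distribution, and then to show that placing any probability on ranks below $\rank^*(H)$ is wasteful once $T$ is large. By Lemma~\ref{the:optssd} I may assume the optimizing $\alpha$-type input uses, for each rank $r$ in its support, a single subspace $U_r\in\Gr(r,\ffield^M)$; since $\Gr(M,\ffield^M)=\{\ffield^M\}$, necessarily $U_M=\ffield^M$. Fixing such a choice of $U_r$ for every $r$ (finitely many possibilities), \eqref{eq:opt10} becomes $\max_{p}G(p)$ with $G(p)=\mutual(\rank(X);\rank(Y))+\sum_r p_r R(U_r)$, the maximum being over rank distributions $p=p_{\rank(X)}$, where the joint law of $(\rank(X),\rank(Y))$ is $p_r\,P_{\rank(Y)|\lspan{X^\tr}}(\cdot|U_r)$. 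Since $\mutual$ is concave in the input marginal for a fixed channel and $\sum_r p_r R(U_r)$ is linear in $p$, $G$ is concave on the simplex.

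The quantitative core is Lemma~\ref{lemma:food2}: for every $r<\rank^*(H)$ and every $U$ with $\dim U=r$ one has $R(\ffield^M)-R(U)>\Theta(T,r,H)\log q$. Since $\sum_{k>r}\Pr\{\rank(H)\geq k\}\geq \Pr\{\rank(H)=\rank^*(H)\}>0$ whenever $r<\rank^*(H)$, while $r(M-r)\leq M^2$ and $\log_q\cmatt{r}{r}>-2$, this yields the uniform bound $R(\ffield^M)-R(U)>\theta(T)\log q$ with $\theta(T):=(T-M)\Pr\{\rank(H)=\rank^*(H)\}-M^2-2$, which is positive and grows linearly in $T$ once $T\geq T_0$ and $(T-M)\log q$ is large. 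On the other hand $\mutual(\rank(X);\rank(Y))\leq\entropy(\rank(Y))\leq\log(\rank^*(H)+1)$, a constant independent of $T$.

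Now suppose an optimizer $p^*$ of the above form has $p^*_{r_0}>0$ for some $r_0<\rank^*(H)$. Writing $Q_r:=P_{\rank(Y)|\lspan{X^\tr}}(\cdot|U_r)$ and using $\partial_{p_r}\mutual=D(Q_r\,\|\,P_{\rank(Y)})-1/\ln2$, where $D(\cdot\,\|\,\cdot)$ is the Kullback--Leibler divergence, the first-order optimality of $p^*$ along the feasible direction $e_M-e_{r_0}$ gives $\partial_{p_M}G(p^*)-\partial_{p_{r_0}}G(p^*)\leq0$, i.e. $R(\ffield^M)-R(U_{r_0})\leq D(Q_{r_0}\,\|\,P_{\rank(Y)})-D(Q_M\,\|\,P_{\rank(Y)})\leq D(Q_{r_0}\,\|\,P_{\rank(Y)})\leq\log(1/p^*_{r_0})$, the last step because $P_{\rank(Y)}(s)\geq p^*_{r_0}Q_{r_0}(s)$. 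Combined with $R(\ffield^M)-R(U_{r_0})>\theta(T)\log q$ this forces $p^*_{r_0}<q^{-\theta(T)}$, so the total probability $\epsilon:=\Pr_{p^*}\{\rank(X)<\rank^*(H)\}$ is at most $\rank^*(H)\,q^{-\theta(T)}$. Relocating that entire mass to $\ffield^M$ produces $\tilde p$ with $\Pr_{\tilde p}\{\rank(X)\geq\rank^*(H)\}=1$; the $\locrate$-term changes by $\sum_{r<\rank^*(H)}p^*_r\bigl(R(\ffield^M)-R(U_r)\bigr)\geq0$, while the $\mutual$-term changes by at most a Fannes-type modulus evaluated at $\epsilon$, which for $\epsilon\leq\rank^*(H)\,q^{-\theta(T)}$ is $O\!\bigl(q^{-\theta(T)}\theta(T)\log q\bigr)$; choosing $T_0$ and $R_0$ large enough (depending only on $M$ and the rank distribution of $H$) makes $G(\tilde p)\geq G(p^*)$, so $\tilde p$ is also optimal and has the required property.

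The step I expect to be the main obstacle is controlling the change of $\mutual(\rank(X);\rank(Y))$ under the relocation: this function is not Lipschitz in the input distribution near the boundary of the simplex, so a direct continuity bound costs a term of order $\epsilon\log(1/\epsilon)$ rather than $O(\epsilon)$, which a priori need not be dominated by the linear-in-$\epsilon$ saving on the $\locrate$-term. The resolution is precisely the preliminary estimate $p^*_{r_0}<q^{-\theta(T)}$ coming from Lemma~\ref{lemma:food2} and the optimality condition: it makes $\epsilon$ so small that even the $\epsilon\log(1/\epsilon)$ loss is of strictly smaller order than $\epsilon\,\theta(T)\log q$. Beyond that, the remaining work is bookkeeping---fixing explicit $T_0$ and $R_0$ so that every estimate holds simultaneously, and disposing of degenerate subcases (e.g. when rank $M$ is not already in the support of $p^*$, in which case the directional derivative toward $\ffield^M$ is $+\infty$ and optimality of $p^*$ is contradicted outright).
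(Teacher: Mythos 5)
Your approach departs genuinely from the paper's. The paper also restricts to single-subspace-per-rank optimizers via Lemma~\ref{the:optssd} and relocates the mass below $\rank^*(H)$ to rank $M$, but it then compares $\mutual(\lspan{X};\lspan{Y})|_{p'}$ directly to $C_{\subs}$ and simply discards $\mutual(\rank(X);\rank(Y))|_{p'}\geq 0$, bounding $\mutual(\rank(X);\rank(Y))|_{p}$ by the constant $\log(\min\{T,M,N\}+1)$ --- no Fannes bound and no first-order optimality ever enter. What you add --- the derivative computation $\partial_{p_r}\mutual = D(Q_r\|P_{\rank(Y)}) - 1/\ln 2$, the KKT-type inequality along $e_M - e_{r_0}$, and the resulting tail estimate $p^*_{r_0}<q^{-\theta(T)}$ --- is correct as far as it goes and is a sharper piece of information than anything the paper extracts.

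However, your stated resolution of the Fannes obstacle is backwards, and this is a genuine gap. You assert that $p^*_{r_0}<q^{-\theta(T)}$ ``makes $\epsilon$ so small that even the $\epsilon\log(1/\epsilon)$ loss is of strictly smaller order than $\epsilon\,\theta(T)\log q$.'' But the comparison
\begin{equation*}
\frac{\epsilon\log(1/\epsilon)}{\epsilon\,\theta(T)\log q} \;=\; \frac{\log(1/\epsilon)}{\theta(T)\log q}
\end{equation*}
shows that \emph{shrinking} $\epsilon$ makes the loss \emph{larger} relative to the gain, not smaller: from $\epsilon<\rank^*(H)\,q^{-\theta(T)}$ one gets $\log(1/\epsilon) > \theta(T)\log q - \log\rank^*(H)$, so the Fannes loss per unit relocated mass is already at least on the same order as the $\locrate$ gain per unit mass, and since the first-order condition provides no \emph{lower} bound on $\epsilon$, the ratio is in fact unbounded above as $\epsilon\to 0^+$. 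Concretely, $G(\tilde p)-G(p^*)\gtrsim \epsilon\bigl[\theta(T)\log q - C\log(1/\epsilon)-C'\bigr]$, and the bracket cannot be made nonnegative for all admissible $\epsilon$ by any choice of $T_0,R_0$, because $\log(1/\epsilon)$ is not controlled from above. So the relocation step as you argue it does not close. (The paper sidesteps Fannes entirely --- it never compares $\mutual|_{\tilde p}$ to $\mutual|_{p^*}$ --- but its coarser bound has the mirror-image problem of requiring $\epsilon$ \emph{not} too small; you should not model your fix on it. To make this rigorous one must combine the two estimates, e.g.\ show that the interval of $\epsilon$ compatible simultaneously with optimality and with non-improvement is empty once $T$ is large, or avoid relocation by arguing directly from the KKT equality $\partial_{p_M}G=\partial_{p_{r_0}}G$ that $p^*_{r_0}>0$ is impossible.)

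A minor secondary point: the claim that when rank $M$ is outside the support the directional derivative toward $\ffield^M$ is automatically $+\infty$ is not quite right; $\partial_{p_M}\mutual=D(Q_M\|P_{\rank(Y)})-1/\ln 2$ is finite whenever $\operatorname{supp}(Q_M)\subseteq\operatorname{supp}(P_{\rank(Y)})$, which can happen. That subcase needs the same finite-derivative treatment as the others rather than being dismissed outright.
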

\begin{IEEEproof}%
  By Lemma~\ref{the:optssd}, there exists a uniform-given-row-space input
  achieving $C_{\text{USS}}$ such that
  $p_{\lspan{X^\tr}}(U(r))=p_{\rank(X)}(r)$ for all $r\leq
  \min\{M,T\}$, where $\dim(U(r))=r$.  In other words, for $r$ such
  that $p_{\rank(X)}(r)>0$, $P_{\lspan{X^\tr}|\rank(X)}(U(r)|r)=1$.
  We show by contradiction that $\Pr\{\rank(X) \geq \rank^*(H)\} = 1$
  for sufficiently large $T$.

  Consider an input distribution with $\Pr\{\rank(X) < \rank^*(H)\} > 0$. By Theorem~\ref{the:diq2} and
  \eqref{eq:cks},
  \begin{equation*}
    C_{\text{USS}} = \mutual(\rank(X);\rank(Y)) + \sum_r p_{\rank(X)}(r) R(U(r)).
  \end{equation*}
  Define a uniform-given-row-space input distribution $p'_X$ with $p'_{\lspan{X^\tr}}(U(r))=p'_{\rank(X)}(r)=p_{\rank(X)}(r)$ for $\rank^*(H)\leq r <M$ and $p'_{\lspan{X^\tr}}(U(M))=p'_{\rank(X)}(M)=p_{\rank(X)}(M)+\sum_{k<\rank^*(H)}p_{\rank(X)}(k)$.
  We have that
  \begin{IEEEeqnarray*}{rCl}
    \mutual(\lspan{X};\lspan{Y})|_{p_X'}-C_{\text{USS}}
    & \geq &
    \sum_{r=0}^{\rank^*(H)-1} p_{\rank(X)}(r)[R(\ffield^M)-R(U(r))] \\
    & & - \mutual(\rank(X);\rank(Y))|_{p_X} \\
    & > &
    \sum_{r=0}^{\rank^*(H)-1} p_{\rank(X)}(r)\Theta(T,r,H)\log  q \\
    & &  \IEEEyesnumber \label{eq:ckss}
    - \mutual(\rank(X);\rank(Y))|_{p_X},
  \end{IEEEeqnarray*}
  where the last inequality follows from Lemma~\ref{lemma:food2} in  Appendix~\ref{sec:prooffood2} with
  \begin{IEEEeqnarray*}{rCl}
    \Theta(T,r,H) & = & (T-M) \sum_{k: k>r} \Pr\{\rank(H)
    \geq k\} \\
    & &
    -  r(M-r) + \log_q \cmatt{r}{r}.
  \end{IEEEeqnarray*}
  The quantity $\Theta(T,r,H)$ is a lower bound on 
  $(R(\ffield^M)-R(U))/\log q$ with $\dim(U)=r$ and it is positive when $T$ is
  sufficiently large.

  Fix a sufficiently large $T$ such that $\Theta(T,r,H)>0$ for
  $r<\rank^*(H)$. Since $\Pr\{\rank(X) < \rank^*(H)\} > 0$ by
  assumption,   we see that when
  $(T-M)\log q$ is sufficiently large, the RHS of \eqref{eq:ckss} becomes
  positive, a contradiction to $C_{\subs}(H,T)\geq
  \mutual(\lspan{X};\lspan{Y})$ for any input distribution.
\end{IEEEproof}

\subsubsection{Constant-Rank Uniform-Given-Row-Space Input Distributions}
\label{sec:constr}

An input
distribution with $p_{\rank(X)}(r) = 1$ is called a
\emph{constant-rank or rank-$r$ input distribution}.  Note that for a
subspace degradation, using rank-$r$ input is corresponding to using
$r$-dimensional subspace coding.

For a constant-rank uniform-given-row-space input distribution, we always have
$\mutual(\rank(X);\rank(Y))=0$. So, together with \eqref{eq:i8gaqif}, an optimal
contant-rank uniform-given-row-space input distribution for subspace coding can
be found by maximizing $\locrate(\rank(X);\rank(Y))$.
Define
\begin{IEEEeqnarray*}{rCl}
  C_{\text{CUSS}} 
  & \triangleq & 
  \max_{p_X:\text{constant-rank uniform-given-row-space}}
  \locrate(\rank(X);\rank(Y)) \\
  & = & \IEEEyesnumber \label{eq:dds}
  \max_{U\in \Pj(\min\{M,T\},\ffield^M)}R(U),
\end{IEEEeqnarray*}
where \eqref{eq:dds} follows from \eqref{eq:cks}. 

Since $\mutual(\rank(X);\rank(Y)) \leq
\log (\min\{T,M,N\}+1)$, the loss of rate by using constant-rank uniform-given-row-space
input distribution is small when
\begin{equation}
  \label{eq:3}
  C_{\text{CUSS}}  \gg \log (\min\{T,M,N\}+1).
\end{equation}
By Lemma~\ref{lemma:c5}, we know that when $T>M$,
\begin{equation*}
  C_{\text{CUSS}} > (T-M)\E[\rank(H)]\log q.
\end{equation*}
So when $(T-M)\E[\rank(H)]\log q \gg \log (\min\{T,M,N\}+1)$, it is
reasonable to use constant-dimensional subspace coding.

\begin{example}
  Consider $T-1=M=N=64$, $\E[\rank(H)]=32$, and $q=256$. We can
  calculate that $\locrate(\rank(X);\rank(Y)) > 256$, while $\log
  (\min\{T,M,N\}+1) \approx 5$. So the loss of rate by using
  constant-rank uniform-given-row-space input distribution is small.
\end{example}

The following corollary is a direct result of Theorem~\ref{the:mmm} with
the condition that $\rank^*(H)=M$.

\begin{corollary}\label{cor:c1}
  For a transfer matrix $H$ with $\rank^*(H)=M$,
  when both $T$ and $(T-M)\log q$ are sufficiently large, the optimal value of \eqref{eq:opt10} is achieved by the uniform-given-row-space
  input with $p_{\rank(X)}(M) = 1$, and the optimal value is
  $R(\ffield^M) = \sum_s p_{\rank(H)}(s)
  \log \frac{\cmat{T}{s}}{\cmat{M}{s}}$.
\end{corollary}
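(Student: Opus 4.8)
The plan is to obtain Corollary~\ref{cor:c1} as an immediate specialization of Theorem~\ref{the:mmm}. First I would invoke Theorem~\ref{the:mmm}: since $\rank^*(H)$ is determined by the rank distribution of $H$, there exist $T_0$ and $R_0$ (depending only on $M$ and $p_{\rank(H)}$) so that whenever $T\geq T_0$ and $(T-M)\log q \geq R_0$, some $\alpha$-type input attaining the optimal value of \eqref{eq:opt10} satisfies $\Pr\{\rank(X)\geq \rank^*(H)\}=1$. I then use the hypothesis $\rank^*(H)=M$ together with the trivial bound $\rank(\bX)\leq M$ for every $\bX\in\ffield^{T\times M}$ to conclude $\Pr\{\rank(X)=M\}=1$; that is, a constant-rank-$M$ $\alpha$-type input is optimal.

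Next I would evaluate the objective of \eqref{eq:opt10} at such an input. Because $\rank(X)$ is deterministic, $\mutual(\rank(X);\rank(Y))=0$, so by \eqref{eq:i8gaqif} the objective reduces to $\locrate(\rank(X);\rank(Y))$. Moreover $\rank(X)=M$ forces $\lspan{X^\tr}=\ffield^M$ almost surely, hence $P_{\lspan{X^\tr}|\rank(X)}(\ffield^M|M)=1$, and substituting $p_{\rank(X)}(M)=1$ and this degenerate conditional into the rewriting \eqref{eq:cks} of $\locrate$ yields exactly $R(\ffield^M)$ as defined in \eqref{eq:g}.

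Finally I would simplify $R(\ffield^M)$. When $\lspan{X^\tr}=\ffield^M$ the matrix $X$ has full column rank, so left-multiplication by $X$ preserves rank and $\rank(Y)=\rank(XH)=\rank(H)$; therefore $P_{\rank(Y)|\lspan{X^\tr}}(s\mid\ffield^M)=p_{\rank(H)}(s)$ and $R(\ffield^M)=\sum_s p_{\rank(H)}(s)\log\frac{\cmat{T}{s}}{\cmat{M}{s}}$, the claimed optimal value. There is no genuine obstacle here beyond bookkeeping; the only points needing care are the passage from $\Pr\{\rank(X)\geq M\}=1$ to $\Pr\{\rank(X)=M\}=1$, which uses the column-dimension constraint $\rank(X)\leq M$, and the observation that full column rank of $X$ makes $\rank(Y)$ a copy of $\rank(H)$.
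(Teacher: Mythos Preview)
Your proposal is correct and follows precisely the approach the paper intends: the paper itself offers no formal proof, merely stating that the corollary is ``a direct result of Theorem~\ref{the:mmm} with the condition that $\rank^*(H)=M$,'' and you have carefully filled in all the bookkeeping --- the collapse of $\Pr\{\rank(X)\geq M\}=1$ to $\Pr\{\rank(X)=M\}=1$, the vanishing of $\mutual(\rank(X);\rank(Y))$, and the identification $P_{\rank(Y)|\lspan{X^\tr}}(s\mid\ffield^M)=p_{\rank(H)}(s)$ via the full-column-rank argument.
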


\subsection{LOCs with a Unique Subspace Degradation}
\label{sec:usd}

Now let us turn to LOCs with a unique subspace
degradation, i.e., $P_{\lspan{Y}|\lspan{X}}$ is invariant with respect to
$P_{X|\lspan{X}}$. For such LOCs, we do not have the issue of
finding an optimal subspace degradation---a subspace $U$ can be
converted to any matrix $\bX$ with $\lspan{\bX} = U$. 
This property makes it easier to apply subspace coding on
LOCs with a unique subspace degradation. As we will further show in
this paper, all LOCs studied in existing literature have a unique
subspace degradation, and some results previous obtained for
special cases are actually shared by
all LOCs with a unique subspace degradation.

By definition, a LOC has a unique
subspace degradation if and only if for any $V$,
\begin{equation}\label{eq:id90sl}
    P_{\lspan{Y}|X}(V|\bX) = P_{\lspan{Y}|X}(V|\bX')\ \text{whenever}\ \lspan{\bX'}=\lspan{\bX}.
\end{equation}
If $H$ is uniform-given-row-space, then the transition matrix of
$\loc(H,T)$ satisfies \eqref{eq:id90sl}, and hence $\loc(H,T)$ has a
unique subspace degradation.  Therefore, the LOCs studied in
\cite{nobrega11, nobrega11a} with uniform-given-rank transfer matrices
have a unique subspace degradation.  Since a row-space-symmetric LOC
has a uniform-given-row-space transfer matrix when $T\geq M$ (see
Lemma~\ref{lemma:new1}), we have the following lemma.
\begin{lemma}\label{lemma:tmrss}
  When $T\geq M$, a row-space-symmetric LOC has a unique subspace degradation.
\end{lemma}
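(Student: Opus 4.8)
The plan is to prove the lemma by chaining together two implications whose ingredients are already sketched in the text preceding it: first, that $T\geq M$ together with row-space symmetry forces the transfer-matrix condition \eqref{eq:col-sym-transf} on $p_H$ (which is literally the same displayed identity as \eqref{eq:row-sym-transf}, the key observation that makes the argument short); and second, that \eqref{eq:col-sym-transf} forces the unique-subspace-degradation criterion \eqref{eq:id90sl}, i.e. that $P_{\lspan{Y}|X}(V|\bX)$ depends on $\bX$ only through $\lspan{\bX}$.

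For the first implication I would use that when $T\geq M$ there exist input matrices $\bX\in\ffield^{T\times M}$ of rank $M$; fix one such $\bX$ (so $\bX$ has full column rank and $\bX^\tr$ is onto $\ffield^M$). Given any $\bH,\bH'$ with $\lspan{\bH^\tr}=\lspan{\bH'^\tr}$, set $\bY=\bX\bH$ and $\bY'=\bX\bH'$. Then $\lspan{\bY}\subset\lspan{\bX}$ and $\lspan{\bY'}\subset\lspan{\bX}$ hold automatically, while $\lspan{\bY^\tr}=\lspan{\bH^\tr\bX^\tr}=\lspan{\bH^\tr}$ since $\bX^\tr$ is surjective, and likewise $\lspan{\bY'^\tr}=\lspan{\bH'^\tr}$, so $\lspan{\bY^\tr}=\lspan{\bY'^\tr}$ by hypothesis. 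Row-space symmetry (applied with $\bX'=\bX$) then gives $P_{Y|X}(\bY|\bX)=P_{Y|X}(\bY'|\bX)$, and since $\bX$ has full column rank, Lemma~\ref{the:symm} rewrites the two sides as $p_H(\bH)$ and $p_H(\bH')$; hence \eqref{eq:col-sym-transf} holds. This is the only place $T\geq M$ enters — when $T<M$ no rank-$M$ input is available, which is exactly why a LOC such as that in Example~\ref{ex:rss} can be row-space symmetric without satisfying \eqref{eq:col-sym-transf}.

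For the second implication I would first restate \eqref{eq:col-sym-transf} as invariance of the law of $H$ under left multiplication by invertible matrices: for $\bQ\in\Fr(\ffield^{M\times M})$ and any $\bH$, $\lspan{(\bQ\bH)^\tr}=\lspan{\bH^\tr\bQ^\tr}=\lspan{\bH^\tr}$, so \eqref{eq:col-sym-transf} gives $p_H(\bQ\bH)=p_H(\bH)$ for all $\bH$, i.e. $\bQ H$ and $H$ are equal in distribution. Then, for $\bX,\bX'\in\ffield^{T\times M}$ with $\lspan{\bX}=\lspan{\bX'}$, I would produce $\bQ\in\Fr(\ffield^{M\times M})$ with $\bX'=\bX\bQ$ by a basis-extension argument (map a complement $C'$ of $\ker\bX'$ isomorphically onto a complement $C$ of $\ker\bX$ via $(\bX|_C)^{-1}\bX'|_{C'}$, and $\ker\bX'$ isomorphically onto $\ker\bX$; this is the column-space analogue of the construction of $\Phi$ used in the proof of Theorem~\ref{the:diq2}). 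Consequently $\bX'H=\bX(\bQ H)$ has the same law as $\bX H$, so $\lspan{\bX'H}$ and $\lspan{\bX H}$ have the same distribution and $P_{\lspan{Y}|X}(V|\bX')=P_{\lspan{Y}|X}(V|\bX)$ for every $V$ — precisely \eqref{eq:id90sl}. Combining with the first implication, $\loc(H,T)$ has a unique subspace degradation, completing the proof.

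The only steps needing care are the two elementary linear-algebra facts — $\lspan{\bH^\tr\bX^\tr}=\lspan{\bH^\tr}$ when $\bX^\tr$ is surjective, and the existence of an invertible $\bQ$ with $\bX'=\bX\bQ$ whenever $\lspan{\bX}=\lspan{\bX'}$. The main obstacle, such as it is, is the rank-deficient case $\rank\bX<M$ of the latter, which requires the explicit basis/complement construction above rather than a one-line change-of-basis; everything else is bookkeeping, and indeed one can shortcut the entire lemma by simply citing the two already-stated facts that \eqref{eq:col-sym-transf} implies \eqref{eq:id90sl} and that, for $T\geq M$, row-space symmetry implies \eqref{eq:row-sym-transf}$\,=\,$\eqref{eq:col-sym-transf}.
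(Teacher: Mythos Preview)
Your proposal is correct and follows precisely the route the paper itself indicates in the paragraph immediately preceding the lemma: row-space symmetry with $T\geq M$ forces \eqref{eq:row-sym-transf}${}={}$\eqref{eq:col-sym-transf}, which in turn implies \eqref{eq:id90sl}. You have simply filled in the details the paper leaves to the reader (the rank-$M$ input trick for the first step and the invertible $\bQ$ with $\bX'=\bX\bQ$ for the second), so there is nothing substantively different to compare.
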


When $T<M$, a row-space-symmetric LOC may not have a unique subspace
degradation.

\begin{example} 
Consider $\loc(H,1)$.
By \eqref{eq:id90sl}, $\loc(H,1)$ has a unique subspace
degradation if and only if for any nonzero $x_1, x_2\in
\ffield^{1\times M}$,
\begin{IEEEeqnarray}{rCcCl}
  P_{\lspan{Y}|X}(\lspan{0}|x_1) & = & P_{\lspan{Y}|X}(\lspan{0}|x_2)
  \label{eq:row1dsi}\\
  P_{\lspan{Y}|X}(\lspan{1}|x_1) & = & P_{\lspan{Y}|X}(\lspan{1}|x_2). \label{eq:row2dsi}
\end{IEEEeqnarray}
However, \eqref{eq:row1dsi} implies \eqref{eq:row2dsi} since
\begin{equation*}
  P_{\lspan{Y}|X}(\lspan{1}|x) = 1 -
  P_{\lspan{Y}|X}(\lspan{0}|x).
\end{equation*}
The equalities in \eqref{eq:row1dsi} give linear constraints on
the distribution of $H$, from which we can find the set of $H$ such
that $\loc(H,1)$ has a unique subspace degradation.
\end{example}

More examples of LOCs with a unique subspace degradation will be
provided in Section~\ref{sec:34}.

\begin{lemma}\label{lemma:uniquesd}
A LOC has a unique subspace degradation if and only if
\begin{equation}\label{eq:id90sld}
  P_{\lspan{Y}|X}(V|\bX) = P_{\lspan{Y}|X}(V'|\bX')
\end{equation}
whenever $\dim(V)=\dim(V')$, $\rank(\bX)=\rank(\bX')$, $V\leq
\lspan{\bX}$ and $V'\leq\lspan{\bX'}$.
\end{lemma}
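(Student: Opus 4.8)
The plan is to derive \eqref{eq:id90sld} from the weaker characterization \eqref{eq:id90sl} already established in the text, using the symmetry of LOCs. The ``if'' direction is immediate: if \eqref{eq:id90sld} holds, then taking $V=V'$ and any $\bX,\bX'$ with $\lspan{\bX}=\lspan{\bX'}$ (so that automatically $\rank(\bX)=\rank(\bX')$, and $V\subset\lspan{\bX}$, $V\subset\lspan{\bX'}$ for every $V\subset\lspan{\bX}$) recovers \eqref{eq:id90sl}, hence the LOC has a unique subspace degradation.

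For the ``only if'' direction, I would first record a symmetry fact: for every $\Phi\in\Fr(\ffield^{T\times T})$ and every $\bX$, $V$,
\[
  P_{\lspan{Y}|X}(\Phi V\,|\,\Phi\bX) = P_{\lspan{Y}|X}(V\,|\,\bX).
\]
This follows by writing the left side as $\sum_{\bY:\lspan{\bY}=\Phi V}P_{Y|X}(\bY|\Phi\bX)$, substituting $\bY=\Phi\bY'$ (a bijection onto $\{\bY':\lspan{\bY'}=V\}$ since $\Phi$ is invertible, exactly as in the proof of Theorem~\ref{the:diq2}), and applying Lemma~\ref{the:symm} with full-column-rank matrix $\Phi$ to get $P_{Y|X}(\Phi\bY'|\Phi\bX)=P_{Y|X}(\bY'|\bX)$. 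Now take $V,V',\bX,\bX'$ as in the hypothesis of \eqref{eq:id90sld} and set $U=\lspan{\bX}$, $U'=\lspan{\bX'}$; then $\dim U=\rank(\bX)=\rank(\bX')=\dim U'$, $\dim V=\dim V'$, $V\subset U$, $V'\subset U'$, so, exactly as in the proof of Theorem~\ref{the:diq2}, there is $\Phi\in\Fr(\ffield^{T\times T})$ with $\Phi V=V'$ and $\Phi U=U'$. Since $\lspan{\Phi\bX}=\Phi U=U'=\lspan{\bX'}$, applying \eqref{eq:id90sl} to the inputs $\Phi\bX$ and $\bX'$ with subspace $V'$, together with the symmetry fact, yields
\[
  P_{\lspan{Y}|X}(V|\bX) = P_{\lspan{Y}|X}(\Phi V|\Phi\bX) = P_{\lspan{Y}|X}(V'|\Phi\bX) = P_{\lspan{Y}|X}(V'|\bX'),
\]
which is \eqref{eq:id90sld}.

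I do not expect a genuine obstacle here: the construction of $\Phi$ sending the nested pair $(V,U)$ onto $(V',U')$ is essentially already carried out inside the proof of Theorem~\ref{the:diq2} and can simply be cited, and the symmetry fact is a one-line consequence of Lemma~\ref{the:symm}. The only point requiring a little care is bookkeeping: $V,V',U,U'$ are all subspaces of $\ffield^T$ (column spaces of the input and output matrices), so the relevant invertible matrix $\Phi$ is $T\times T$ and acts on $\ffield^T$, matching the setting of Lemma~\ref{the:symm}.
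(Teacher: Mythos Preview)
Your proof is correct but follows a different route from the paper's. For the ``only if'' direction, the paper writes $\bX=[\mathbf{B}_0\ \mathbf{B}_1]\mathbf{D}$ with $\lspan{\mathbf{B}_0}=V$ and $\mathbf{D}$ a full-row-rank $\rank(\bX)\times M$ matrix, and then uses Lemma~\ref{the:symm} to obtain the explicit formula
\[
  P_{\lspan{Y}|X}(V|\bX)=\sum_{\mathbf{E}\in\Fr(\ffield^{\dim(V)\times N})}\Pr\Bigl\{\mathbf{D}H=\begin{bmatrix}\mathbf{E}\\\mathbf{0}\end{bmatrix}\Bigr\};
\]
invoking \eqref{eq:id90sl} then shows this value does not depend on which full-row-rank $\mathbf{D}$ is used, hence depends only on $\rank(\bX)$ and $\dim(V)$. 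You instead exploit the $\Fr(\ffield^{T\times T})$-action on nested column-space pairs directly: pick $\Phi$ sending $(V,\lspan{\bX})$ to $(V',\lspan{\bX'})$ and combine the one-line symmetry $P_{\lspan{Y}|X}(\Phi V|\Phi\bX)=P_{\lspan{Y}|X}(V|\bX)$ with \eqref{eq:id90sl}. Your argument is cleaner and recycles the construction already carried out in the proof of Theorem~\ref{the:diq2}; the paper's decomposition is more computational but has the side benefit of producing the explicit expression \eqref{eq:yam3}, which is reused later in the proof of Theorem~\ref{the:uniformeq}.
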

\begin{IEEEproof}
  The sufficient condition holds since \eqref{eq:id90sld}
  implies \eqref{eq:id90sl}. We prove the necessary condition as follows.
  Fix a full column-rank matrix
  $\mathbf{B}_0$ such that $\lspan{\mathbf{B}_0} = V$. Since
  $V\leq \lspan{\bX}$, we can find full rank matrix $\mathbf{B}_1$ and $\mathbf{D}$ such that
  $[\mathbf{B}_0 \mathbf{B}_1]\mathbf{D} = \bX$. Therefore,
  \begin{IEEEeqnarray*}{rCl}
    P_{\lspan{Y}|{X}}(V|\bX)
    & = &
    P_{\lspan{Y}|X}\left(V|[\mathbf{B}_0 \mathbf{B}_1]
    \mathbf{D}\right) \\
    & = &
    \sum_{\bY:\lspan{\bY}=V} P_{Y|\lspan{X}}\left(\bY |[\mathbf{B}_0 \mathbf{B}_1]
    \mathbf{D}\right)\\
    & = &
    \sum_{\mathbf{E}\in \Fr(\ffield^{\dim(V)\times N})}
    P_{Y|X}\left([\mathbf{B}_0 \mathbf{B}_1] \begin{bmatrix} \mathbf{E} \\
      \mathbf{0} \end{bmatrix} \Bigg|[\mathbf{B}_0 \mathbf{B}_1]
    \mathbf{D}\right) \\
    & = & \IEEEyesnumber \label{eq:yam3}
     \sum_{\mathbf{E}\in \Fr(\ffield^{\dim(V)\times N})}
     \Pr\left\{   \mathbf{D} H = \begin{bmatrix} \mathbf{E} \\
      \mathbf{0} \end{bmatrix} \right\},
  \end{IEEEeqnarray*}
  where   \eqref{eq:yam3} follows from Lemma~\ref{the:symm}.
  If \eqref{eq:id90sl} holds, then \eqref{eq:yam3} holds for any full row-rank
  $\rank(\bX)\times M$ matrix $\mathbf{D}$, and hence
  \eqref{eq:id90sld} holds.
\end{IEEEproof}

Recall the definition of uniform-given-dimension distributions over
$\Pj(\ffield^T)$ in Definition~\ref{def:ugd}. 

\begin{theorem}\label{the:uniquesd} 
  For a LOC with a unique subspace degradation, the capacity of the
  subspace degradation can be achieved by a uniform-given-dimension distribution, and 
\begin{IEEEeqnarray*}{rCl}
    C_{\mathrm{SS}} 
    & = & \IEEEyesnumber \label{eq:opt133}
    \max_{p_{\rank(X)}} [\locrate(\rank(X);\rank(Y)) + \mutual(\rank(X);\rank(Y))].
\end{IEEEeqnarray*}
\end{theorem}
\begin{IEEEproof}
For a LOC with a unique subspace degradation,
$P_{\lspan{Y}|\lspan{X}}$ is well defined without specifying
$p_{X|\lspan{X}}$. So considering $p_{\lspan{X}}$ is sufficient for $\mutual(\lspan{X};\lspan{Y})$.
We first show that there exists a uniform-given-dimension input
distribution that maximizes $I(\lspan{X};\lspan{Y})$.

  Fix a LOC with a unique subspace degradation.
  Let $p$ be a distribution over $\Pj(\ffield^T)$ achieving the
  capacity of the subspace degradation, i.e., $p$ achieves $C_\subs$.
  For $\Phi\in \Fr(\ffield^{T\times T})$, define $p^{\Phi}$ as
  $p^{\Phi}(U) = p(\Phi U)$, where $\Phi U$ is defined
  in \eqref{eq:msm}.
  First $p^{\Phi}$ is a PMF because $0\leq p^{\Phi}(U) =p(\Phi U) \leq 1$
  and $\sum_{U\in \Pj(\ffield^{T})} p^{\Phi}(U) = 1$.

  We show that $p^{\Phi}$ also achieves the capacity. For the
  simplicity of the notations, we write $p' = p^{\Phi}$.
  Let $p_{\lspan{Y}}$ and $p_{\lspan{Y}}'$ be the PMF of $\lspan{Y}$ when the input distributions
  are $p$ and $p'$, respectively. We have
  \begin{IEEEeqnarray*}{rCl}
  p'_{\lspan{Y}}(V)
	& = &
	\sum_{U \in \Pj(\ffield^{T}):V\leq U} p'(U) P_{\lspan{Y}|\lspan{X}}(V|U) \\
	& = & \IEEEyesnumber \label{eq:osaa-1}
	\sum_{U \in \Pj(\ffield^{T}):V\leq U} p(\Phi U)
        P_{\lspan{Y}|\lspan{X}}(\Phi V|\Phi U) \\
	& = & %
        \sum_{U' \in \Pj(\ffield^{T}):\Phi V\leq U'} p(U')
        P_{\lspan{Y}|\lspan{X}}(\Phi V|U') \\
	& = &
       p_{\lspan{Y}}(\Phi V),
  \end{IEEEeqnarray*}
  where \eqref{eq:osaa-1} follows from $p'(U)=p(\Phi U)$ and
  Lemma~\ref{lemma:uniquesd}. Therefore,
  \begin{IEEEeqnarray*}{rCl}
    \mutual(\lspan{X};\lspan{Y})|_{p'}
	& = &
	\sum_{U \in \Pj(\ffield^{T})}p'(U) \sum_{V \in
          \Pj(\ffield^{T}): V\leq U} P(V|U)
  \log  \frac{P(V|U)}{p'_{\lspan{Y}}(V)}  \\
	& = & %
	\sum_{U \in \Pj(\ffield^{T})}p(\Phi U) \sum_{V \in
          \Pj(\ffield^{T}): V\leq U} P(\Phi V |\Phi U) \\
        & & \times \log
        \frac{P(\Phi V|\Phi U)}{p(\Phi V)} \\
	& = &
	\sum_{U' \in \Pj(\ffield^{T})}p(U') \sum_{V' \in
          \Pj(\ffield^{T}): V' \leq U'} P(V'|U') \log
        \frac{P(V'|U')}{p(V')} \\
	& = &
       \mutual(\lspan{X};\lspan{Y})|_{p},
 \end{IEEEeqnarray*}
 which implies that $p'$ also achieves the subspace coding capacity.

 Define $p^*$ on $\Pj(\ffield^T)$ as
 \begin{equation*}
   p^*(U) = \frac{1}{|\Fr(\ffield^{T\times T})|}\sum_{\Phi\in \Fr(\ffield^{T\times T})}
 p^{\Phi}(U).
 \end{equation*}
 Note that $p^*$ is uniform-given-dimension.
 Since mutual information is a concave function of the input
 distribution \cite{gallager},
 \begin{equation*}
    \mutual(\lspan{X};\lspan{Y})|_{p^*} \geq \frac{1}{|\Fr(\ffield^{T\times T})|}\sum_{\Phi\in \Fr(\ffield^{T\times T})}
    \mutual(\lspan{X};\lspan{Y})|_{p^{\Phi}}.
 \end{equation*}
Thus, $p^*$ is also an optimal input distribution for the subspace channel.

Note that for a uniform-given-dimension LOC,
\begin{equation*}
  p_{\lspan{X}}(V)= \frac{p_{\rank(X)}(\dim(V))}{\gcos{T}{\dim(V)}}.
\end{equation*}
So $C_{\subs}$ can be found by only optimizing over the
input rank distribution $p_{\rank(X)}$.

If $X$ is a uniform-given-row-space distribution, then $\lspan{X}$ is
uniform-given-dimension.  For any uniform-given-dimension distribution
$p$ on $\Pj(\ffield^T)$ we can find a uniform-given-row-space
distribution $p'$ on $\ffield^{T\times M}$ such that
$p(U)=\sum_{\bX:\lspan{\bX}=U}p'(\bX)$.  Hence, by
Theorem~\ref{the:diq2} and the fact that a uniform-given-row-space input distribution $p_X$ can be
determined by $p_{\lspan{X^\tr}}$, we get
\begin{IEEEeqnarray*}{rCl}
  C_{\subs}
     & = & 
     \max_{p_{\lspan{X^\tr}}}[\locrate(\rank(X);\rank(Y)) + \mutual(\rank(X);\rank(Y))].
\end{IEEEeqnarray*}

Fix $U, U' \leq \ffield^M$ with $\dim(U)=\dim(U')$.
Find $\bX_U$ and $\bX_{U'}$ with $\lspan{\bX_U}=\lspan{\bX_{U'}}$,
$\lspan{\bX_U^\tr}=U$ and $\lspan{\bX_{U'}^\tr}=U'$.
By Lemma~\ref{lemma:cond},
$P_{\rank(Y)|\lspan{X^\tr}}(s|U) = P_{\rank(Y)|X}(s|\bX_U)$ and
$P_{\rank(Y)|\lspan{X^\tr}}(s|U') = P_{\rank(Y)|X}(s|\bX_{U'})$.
Further by Lemma~\ref{lemma:uniquesd}, 
\begin{IEEEeqnarray*}{rCl}
  P_{\rank(Y)|X}(s|\bX_U) & = &
\sum_{V \in \Gr(s, \lspan{\bX_U})} P_{\lspan{V}|X}(V|\bX_U) \\
 & = &
\sum_{V \in \Gr(s, \lspan{\bX_{U'}})} P_{\lspan{V}|X}(V|\bX_{U'}) \\
 & = & P_{\rank(Y)|X}(s|\bX_{U'}).
\end{IEEEeqnarray*}
Therefore,   $P_{\rank(Y)|\lspan{X^\tr}}(s|U)=P_{\rank(Y)|\lspan{X^\tr}}(s|U')$.
Hence, $P_{\rank(Y)|\rank(X)}$ only depends on the
distribution of $H$, and hence $p_{\rank(X)\rank(Y)}$ depends on
${\lspan{X^\tr}}$ only through  ${\rank(X)}$.
The proof is completed.
\end{IEEEproof}

The above theorem implies that input distributions $p_X$ with $p_{\lspan{X}}$
being uniform-given-dimension achieve the subspace coding capacity for
LOCs with a unique subspace degradation. 
Since only the input rank affects the subspace coding capacity, it has
no penalty if we only consider uniform-given-rank input distributions
for subspace coding.

Now, consider the computation of $C_\subs$ for LOCs with a unique
subspace degradation, i.e., solving the maximization in
\eqref{eq:opt133}. The problem is simpler than the one of
computing $C_{\text{USS}}$ (see \eqref{eq:opt10}) since we do not
need to optimize $P_{\lspan{X^\tr}|\rank(x)}$. The 
proof of Theorem~\ref{the:uniquesd} implies
\begin{equation}
  \label{eq:di8s}
  P_{\rank(Y)|\rank(x)}(s|r) = P_{\rank(Y)|\lspan{X^\tr}}(s|U) = 
P_{\rank(Y)|X}(s|\bX)
\end{equation}
for any $U\in \Gr(r,\ffield^M)$ and any $\bX$ with $\rank(\bX)=r$.
The optimization in \eqref{eq:opt133} is convex
and has $\min\{M,T\}$ variables.

Similar to $R_{H,T}(U)$ (defined in \eqref{eq:g}), by abuse of
notations, we define for LOCs with a unique subspace degradation
\begin{equation*}
  R(r) = R_{H,T}(r) = \sum_s P_{\rank(Y)|\rank(X)}(s|r) \log \frac{\cmat{T}{s}}{\cmat{r}{s}}.
\end{equation*}
Actually, $R_{H,T}(\dim(U)) = R_{H,T}(U)$ and hence we can rewrite
\begin{equation*}%
  \locrate(\rank(X);\rank(Y)) = \sum_r p_{\rank(X)}(r) R(r).
\end{equation*}
When applying on LOC with a unique subspace degradations, the same
result of Theorem~\ref{the:mmm} still holds (with $C_{\subs}$ in place of
$C_{\text{USS}}$) and the proof can be simplified by using $R(r)$ in
stead of $R(U)$.
Similar to the discussion around \eqref{eq:dds}, the maximum
achievable rate of constant-rank input distributions is
\begin{equation*}
  \max_{p_{\rank(X)}} \locrate(\rank(X);\rank(Y)) = \max_{r}R(r).
\end{equation*}

\begin{example}
  Let's apply the above general discussion on LOCs with
  uniform-given-rank transfer matrices. Assume that $p_{\rank(H)}$ 
  is known. To compute
  $P_{\rank(Y)|\rank(X)}(s|r)$, we choose the input matrix
  \begin{equation*}
    \bX^{(r)} =
    \begin{bmatrix}
      I_r & \mathbf{0} \\ \mathbf{0} & \mathbf{0}
    \end{bmatrix}.
  \end{equation*}
  For transfer matrix $H = \begin{bmatrix} H_1 \\
    H_2 \end{bmatrix}$ where $H_1$ has $r$ rows and $H_2$ has
  $M-r$ rows, the output matrix is $\begin{bmatrix} H_1 \\
    \mathbf{0} \end{bmatrix}$. So
  \begin{IEEEeqnarray*}{rCl}
    P_{\rank(Y)|\rank(X)}(s|r)
    & = & 
    P_{\rank(Y)|X}(s|\bX^{(r)}) \\
    & = & 
    \Pr\{ \rank(H_1) = s \} \\
    & = & 
    \sum_{k=s}^{\min\{M,N\}} \Pr\{ \rank(H_1) = s | \rank(H)=k \}
    p_{\rank(H)}(k).    
  \end{IEEEeqnarray*}
  Since the transfer matrix is uniform-given-rank, we have
  \begin{IEEEeqnarray*}{rCl}
    \Pr\{ \rank(H_1) = s | \rank(H)=k \}
    & = & \frac{|\{\bH\in
      \ffield^{M\times N}: \rank(\bH_1)=s, \rank(\bH)=k\}|}{|\{\bH\in
      \ffield^{M\times N}: \rank(\bH)=k\}|},
  \end{IEEEeqnarray*}
  where the RHS can be counted using the techniques introduced in
  Preliminaries. After computing $P_{\rank(Y)|\rank(X)}(s|r)$, $R(r)$
  can be computed accordingly. Then the subspace coding capacity, as well as
  an optimal input rank distribution, can be obtained by solving
  \begin{IEEEeqnarray*}{rCl}
    & \max_{p(r)} \quad & \sum_{r} p(r) R(r) + \sum_{r} p(r) \sum_s
    P_{\rank(Y)|\rank(X)}(s|r) \log 
    \frac{P_{\rank(Y)|\rank(X)}(s|r)}{\sum_{r'} p(r')
      P_{\rank(Y)|\rank(X)}(s|r')} \\
    & \text{s.t.} & p(r)\geq 0, \ \sum_r p(r) = 1.
  \end{IEEEeqnarray*}
  We would not go into the details about solving the above
  optimization problem. Readers are referred to \cite{nobrega11,
    nobrega11a} to find more results about LOCs with
  uniform-given-rank transfer matrices.
\end{example}

\section{Shannon Capacity vs Subspace Coding Capacity}
\label{sec:compare}

In this section, we discuss some necessary conditions and sufficient
conditions for a LOC such that $C=C_{\subs}$ as applications of the
results obtained in the previous sections.  A new class of LOCs such
that $C=C_{\subs}$ is explicitly characterized.

\subsection{Unique Subspace Degradation}

 Theorem~\ref{the:89s} says
  \begin{equation*}
    C \geq C_L \triangleq \max_{p_{\lspan{X^\tr}}} \left[J(\rank(X);\rank(Y)) + I(\lspan{X^\tr};\lspan{Y^\tr})\right],
  \end{equation*}
  and for a LOC with a unique subspace degradation Theorem~\ref{the:uniquesd} shows
  \begin{equation*}
    C_\subs = \max_{p_{\lspan{X^\tr}}}[
    \locrate(\rank(X);\rank(Y)) + \mutual(\rank(X);\rank(Y))].
  \end{equation*}
The above bounds imply a necessary condition such that $C=C_\subs$.

\begin{theorem}\label{the:capacitysd}
  Consider a LOC with a unique subspace degradation. If $C=C_\subs$, then
  for certain $p_{\lspan{X^\tr}}$ that achieves $C_L$,
  $\lspan{X^\tr}\rightarrow \rank(X) \rightarrow \rank(Y)\rightarrow \lspan{Y^\tr}$ is a
  Markov chain.  In other words, subspace coding is not capacity
  achieving if the LOC does not satisfy the Markov condition $\lspan{X^\tr}\rightarrow \rank(X)\rightarrow \rank(Y)\rightarrow \lspan{Y^\tr}$ for
  any $p_{\lspan{X^\tr}}$ achieving $C_L$.
\end{theorem}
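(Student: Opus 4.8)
The plan is to establish the sandwich $C_{\subs}\le C_L\le C$, so that the hypothesis $C=C_{\subs}$ collapses it to $C_L=C_{\subs}$, and then to read off the Markov chain from the resulting tightness of a data-processing inequality at an optimizer of $C_L$. The bound $C_L\le C$ is immediate from the lower bound in Theorem~\ref{the:89s}: that bound holds for every $\alpha$-type input, and $C_L$ is exactly its supremum over $\alpha$-type inputs (which are parametrized by $p_{\lspan{X^\tr}}$). For $C_{\subs}\le C_L$, I would use that $\rank(X)$ and $\rank(Y)$ are deterministic functions of $\lspan{X^\tr}$ and $\lspan{Y^\tr}$, so that $\mutual(\lspan{X^\tr};\lspan{Y^\tr})\ge \mutual(\rank(X);\lspan{Y^\tr})\ge \mutual(\rank(X);\rank(Y))$ for \emph{every} $p_{\lspan{X^\tr}}$; adding $\locrate(\rank(X);\rank(Y))$ to the two ends shows the $C_L$-objective dominates pointwise the $C_{\subs}$-objective of Theorem~\ref{the:uniquesd}, hence $C_L\ge C_{\subs}$.

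Next, assuming $C=C_{\subs}$ and hence $C_L=C_{\subs}$, fix a distribution $p^*$ on row spaces (i.e.\ an $\alpha$-type input) achieving $C_L$. Since $p^*$ is a feasible point for the maximization defining $C_{\subs}$ in Theorem~\ref{the:uniquesd}, we have $C_{\subs}\ge \locrate(\rank(X);\rank(Y))\big|_{p^*}+\mutual(\rank(X);\rank(Y))\big|_{p^*}$, while by the choice of $p^*$, $C_L=\locrate(\rank(X);\rank(Y))\big|_{p^*}+\mutual(\lspan{X^\tr};\lspan{Y^\tr})\big|_{p^*}$. Cancelling the common term $\locrate(\rank(X);\rank(Y))\big|_{p^*}$ and invoking $C_L=C_{\subs}$ gives $\mutual(\lspan{X^\tr};\lspan{Y^\tr})\big|_{p^*}\le \mutual(\rank(X);\rank(Y))\big|_{p^*}$, and combined with the reverse inequality from the previous paragraph this forces equality: $\mutual(\lspan{X^\tr};\lspan{Y^\tr})=\mutual(\rank(X);\rank(Y))$ at $p^*$.

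Finally, I would convert this equality into the four-term Markov chain. Writing $A=\lspan{X^\tr}$, $B=\rank(X)$, $C=\rank(Y)$, $D=\lspan{Y^\tr}$, so that $B$ is a function of $A$ and $C$ a function of $D$, the identity $\mutual(A;D)=\mutual(B;C)$ together with the data-processing chains $\mutual(A;D)\ge\mutual(B;D)\ge\mutual(B;C)$ and $\mutual(A;D)\ge\mutual(A;C)\ge\mutual(B;C)$ makes every displayed term equal; by the chain rule (using $B$ a function of $A$ in one case and $C$ a function of $D$ in the others) this yields $\mutual(A;D|B)=0$, $\mutual(B;D|C)=0$ and $\mutual(A;D|C)=0$. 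To conclude that $A\to B\to C\to D$ is a Markov chain I would then verify its two defining conditional independences: $\mutual(A;C,D|B)=\mutual(A;D|B)+\mutual(A;C|B,D)=0+0$, the second term vanishing because $C$ is a function of $D$; and $\mutual(A,B;D|C)=\mutual(A;D|C)=0$, using that $B$ is a function of $A$. Hence $\lspan{X^\tr}\to\rank(X)\to\rank(Y)\to\lspan{Y^\tr}$, as claimed. The only genuinely delicate step is this last conversion — extracting the four-term Markov property from tightness of a two-step data-processing inequality while exploiting the functional dependencies $B=f(A)$, $C=g(D)$ — but it is a routine manipulation of conditional mutual informations; everything else is bookkeeping once the ordering $C_{\subs}\le C_L\le C$ has been set up, which is the conceptual crux.
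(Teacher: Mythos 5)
Your overall strategy (establish $C_{\subs}\le C_L\le C$, force equality of mutual informations at an optimizer, then extract the Markov chain) matches the paper's, and your treatment of the last step (from $\mutual(\lspan{X^\tr};\lspan{Y^\tr})=\mutual(\rank(X);\rank(Y))$ to the four-term Markov chain via the functional dependencies) is correct. However, there is a genuine error in the middle step. You fix $p^*$ as a maximizer of $C_L$ and then try to extract the equality $\mutual(\lspan{X^\tr};\lspan{Y^\tr})\big|_{p^*}=\mutual(\rank(X);\rank(Y))\big|_{p^*}$ from
\[
C_\subs \;\ge\; J|_{p^*}+\mutual(\rank(X);\rank(Y))\big|_{p^*},\qquad
C_L \;=\; J|_{p^*}+\mutual(\lspan{X^\tr};\lspan{Y^\tr})\big|_{p^*},\qquad
C_L=C_\subs .
\]
Combining these and cancelling $J|_{p^*}$ yields $\mutual(\lspan{X^\tr};\lspan{Y^\tr})\big|_{p^*}\ge \mutual(\rank(X);\rank(Y))\big|_{p^*}$, which is the trivial direction of the data-processing inequality you already had --- not the reverse inequality you claim. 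The claimed deduction has the sign backwards. Worse, the conclusion is not true for an arbitrary $C_L$-maximizer: a maximizer of $C_L$ need not maximize $C_\subs$, and at such a point the strict inequality $\mutual(\lspan{X^\tr};\lspan{Y^\tr})>\mutual(\rank(X);\rank(Y))$ can persist even when $C_L=C_\subs$.

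The repair is small but essential: choose $p^*$ to be a maximizer of the $C_\subs$-objective instead. Then
\[
C_\subs=J|_{p^*}+\mutual(\rank(X);\rank(Y))\big|_{p^*}
\;\le\; J|_{p^*}+\mutual(\lspan{X^\tr};\lspan{Y^\tr})\big|_{p^*}
\;\le\; C_L=C_\subs ,
\]
so every inequality is tight; in particular $p^*$ \emph{also} achieves $C_L$ (delivering the ``for certain $p_{\lspan{X^\tr}}$ that achieves $C_L$'' clause of the theorem), and the cancellation now gives the desired equality $\mutual(\lspan{X^\tr};\lspan{Y^\tr})\big|_{p^*}=\mutual(\rank(X);\rank(Y))\big|_{p^*}$. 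This is precisely the paper's move --- it observes that if no distribution achieved $C_L$ and $C_\subs$ simultaneously then $C>C_\subs$, and then works at a distribution achieving both. With this one correction, your proof is a slightly more verbose version of the paper's.
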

\begin{IEEEproof}
  Fix a LOC with a unique subspace degradation and $C=C_\subs$.
  If there is no $p_{\lspan{X^\tr}}$ achieving $C_L$ and $C_\subs$
  simultaneously, $C > C_{\subs}$.
  Consider a distribution $p^*_{\lspan{X^\tr}}$ of $\lspan{X^\tr}$
  that achieves $C_L$ and $C_{\subs}$ simultaneously, for which we have
  $\mutual(\lspan{X^{\tr}}; \lspan{Y^{\tr}}) =
  \mutual(\rank(X);\rank(Y))$,
  which implies $\mutual(\lspan{X^\tr};\lspan{Y^\tr}|\rank(Y))=0$ and
  $\mutual(\lspan{X^\tr};\rank(Y)|\rank(X))=0$. So both
  $\lspan{X^\tr}\rightarrow \rank(X)\rightarrow \rank(Y)$ and $(\lspan{X^\tr},\rank(X))\rightarrow
  \rank(Y)\rightarrow \lspan{Y^\tr}$ form Markov chains. Hence
  $\lspan{X^\tr}\rightarrow \rank(X)\rightarrow \rank(Y)\rightarrow
  \lspan{Y^\tr}$ is a Markov chain.
\end{IEEEproof}

We know that for a distribution $p_{\lspan{X^\tr}}$, $\lspan{X^\tr}\rightarrow \rank(X) \rightarrow
  \rank(Y)\rightarrow \lspan{Y^\tr}$ is a Markov chain if and only if
  \begin{IEEEeqnarray*}{rCl}
   p_{\rank(X)}(r)p_{\rank(Y)}(s)p_{\lspan{X^\tr}\rank(X)\lspan{Y^\tr}\rank(Y)}(U,r,V,s)
    & = &
    p_{\lspan{X^\tr}\rank(X)}(U,r)
    p_{\rank(X)\rank(Y)}(r,s)
    p_{\lspan{Y^\tr}\rank(Y)}(V,s), \\
    &  & \quad \forall r,s,U,V,
  \end{IEEEeqnarray*}
which is equivalent to
\begin{IEEEeqnarray*}{rCl}
  p_{\rank(X)}(r)p_{\rank(Y)}(s)p_{\lspan{X^\tr}\lspan{Y^\tr}}(U,V)
  & = & 
  p_{\lspan{X^\tr}}(U)
    p_{\rank(X)\rank(Y)}(r,s)
    p_{\lspan{Y^\tr}}(V), \\
  & & \quad \forall r \geq s,\dim(U)=r,\dim(V)=s. \IEEEyesnumber   \label{eq:1ll}
\end{IEEEeqnarray*}
For $U$ such that $p_{\lspan{X^\tr}}(U)>0$, the equality in \eqref{eq:1ll}
becomes 
\begin{IEEEeqnarray*}{rCl}
  {p_{\rank(Y)}(s)P_{\lspan{Y^\tr}|\lspan{X^\tr}}(V|U)}
  & = & 
    p_{\rank(Y)|\rank(X)}(s|r)
    p_{\lspan{Y^\tr}}(V),
\end{IEEEeqnarray*}
where $r = \dim(U) \geq \dim(V)=s$. Thus,  for each $V$, among all $U\in
\Gr(r,\ffield^M)$ with $p_{\lspan{X^\tr}}(U)>0$,
$p_{\lspan{Y^\tr}|\lspan{X^\tr}}(V|U)$ are the same.
Therefore, we can have the following lemma. 

\begin{lemma}\label{cor:dd3}
If $p^*_{\lspan{X^\tr}}$ achieves $C_L$ and satisfies the Markov chain
$\lspan{X^\tr}\rightarrow \rank(X) \rightarrow \rank(Y)\rightarrow
\lspan{Y^\tr}$, then
there exists $p'_{\lspan{X^\tr}}$ achieving $C_L$ such that
  \begin{enumerate}
  \item for each $r$ there exists at most one $U_r \in
    \Gr(r,\ffield^M)$ such that $p'_{\lspan{X^\tr}}(U_r)>0$; and
  \item the Markov chain $\lspan{X^\tr}\rightarrow \rank(Y)\rightarrow
\lspan{Y^\tr}$ holds.
  \end{enumerate}
\end{lemma}
\begin{IEEEproof}
Let
\begin{equation*}
  f(p_{\lspan{X^\tr}}) = J(\rank(X);\rank(Y))+
I(\lspan{X^\tr};\lspan{Y^\tr}).
\end{equation*}
Then $C_L =
\max_{p_{\lspan{X^\tr}}} f(p_{\lspan{X^\tr}})$. We have
\begin{IEEEeqnarray*}{rCl} 
  \frac{\partial f(p_{\lspan{X^\tr}})}{\partial p_{\lspan{X^\tr}}(U)}
  & = & \sum_{s = 0}^{\dim(U)} \sum_{V\in \Gr(s,\ffield^N)}
  P(V|U) \log \frac{P(V|U)}{p_{\lspan{Y^\tr}}(V)} 
   + R(U)- \log e. \IEEEyesnumber \label{eq:c9s}
\end{IEEEeqnarray*}
Let $p_{\lspan{X^\tr}}'$ be a
distribution
on $\Pj(\ffield^M)$ such that for each $r$ with $p^*_{\rank(X)}(r)>0$,
there exists $U_r \in \Gr(r, \ffield^M)$ such that $p_{\lspan{X^\tr}}'(U_r) = \sum_{U\in
  \Gr(r, \ffield^M)} p^*_{\lspan{X^\tr}}(U)$ and $p^*_{\lspan{X^\tr}}(U_r)
>0$.

Let $p_{\lspan{Y^\tr}}^*$ and $p_{\lspan{Y^\tr}}'$ be the distribution of $\lspan{Y^\tr}$ with
respect to $p_{\lspan{X^\tr}}^*$ and $p_{\lspan{X^\tr}}'$, respectively. 
We have
\begin{IEEEeqnarray*}{rCl}
  p^*_{\lspan{Y^\tr}}(V)
  & = & 
  \sum_{r} \sum_{U\in \Gr(r,\ffield^M): p_{\lspan{X^\tr}}(U)>0}
  P_{\lspan{Y^\tr}|\lspan{X^\tr}}(V|U) p_{\lspan{X^\tr}}(U) \\
  & = & 
  \sum_{r} \sum_{U\in \Gr(r,\ffield^M): p_{\lspan{X^\tr}}(U)>0}
  P_{\lspan{Y^\tr}|\lspan{X^\tr}}(V|U_r) p_{\lspan{X^\tr}}(U) \\
  & = & 
  \sum_{r} P_{\lspan{Y^\tr}|\lspan{X^\tr}}(V|U_r)
  p'_{\lspan{X^\tr}}(U_r) \\
  & = & 
  p_{\lspan{Y^\tr}}'(V),
\end{IEEEeqnarray*}
where the second equality follows from the discussion after \eqref{eq:1ll}.
By checking the KKT condition \cite{KKT}, $p_{\lspan{X^\tr}}'$
achieves $C_L$ since $\frac{\partial f(p_{\lspan{X^\tr}})}{\partial
  p_{\lspan{X^\tr}}(U)}\Big |_{p_{\lspan{X^\tr}}'} = \frac{\partial f(p_{\lspan{X^\tr}})}{\partial p_{\lspan{X^\tr}}(U)}\Big |_{p^*_{\lspan{X^\tr}}} $.

For $r$ with $p_{\lspan{X^\tr}}(r)>0$, we further have
\begin{IEEEeqnarray*}{rcl}
  p_{\rank(Y)|\rank(X)}(s|r)
  & = & 
  \sum_{U\in\Gr(r,\ffield^M):p_{\lspan{X^\tr}}(U)>0}
  P_{\rank(Y)|\lspan{X^\tr}}(s|U)p_{\lspan{X^\tr}|\rank(X)}(U|r) \\
  & = &
  P_{\rank(Y)|\lspan{X^\tr}}(s|U_r) \\
  & = & 
  p'_{\rank(Y)|\rank(X)}(s|r).
\end{IEEEeqnarray*}
Therefore, $p_{\lspan{X^\tr}}'$ also satisfies the Markov condition in
\eqref{eq:1ll}.
Note that since in this case, the distributions of $\rank(X)$ and
$\lspan{X^\tr}$ are the same, we equivalently have the Markov
condition 
$\lspan{X^\tr}\rightarrow \rank(Y)\rightarrow
\lspan{Y^\tr}$.
\end{IEEEproof}

Using Lemma~\ref{cor:dd3}, the sufficient condition in
Theorem~\ref{the:capacitysd} can be refined, and 
more explicit necessary conditions can be obtained for special cases.

\begin{example}\label{ex:ndd}
  Suppose that $C=C_{\subs}$ for certain $\loc(H,1)$ with a unique subspace
  degradation. Fix $p'_{\lspan{X^\tr}}$ such that 1)
  $p'_{\lspan{X^\tr}}$ achieves $C_L$ and 2)
  $p'_{\lspan{X^\tr}}(U_1) = 1 - p_{\rank(X)}(0)$ for $U_1 \in
  \Gr(1,\ffield^M)$. The existence of such $p_{\lspan{X^\tr}}$ is
  guaranteed by Theorem~\ref{the:capacitysd} and
  Lemma~\ref{cor:dd3}. Using
  \eqref{eq:c9s}, we have for $U\in \Gr(1,\ffield^M)$
\begin{IEEEeqnarray*}{rCl}
  \frac{\partial
      f(p_{\lspan{X^\tr}})}{\partial p_{\lspan{X^\tr}}(U)}\Bigg |_{p =
      p'}
  & = & 
  \sum_{s = 0}^{1} \sum_{V\in \Gr(s,\ffield^N)}
  P(V|U) \log \frac{P(V|U)}{p_{\lspan{Y^\tr}}(V)} - \log e \\
  & = & 
  P(\lspan{\mathbf{0}}|U) \log
  \frac{P(\lspan{\mathbf{0}}|U)}{p_{\lspan{Y^\tr}}(\lspan{\mathbf{0}})}
   + \sum_{V\in \Gr(1,\ffield^N)}
  P(V|U) \log \frac{P(V|U)}{P(V|U_1)p_{\lspan{X^\tr}}(U_1)} - \log e
  \\
  & = & 
  - \log p'_{\lspan{X^\tr}}(U_1) - \log e + D_{\text{KL}}(P(\cdot|U)||P(\cdot|U_1)) + P(\lspan{\mathbf{0}}|U) \log
  \frac{P(\lspan{\mathbf{0}}|U_1) p'_{\lspan{X^\tr}}(U_1)}{p_{\lspan{Y^\tr}}(\lspan{\mathbf{0}})},
\end{IEEEeqnarray*}
where $D_{\text{KL}}$ is the Kullback-Leibler divergence (cf. \cite{yeung08}).
By \eqref{eq:row1dsi}, $P_{\lspan{Y^\tr}|\lspan{X^\tr}}(\lspan{\mathbf{0}}|U) =
P_{\lspan{Y^\tr}|\lspan{X^\tr}}(\lspan{\mathbf{0}}|U_1)$ for all $U\in
\Gr(1,\ffield^M)$. Since $p_{\lspan{X^\tr}}$ achieves $C_L$, by the
KKT condition, we have for all $U \neq U_1$
\begin{equation*}
  \frac{\partial f(p_{\lspan{X^\tr}})}{\partial p_{\lspan{X^\tr}}(U)} 
  \Bigg |_{p = p'}
  \leq \frac{\partial f(p_{\lspan{X^\tr}})}{\partial
    p_{\lspan{X^\tr}}(U_1)} \Bigg |_{p = p'}
\end{equation*}
which implies $D_{\text{KL}}(P(\cdot|U)||P(\cdot|U_1)) = 0$.
Therefore, for $\loc(H,1)$ with a unique subspace degradation, a necessary condition such that subspace
coding is capacity achieving is that for each $V \in
\Gr(1,\ffield^N)$,
$P_{\lspan{Y^\tr}|\lspan{X^\tr}}(V|U)=P_{\lspan{Y^\tr}|\lspan{X^\tr}}(V|U')$
for all $U,U'\in \Gr(1,\ffield^M)$.
\end{example}

We can get a stronger result if the LOC with a unique subspace degradation is also row-space symmetric. Note that when $T\geq M$, a row-space-symmetric LOC has a unique
  subspace degradation (cf. Lemma~\ref{lemma:tmrss}).

\begin{corollary}\label{cor:3}
  For a row-space-symmetric LOC which has a unique subspace
  degradation, $C=C_\subs$ if and only if for certain $p_{\lspan{X^\tr}}$ that achieves $C$,
  $\lspan{X^\tr}\rightarrow \rank(X) \rightarrow \rank(Y)\rightarrow \lspan{Y^\tr}$ is a
  Markov chain.
\end{corollary}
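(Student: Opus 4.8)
The plan is to leverage the necessary condition already proved in Theorem~\ref{the:capacitysd} and supply the matching sufficiency, exploiting that for a row-space-symmetric LOC the auxiliary quantity $C_L$ coincides with the Shannon capacity $C$.

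First I would establish that $C=C_L$ for a row-space-symmetric LOC. By Theorem~\ref{the:diq} the capacity $C$ is attained by some $\alpha$-type input, which is determined by its $p_{\lspan{X^\tr}}$; and by the equality case of Theorem~\ref{the:89s}, for \emph{every} $\alpha$-type input one has $\mutual(X;Y) = \locrate(\rank(X);\rank(Y)) + \mutual(\lspan{X^\tr};\lspan{Y^\tr})$. Maximizing the right-hand side over $p_{\lspan{X^\tr}}$ is precisely the definition of $C_L$, so $C=C_L$, and for such channels ``achieves $C_L$'' and ``achieves $C$'' become interchangeable.

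For the ``only if'' direction, a row-space-symmetric LOC with a unique subspace degradation is in particular a LOC with a unique subspace degradation, so Theorem~\ref{the:capacitysd} applies verbatim: $C=C_\subs$ forces some $p_{\lspan{X^\tr}}$ achieving $C_L=C$ to make $\lspan{X^\tr}\to\rank(X)\to\rank(Y)\to\lspan{Y^\tr}$ a Markov chain. For the ``if'' direction, suppose $p^*_{\lspan{X^\tr}}$ achieves $C$ and induces that Markov chain. The key step is to show this forces $\mutual(\lspan{X^\tr};\lspan{Y^\tr}) = \mutual(\rank(X);\rank(Y))$ under $p^*$: since $\rank(X)$ is a deterministic function of $\lspan{X^\tr}$ and $\rank(Y)$ of $\lspan{Y^\tr}$, the sub-relations $\lspan{X^\tr}\to\rank(X)\to(\rank(Y),\lspan{Y^\tr})$ and $(\lspan{X^\tr},\rank(X))\to\rank(Y)\to\lspan{Y^\tr}$ collapse $\mutual(\lspan{X^\tr};\lspan{Y^\tr})$ through $\mutual(\rank(X);\lspan{Y^\tr})$ down to $\mutual(\rank(X);\rank(Y))$ by routine chain-rule/data-processing manipulations. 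Then, invoking $C=C_L$ and the equality in Theorem~\ref{the:89s} at $p^*$,
\[
C = \locrate(\rank(X);\rank(Y))\big|_{p^*} + \mutual(\lspan{X^\tr};\lspan{Y^\tr})\big|_{p^*} = \locrate(\rank(X);\rank(Y))\big|_{p^*} + \mutual(\rank(X);\rank(Y))\big|_{p^*},
\]
whose right-hand side is the objective maximized in Theorem~\ref{the:uniquesd} evaluated at $p^*_{\lspan{X^\tr}}$, hence $C\le C_\subs$; combined with $C_\subs\le C$ (always true because $\mutual(\lspan{X};\lspan{Y})\le\mutual(X;Y)$), this yields $C=C_\subs$.

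I expect the only non-bookkeeping step to be the collapse of mutual informations in the sufficiency direction; it is short but genuinely uses that the rank variables are functions of the row-space variables at both the input and the output. Everything else is assembling Theorems~\ref{the:diq}, \ref{the:89s}, \ref{the:uniquesd}, and \ref{the:capacitysd}.
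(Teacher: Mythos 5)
Your proposal is correct and takes essentially the same route as the paper: establish $C=C_L$ via the equality case of Theorem~\ref{the:89s}, obtain necessity from Theorem~\ref{the:capacitysd}, and for sufficiency use the Markov chain (with $\rank$ a deterministic function of row-space on each side) to collapse $\mutual(\lspan{X^\tr};\lspan{Y^\tr})$ to $\mutual(\rank(X);\rank(Y))$, then bound by $C_\subs$. The only cosmetic difference is that you cite Theorem~\ref{the:uniquesd} for the final $C\le C_\subs$ step, whereas the paper cites the generic lower bound of Theorem~\ref{the:diq2}; both are valid here.
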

\begin{IEEEproof}
  For a row-space-symmetric LOC, $C=C_L$. So the necessary condition
  follows from Theorem~\ref{the:capacitysd}. On the other hand, assume $\lspan{X^\tr}\rightarrow \rank(X) \rightarrow \rank(Y)\rightarrow \lspan{Y^\tr}$ is a
  Markov chain for certain $p_{\lspan{X^\tr}}$ that achieves
  $C$. So $\mutual(\lspan{X^{\tr}}; \lspan{Y^{\tr}}) =
  \mutual(\rank(X);\rank(Y))$ for this $p_{\lspan{X^\tr}}$.
  Therefore $C = J(\rank(X);\rank(Y)) + I(\lspan{X^\tr};\lspan{Y^\tr})
  = J(\rank(X);\rank(Y)) + \mutual(\rank(X);\rank(Y)) \leq C_\subs$,
  which implies $C=C_\subs$.
\end{IEEEproof}

\begin{example}\label{example:uuu}
  Following Example~\ref{ex:rss}, we discuss $\loc_{2}(H,1)$ with
  a unique subspace degradation. We know $H$ satisfies
  \eqref{eq:row1dsi}. Since
  $\loc_{2}(H,1)$ is row-space-symmetric, we can apply the
   necessary and sufficient for $C=C_\subs$ given in
   Corollary~\ref{cor:3}. Similar to the discussion in Example~\ref{ex:ndd}, we have that for $\loc_{2}(H,1)$ with
  a unique subspace degradation,
  $C=C_\subs$ if and only if for each $y \in \ffield^{1\times N}$
  \begin{equation}\label{eq:cioss}
    P_{Y|X}(y|x_1)=P_{Y|X}(y|x_2), \ \forall x_1,x_2\in
    \ffield^{1\times M}.
  \end{equation}
  We will connect the above condition to another class of LOCs to be
  discussed.
\end{example}

\subsection{Row-Space-Symmetric LOCs ($T<M$)}
\label{sec:oss}
When
$T\geq M$, a row-space-symmetric LOC has a unique subspace degradation
(cf. Lemma~\ref{lemma:tmrss}). Hence, we can apply the results in the
last subsection.  But when $T<M$, a row-space-symmetric LOC may not
have a unique subspace degradation.  The following lemma gives an
upper bound on the subspace coding capacity of row-space-symmetric
LOCs.

\begin{lemma}\label{lemma:rsuloc}
  For a row-space-symmetric LOC,
  \begin{equation*}
    C_{\mathrm{SS}} \leq \max_{p_{\lspan{X^\tr}}}\left[\locrate(\rank(X);\rank(Y)) + \mutual(\lspan{X^\tr};\rank(Y))\right].
  \end{equation*}
\end{lemma}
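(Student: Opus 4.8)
The plan is to bound $\mutual(\lspan{X};\lspan{Y})$ from above for an \emph{arbitrary} input distribution $p_X$ by a quantity that, for a row-space-symmetric LOC, depends on $p_X$ only through the induced row-space distribution $p_{\lspan{X^\tr}}$; then $C_\subs=\max_{p_X}\mutual(\lspan{X};\lspan{Y})$ reduces to a maximization over $p_{\lspan{X^\tr}}$. Since $\lspan{X}$ is a deterministic function of $X$, conditioning on $X$ can only reduce the entropy of $\lspan{Y}$, so
\begin{equation*}
  \mutual(\lspan{X};\lspan{Y}) \leq \entropy(\lspan{Y}) - \entropy(\lspan{Y}|X) = \mutual(X;\lspan{Y}),
\end{equation*}
and it remains to compute $\entropy(\lspan{Y}|X)$ and to bound $\entropy(\lspan{Y})$ from above.

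The key step is the conditional law of $\lspan{Y}$ given $X=\bX$. Writing $U=\lspan{\bX}$, $r=\rank(\bX)$, I would substitute the row-space-symmetric form $P_{Y|X}(\bY|\bX)=\frac{1}{\cmat{\rank(\bX)}{\rank(\bY)}}P_{\lspan{Y^\tr}|\lspan{X^\tr}}(\lspan{\bY^\tr}|\lspan{\bX^\tr})$ (valid for $\lspan{\bY}\subset\lspan{\bX}$) into $P_{\lspan{Y}|X}(V|\bX)=\sum_{\bY:\lspan{\bY}=V}P_{Y|X}(\bY|\bX)$ and use the counting of $T\times N$ matrices with prescribed column and row spaces (the same bookkeeping behind Claim~\ref{claim:2} and Lemma~\ref{lemma:c1}) to see that $P_{\lspan{Y}|X}(V|\bX)$ depends on $V\subseteq U$ only through $\dim V$. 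This shows that, conditioned on $X=\bX$, the dimension $\rank(Y)=s$ is drawn from $P_{\rank(Y)|\lspan{X^\tr}}(\cdot|\lspan{\bX^\tr})$, and given $\rank(Y)=s$ the subspace $\lspan{Y}$ is uniform over the $\gco{r}{s}$ subspaces of $U$ of dimension $s$. Decomposing $\entropy(\lspan{Y}|X)=\entropy(\rank(Y)|X)+\entropy(\lspan{Y}|\rank(Y),X)$ and applying Lemma~\ref{lemma:cond} to the first term then gives
\begin{equation*}
  \entropy(\lspan{Y}|X) = \entropy(\rank(Y)|\lspan{X^\tr}) + \sum_{s\leq r} p_{\rank(X)\rank(Y)}(r,s)\log\gco{r}{s}.
\end{equation*}

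For $\entropy(\lspan{Y})$ I would split along dimension, $\entropy(\lspan{Y})=\entropy(\rank(Y))+\entropy(\lspan{Y}|\rank(Y))\leq \entropy(\rank(Y))+\sum_s p_{\rank(Y)}(s)\log\gco{T}{s}$, using that $\lspan{Y}\in\Gr(s,\ffield^T)$ when $\rank(Y)=s$. Subtracting the two displays, and using $p_{\rank(Y)}(s)=\sum_r p_{\rank(X)\rank(Y)}(r,s)$ together with $\gco{T}{s}/\gco{r}{s}=\cmat{T}{s}/\cmat{r}{s}$, collapses the rank-entropy terms into $\mutual(\lspan{X^\tr};\rank(Y))$ and the remaining $\log\gco{\cdot}{\cdot}$ terms into $\locrate(\rank(X);\rank(Y))$ (cf.~\eqref{eq:ratej}), so $\mutual(\lspan{X};\lspan{Y})\leq \locrate(\rank(X);\rank(Y))+\mutual(\lspan{X^\tr};\rank(Y))$. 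Finally, for a row-space-symmetric LOC $P_{\rank(Y)|\lspan{X^\tr}}$ is a function of $p_H$ only (Lemma~\ref{lemma:cond}), so $p_{\rank(X)\rank(Y)}$ and $\mutual(\lspan{X^\tr};\rank(Y))$ — hence the whole right-hand side — are determined by $p_{\lspan{X^\tr}}$; since every distribution on $\Pj(\min\{M,T\},\ffield^M)$ is realized as $p_{\lspan{X^\tr}}$ by some ($\alpha$-type) input, $\max_{p_X}$ of this bound equals $\max_{p_{\lspan{X^\tr}}}$ of it, which is the asserted inequality. I expect the main obstacle to be the exact conditional-law computation in the second step — the combinatorics of matrices with prescribed row and column spaces, which is the sole place row-space symmetry enters and where a miscounted normalization would sink the argument; the remainder is routine entropy algebra.
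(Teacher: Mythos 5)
Your proof is correct and arrives at the same displayed inequality, but via a route that is genuinely cleaner than the paper's in one respect: you bypass the $\beta$-type machinery entirely. The paper first invokes Lemma~\ref{the:beta} to reduce to $\beta$-type inputs, proves the bound $\mutual(\lspan{X};\lspan{Y}) \leq \locrate(\rank(X);\rank(Y)) + \mutual(\lspan{X};\rank(Y))$ by conditioning the entropy of $\lspan{Y}$ on $\lspan{X}$, and then uses $\beta$-typeness plus the Markov chain $X\to\lspan{X^\tr}\to\rank(Y)$ to rewrite $\mutual(\lspan{X};\rank(Y))$ as $\mutual(\lspan{X^\tr};\rank(Y))$. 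You instead observe that for \emph{any} $p_X$, $\mutual(\lspan{X};\lspan{Y}) \leq \mutual(X;\lspan{Y})$ because $\lspan{X}$ is a function of $X$, and then compute $\entropy(\lspan{Y}\mid X)$ directly; the decomposition $\entropy(\lspan{Y}\mid X) = \entropy(\rank(Y)\mid X) + \entropy(\lspan{Y}\mid\rank(Y),X)$ together with $\entropy(\rank(Y)\mid X)=\entropy(\rank(Y)\mid\lspan{X^\tr})$ (Lemma~\ref{lemma:cond}) produces $\mutual(\lspan{X^\tr};\rank(Y))$ without any restriction on the input. This saves a lemma and removes a detour. The computation of $P_{\lspan{Y}\mid X}(V\mid\bX)$, which you flag as the one nontrivial step, does go through: the paper proves it by a group action (a full-rank $\Phi$ with $\Phi V = V'$), whereas your proposed counting route also works directly, since $\lvert\{\bY:\lspan{\bY}=V,\ \lspan{\bY^\tr}=W\}\rvert=\cmat{s}{s}$ is independent of $V$, giving $P_{\lspan{Y}\mid X}(V\mid\bX)=\frac{1}{\gcos{r}{s}}P_{\rank(Y)\mid\lspan{X^\tr}}(s\mid\lspan{\bX^\tr})$. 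The rest of your entropy bookkeeping (the bound on $\entropy(\lspan{Y})$, the cancellation of $\gco{T}{s}/\gco{r}{s}=\cmat{T}{s}/\cmat{r}{s}$, and the reduction of $\max_{p_X}$ to $\max_{p_{\lspan{X^\tr}}}$ since both terms in the bound depend on $p_X$ only through $p_{\lspan{X^\tr}}$) matches the paper.
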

\begin{IEEEproof}
  See Appendix~\ref{sec:pll}.
\end{IEEEproof}

\begin{theorem}\label{the:non2}
  Consider a row-space-symmetric LOC.
  \begin{enumerate}
  \item (Necessary condition) If $C=C_\subs$, then for certain  $p_{\lspan{X^\tr}}$ that achieves $C$,
  $\lspan{X^\tr}\rightarrow \rank(Y)\rightarrow \lspan{Y^\tr}$ is a
  Markov chain. In other words, subspace coding is not capacity
  achieving if the LOC does not satisfy the Markov condition $\lspan{X^\tr}\rightarrow \rank(Y)\rightarrow \lspan{Y^\tr}$ for
  all $p_{\lspan{X^\tr}}$ achieving $C$.
  \item (Sufficient condition) If for certain  $p_{\lspan{X^\tr}}$ that achieves $C$,
  $\lspan{X^\tr}\rightarrow \rank(X)\rightarrow \rank(Y)\rightarrow \lspan{Y^\tr}$ is a
  Markov chain, then $C=C_\subs$.
  \end{enumerate}
\end{theorem}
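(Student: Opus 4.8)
\emph{Proof proposal.} The plan is to sandwich $C_\subs$ between two quantities that also bracket $C$, and then read the Markov conditions off the places where the sandwich is forced to be tight. For a row-space-symmetric LOC, combining Theorem~\ref{the:diq} with Theorem~\ref{the:89s} gives the exact formula
\[
C = \max_{p_{\lspan{X^\tr}}}\bigl[\locrate(\rank(X);\rank(Y)) + \mutual(\lspan{X^\tr};\lspan{Y^\tr})\bigr],
\]
Theorem~\ref{the:diq2} gives $C_\subs \ge \max_{p_{\lspan{X^\tr}}}\bigl[\locrate(\rank(X);\rank(Y)) + \mutual(\rank(X);\rank(Y))\bigr]$, and Lemma~\ref{lemma:rsuloc} gives $C_\subs \le \max_{p_{\lspan{X^\tr}}}\bigl[\locrate(\rank(X);\rank(Y)) + \mutual(\lspan{X^\tr};\rank(Y))\bigr]$. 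Write $A(p) = \locrate(\rank(X);\rank(Y))$, $B(p) = \mutual(\lspan{X^\tr};\lspan{Y^\tr})$, $B'(p) = \mutual(\lspan{X^\tr};\rank(Y))$ and $B''(p) = \mutual(\rank(X);\rank(Y))$ for the $\alpha$-type input described by $p = p_{\lspan{X^\tr}}$. Since $\rank(X)$ is a deterministic function of $\lspan{X^\tr}$ and $\rank(Y)$ is a deterministic function of $\lspan{Y^\tr}$, the data-processing inequality yields $B''(p)\le B'(p)\le B(p)$ for every $p$, hence
\[
\max_p[A+B''] \;\le\; C_\subs \;\le\; \max_p[A+B'] \;\le\; \max_p[A+B] \;=\; C .
\]

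For part~1 I would assume $C = C_\subs$, so the displayed chain collapses and in particular $\max_p[A+B'] = C$. Let $p^*$ attain $\max_p[A+B']$. Then $A(p^*)+B'(p^*)=C$; but $A(p^*)+B(p^*)\le C$ by maximality of $C$, while $B(p^*)\ge B'(p^*)$, which forces $A(p^*)+B(p^*)=C$ and $B(p^*)=B'(p^*)$. The first equality shows $p^*$ achieves $C$; the second reads $\mutual(\lspan{X^\tr};\lspan{Y^\tr})=\mutual(\lspan{X^\tr};\rank(Y))$, equivalently $\mutual(\lspan{X^\tr};\lspan{Y^\tr}\mid\rank(Y))=0$, i.e. $\lspan{X^\tr}\rightarrow\rank(Y)\rightarrow\lspan{Y^\tr}$ is a Markov chain. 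The contrapositive of this implication is exactly the ``in other words'' form in the statement.

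For part~2 I would take a $p^*$ achieving $C$ for which $\lspan{X^\tr}\rightarrow\rank(X)\rightarrow\rank(Y)\rightarrow\lspan{Y^\tr}$ is Markov. Marginalizing this four-term chain produces the sub-chains $\lspan{X^\tr}\rightarrow\rank(Y)\rightarrow\lspan{Y^\tr}$ and $\lspan{X^\tr}\rightarrow\rank(X)\rightarrow\rank(Y)$; applying data processing to each, together with the pointwise inequalities $B''\le B'\le B$ already noted, gives $B(p^*)=B'(p^*)=B''(p^*)$, i.e. $\mutual(\lspan{X^\tr};\lspan{Y^\tr})=\mutual(\rank(X);\rank(Y))$ under $p^*$. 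Since $p^*$ achieves $C$ and the LOC is row-space-symmetric, Theorem~\ref{the:89s} gives $C = A(p^*)+B(p^*) = A(p^*)+B''(p^*)$, and Theorem~\ref{the:diq2} gives $C_\subs \ge \max_p[A+B''] \ge A(p^*)+B''(p^*) = C$; with $C_\subs\le C$ this yields $C=C_\subs$.

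The argument is essentially bookkeeping once the three bounds are in hand, so there is no deep obstacle. The step that needs the most care is the necessary-condition direction, where I must verify that a maximizer of the \emph{weakest} upper bound on $C_\subs$ is simultaneously capacity-achieving; this is precisely what the chain $A(p^*)+B'(p^*)=C\le A(p^*)+B(p^*)\le C$ delivers, using only the pointwise inequality $B'\le B$. I would also keep the whole discussion parametrized by $p_{\lspan{X^\tr}}$ rather than by $p_{\rank(X)}$, since for a row-space-symmetric LOC with $T<M$ the quantity $\locrate(\rank(X);\rank(Y))$ need not depend on the input only through $p_{\rank(X)}$.
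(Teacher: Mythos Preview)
Your proposal is correct and follows essentially the same route as the paper: both use the exact formula $C=\max_{p_{\lspan{X^\tr}}}[\locrate+I(\lspan{X^\tr};\lspan{Y^\tr})]$ from Theorem~\ref{the:89s}, the upper bound on $C_\subs$ from Lemma~\ref{lemma:rsuloc}, and the lower bound from Theorem~\ref{the:diq2}, together with the pointwise data-processing inequalities $B''\le B'\le B$. The only cosmetic difference is that the paper phrases part~1 as a contradiction (``if no $p$ achieves $C$ and $R^U$ simultaneously then $C>R^U\ge C_\subs$''), whereas you argue directly that any maximizer $p^*$ of $A+B'$ must also achieve $C$; the two formulations are equivalent.
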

\begin{IEEEproof}
We first prove the necessary condition. Fix a row-space-symmetric LOC such that $C=C_\subs$.
By Lemma~\ref{lemma:rsuloc},
 $$C_{\subs} \leq R^U \triangleq
\max_{p_{\lspan{X^\tr}}}\left[\locrate(\rank(X);\rank(Y)) +
  \mutual(\lspan{X^\tr};\rank(Y))\right].$$
 On the other hand,
by Theorem~\ref{the:89s},
  \begin{equation*}
    C = \max_{p_{\lspan{X^\tr}}} \left[J(\rank(X);\rank(Y)) + I(\lspan{X^\tr};\lspan{Y^\tr})\right].
  \end{equation*}
Since $I(\lspan{X^\tr};\lspan{Y^\tr}) \geq
\mutual(\lspan{X^\tr};\rank(Y))$ for any $p_{\lspan{X^\tr}}$, if there
exists no $p_{\lspan{X^\tr}}$ achieving $C$ and $R^U$ simultaneously,
$C > R^U \geq C_{\subs}$, a contradiction to $C=C_\subs$.
Fix $p_{\lspan{X^\tr}}$ that achieves $C$ and $R^U$ simultaneously.
We have $\mutual(\lspan{X^{\tr}}; \lspan{Y^{\tr}}) =
  \mutual(\lspan{X^\tr};\rank(Y))$, which implies
  $I(\lspan{X^\tr};\lspan{Y^\tr}|\rank(Y))=0$, i.e., $\lspan{X^\tr}\rightarrow \rank(Y)\rightarrow \lspan{Y^\tr}$ is a
  Markov chain.

Now we show the sufficient condition. Fix a $p_{\lspan{X^\tr}}$ that achieves
  $C$ and for which $\lspan{X^\tr}\rightarrow \rank(X) \rightarrow \rank(Y)\rightarrow \lspan{Y^\tr}$ is a
  Markov chain. So $\mutual(\lspan{X^{\tr}}; \lspan{Y^{\tr}}) =
  \mutual(\rank(X);\rank(Y))$ for this $p_{\lspan{X^\tr}}$.
  Thus
  \begin{IEEEeqnarray*}{rCl}
    C & = & J(\rank(X);\rank(Y)) + I(\lspan{X^\tr};\lspan{Y^\tr}) \\
    & = & J(\rank(X);\rank(Y)) + \mutual(\rank(X);\rank(Y)) \\
    & \leq & C_\subs,
  \end{IEEEeqnarray*}
  where the last inequality follows from Theorem~\ref{the:diq2}.
  Therefore $C=C_\subs$.
\end{IEEEproof}

To verify the sufficient condition given in Theorem~\ref{the:non2}, we
do not need to check all input distributions that achieve $C$.  If
$p_{\lspan{X^\tr}}$ satisfies the sufficient condition in
Theorem~\ref{the:non2}, we can apply Lemma~\ref{cor:dd3} on
$p_{\lspan{X^\tr}}$ since $C=C_L$ for row-space-symmetric LOCs, and
obtain that $p_{\lspan{X^\tr}}'$ satisfies the sufficient condition
and has the structure defined in Lemma~\ref{cor:dd3}-1).
Therefore, we only need to check the sufficient condition for input
distributions with the structure that for each $r$,
$p_{\rank(X)}(r)=p_{\lspan{X^\tr}}(U_r)$ for certain $U_r\in \Gr(r,\ffield^M)$.

\begin{example}\label{example:rssopt}
  Since $\loc_{2}(H,1)$ is row-space-symmetric for any $H$
  (cf. Example~\ref{ex:rss}), we can use Theorem~\ref{the:non2} to
  characterize a sufficient condition such that $C=C_\subs$.
  
  Consider an input distribution $p^*$ with $p^*_{\rank(X)}(0)=p_0$
  and $p^*_{\rank(X)}(1)=p^*_{\lspan{X^\tr}}(U_1) = 1- p_0 = p_1$ for
  some $U_1\in \Gr(1,\ffield^M)$. Hence $p^*_{\lspan{X^\tr}}(U) = 0$
  for all $U\neq U_1 \in \Gr(1,\ffield^M)$.  We first check the
  sufficient condition.  By \eqref{eq:1ll}, for $p^*$, the Markov
  chain in the sufficient condition holds for any choices of $p_0$,
  $0\leq p_0 \leq 1$ and $U_1$. To satisfy the sufficient condition, we
  further require that $p^*$ achieves $C$. Now we assume $0< p_0 < 1$
  since otherwise, $p^*$ would not be capacity achieving unless the
  channel is trivial. A necessary and sufficient condition such that
  $p^*$ achieves $C$ is given by the KKT condition:
  \begin{IEEEeqnarray*}{rCl}
    C & = & \log \frac{1}{p_{\rank(Y)}(0)},  \\
    C & = & - \log p_1 + P_{\rank(Y)|\lspan{X^\tr}}(0|U_1) \log
    \frac{P_{\rank(Y)|\lspan{X^\tr}}(0|U_1)p_1}{p_{\rank(Y)}(0)},  \\
    C & \geq & - \log p_1 + P_{\rank(Y)|\lspan{X^\tr}}(0|U) \log
    \frac{P_{\rank(Y)|\lspan{X^\tr}}(0|U_1)p_1}{p_{\rank(Y)}(0)} \\
    & & + D_{\text{KL}}(P_{\lspan{Y^\tr}|\lspan{X^\tr}}(\cdot|U)||P_{\lspan{Y^\tr}|\lspan{X^\tr}}(\cdot|U_1)).
  \end{IEEEeqnarray*}
  Note that the first two equalities fix $p_1$ and $C$ as functions of
  $P_{\rank(Y)|\lspan{X^\tr}}(0|U_1)$. The third inequality gives a
  constraint for $U_1$, i.e., for all $U\neq U_1 \in
  \Gr(1,\ffield^M)$, we have
  \begin{IEEEeqnarray*}{rCl}
    D_{\text{KL}}(P(\cdot|U)||P(\cdot|U_1))
    & \leq & (P_{\rank(Y)|\lspan{X^\tr}}(0|U) -
  P_{\rank(Y)|\lspan{X^\tr}}(0|U_1)) 
     \log \frac {p_{\rank(Y)}(0)}{P_{\rank(Y)|\lspan{X^\tr}}(0|U_1)p_1}.
  \end{IEEEeqnarray*}
  Substituting the value of $p_1$, we have,
  \begin{IEEEeqnarray*}{rCl}
    \IEEEeqnarraymulticol{3}{l}{D_{\text{KL}}(P(\cdot|U)||P(\cdot|U_1))}
    \\
    & \leq & \frac{P_{\rank(Y)|\lspan{X^\tr}}(0|U) -
  P_{\rank(Y)|\lspan{X^\tr}}(0|U_1)}{P_{\rank(Y)|\lspan{X^\tr}}(1|U_1)}
 \log \frac{1}{P_{\rank(Y)|\lspan{X^\tr}}(0|U_1)},\ \forall U\neq U_1 \in
  \Gr(1,\ffield^M). \IEEEyesnumber \IEEEeqnarraynumspace \label{eq:sidsds}
  \end{IEEEeqnarray*}  
  Therefore, if \eqref{eq:sidsds} holds, there exists $p_1$ such that
  $p^*$ is capacity achieving, and hence $C=C_{\subs}$. 
\end{example}

\subsection{Degraded Linear Operator Channels}
\label{sec:34}

\begin{definition}\label{def:degraded}
  A LOC is \emph{degraded} if $\mutual(X;Y)=\mutual(\lspan{X};\lspan{Y})$ for all
$p_X$.
\end{definition}

By definition, it is clear that a degraded LOC has $C = C_{\subs}$.
Some degraded LOCs have been studied in the literature.
When $M=N$, the LOC with
$H$ uniformly distributed among all full rank $M\times M$ matrices is degraded
\cite{silva08c}.  If $H$ contains uniformly i.i.d. components, it was
shown that the corresponding LOC is also degraded\cite{siavoshani11}.  LOCs with
uniform-given-rank transfer matrices \cite{nobrega11, nobrega11a} are degraded, and
uniform-given-rank transfer matrices include the transfer matrices
studied in \cite{silva08c,siavoshani11} as special cases.

In this section, we focus on the general properties of degraded LOCs.
Since
\begin{IEEEeqnarray*}{rCl}
  \mutual(X;Y)
	  & = &\sum_{V,U\in \Pj(\ffield^T)} \sum_{\substack{\bX,\bY:\\\lspan{\bX}=U,\lspan{\bY}=V}} p(\bX,\bY) \log  \frac{p(\bX,\bY)}{p_X(\bX)p_{Y}(\bY)}  \\
	  & \geq &\sum_{V, U\in \Pj(\ffield^T)}
          p_{\lspan{X}\lspan{Y}}(U,V) \log
          \frac{p_{\lspan{X}\lspan{Y}}(U,V)}{p_{\lspan{X}}(U)p_{\lspan{Y}}(V)} \\
      & = &\mutual(\lspan{X};\lspan{Y}),
\end{IEEEeqnarray*}
where the inequality follows from the log-sum inequality (cf. \cite{yeung08}),
 a LOC is degraded if and only if
   \begin{IEEEeqnarray}{rCl}
     \label{eq:condoo1}
     \forall \bY,\ P_{Y|X}(\bY|\bX) & = & P_{Y|X}(\bY|\bX')\ \text{if}\
     \lspan{\bX}=\lspan{\bX'}, \\
     \noalign{\noindent and for all $p_X$\vspace{\jot}}
     \forall \bX, \
     \frac{P_{Y|X}(\bY|\bX)}{p_Y(\bY)}&=&\frac{P_{Y|X}(\bY'|\bX)}{p_Y(\bY')}
       \ \text{if}\
     \lspan{\bY}=\lspan{\bY'}. \IEEEeqnarraynumspace \label{eq:condoo2}
   \end{IEEEeqnarray}

\begin{example}\label{ex:ioosd}
We check when
$\loc_{2}(H,1)$ is degraded.
For this example, \eqref{eq:condoo2} holds trivially and
\eqref{eq:condoo1} is equivalent to
\begin{equation}\label{eq:iffloc21}
  P_{Y|X}(y|x_1) = P_{Y|X}(y|x_2), \ \forall y, x_1, x_2 \in
  \ffield^{1\times 2}.
\end{equation}
We have at most six linear constraints on the
distribution of $H$ such that $\loc_{2}(H,1)$ is degraded.

Note that \eqref{eq:iffloc21} is equivalent to \eqref{eq:cioss}. Hence
we can rephrase the conclusion of Example~\ref{example:uuu} as
$\loc_{2}(H,1)$ with a unique subspace degradation has $C=C_{\subs}$
if and only if it is degraded.  However, a LOC may not be degraded
even if $C=C_{\subs}$.  As an example, for the distribution of $H_2\in
\ffield^{2\times 2}$ in Table~\ref{tab:dist2}, $\loc_{2}(H_2,1)$ has
multiple subspace degradations, but $C=C_\subs=1$bit. The optimal
input distribution has $p_{\rank(X)}(0) = p_{\lspan{X^\tr}}(\lspan{[1\ 1]^\tr})=0.5$.
\end{example}

\begin{table}
   \centering
   \caption{A distribution over $\ffield_2^{2\times 2}$. Each numbered
     cell is the probability mass of the matrix whose first column is
     the row index of the table and second column is the column index of the table.}
   \label{tab:dist2}
   \begin{tabular}{c||c|c|c|c}
     & $\begin{bmatrix}0\\0\end{bmatrix}$
     & $\begin{bmatrix}1\\0\end{bmatrix}$
     & $\begin{bmatrix}0\\1\end{bmatrix}$
     & $\begin{bmatrix}1\\1\end{bmatrix}$ \\
     \hline \hline
     $\begin{bmatrix}0\\0\end{bmatrix}$ & 0 & 0 & $\frac{1}{6}$ & 0 \\
     \hline
     $\begin{bmatrix}1\\0\end{bmatrix}$ & 0 & 0 & $\frac{1}{12}$ &
     $\frac{1}{12}$ \\
     \hline
     $\begin{bmatrix}0\\1\end{bmatrix}$ & $\frac{1}{6}$ &
     $\frac{1}{12}$ & $\frac{1}{6}$ & $\frac{1}{12}$ \\
     \hline
     $\begin{bmatrix}1\\1\end{bmatrix}$ & 0 & $\frac{1}{12}$ &
     $\frac{1}{12}$ & 0 \\
   \end{tabular}
 \end{table}

\begin{theorem}\label{the:degradedrowspace}
  A degraded LOC has a unique
  subspace degradation and it is row-space-symmetric.
\end{theorem}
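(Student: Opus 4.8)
The plan is to read off both assertions from the characterization \eqref{eq:condoo1}--\eqref{eq:condoo2} of degraded LOCs; it in fact suffices to use \eqref{eq:condoo1}, namely that $P_{Y|X}(\bY|\bX)$ depends on $\bX$ only through the column space $\lspan{\bX}$. Unique subspace degradation is then immediate: summing the equality in \eqref{eq:condoo1} over all $\bY$ with $\lspan{\bY}=V$ yields $P_{\lspan{Y}|X}(V|\bX)=P_{\lspan{Y}|X}(V|\bX')$ whenever $\lspan{\bX}=\lspan{\bX'}$, which is exactly condition \eqref{eq:id90sl} for a LOC to have a unique subspace degradation.

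For row-space symmetry, fix $\bX$ of rank $r$ and $\bY$ with $\lspan{\bY}\subset\lspan{\bX}$, and take a rank decomposition $\bX=\mathbf{B}\mathbf{D}$ with $\mathbf{B}\in\ffield^{T\times r}$ of full column rank (so $\lspan{\mathbf{B}}=\lspan{\bX}$) and $\mathbf{D}\in\ffield^{r\times M}$ of full row rank; let $\mathbf{E}\in\ffield^{r\times N}$ be the unique matrix with $\bY=\mathbf{B}\mathbf{E}$, which exists since $\lspan{\bY}\subset\lspan{\mathbf{B}}$. One checks $\lspan{\mathbf{D}^\tr}=\lspan{\bX^\tr}$ and $\lspan{\mathbf{E}^\tr}=\lspan{\bY^\tr}$, and Lemma~\ref{the:symm} gives $P_{Y|X}(\bY|\bX)=\Pr\{\mathbf{D}H=\mathbf{E}\}$. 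The key observation is that, with $\bY$ (hence $\mathbf{E}$ and $\mathbf{B}$) held fixed, as $\bX$ ranges over all $T\times M$ matrices with column space $\lspan{\mathbf{B}}$ the factor $\mathbf{D}$ ranges over \emph{all} full-row-rank $r\times M$ matrices; hence \eqref{eq:condoo1} upgrades to the statement that $\Pr\{\mathbf{D}H=\mathbf{E}\}=\Pr\{\mathbf{D}'H=\mathbf{E}\}$ for every pair of full-row-rank $\mathbf{D},\mathbf{D}'\in\ffield^{r\times M}$ — that is, $\Pr\{\mathbf{D}H=\mathbf{E}\}$ is independent of the full-row-rank factor $\mathbf{D}$ altogether.

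It then remains to see that $\Pr\{\mathbf{D}H=\mathbf{E}\}$ depends on $\mathbf{E}$ only through $\lspan{\mathbf{E}^\tr}$. If $\mathbf{E},\mathbf{E}'\in\ffield^{r\times N}$ have $\lspan{\mathbf{E}^\tr}=\lspan{\mathbf{E}'^\tr}$, then $\mathbf{E}'=\mathbf{G}\mathbf{E}$ for some $\mathbf{G}\in\Fr(\ffield^{r\times r})$ (two matrices of the same size with equal row space differ by invertible row operations), and $\Pr\{\mathbf{D}H=\mathbf{G}\mathbf{E}\}=\Pr\{(\mathbf{G}^{-1}\mathbf{D})H=\mathbf{E}\}=\Pr\{\mathbf{D}H=\mathbf{E}\}$, the last equality by the previous step since $\mathbf{G}^{-1}\mathbf{D}$ is again full row rank. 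Putting the pieces together, $P_{Y|X}(\bY|\bX)$ depends on $\bX$ only through $r=\rank(\bX)=\dim\lspan{\bX^\tr}$ and on $\bY$ only through $\lspan{\bY^\tr}$; in particular it is determined by $(\lspan{\bX^\tr},\lspan{\bY^\tr})$ whenever $\lspan{\bY}\subset\lspan{\bX}$, which is the defining condition in Definition~\ref{def:rsu} (and is in fact slightly stronger, since only $\dim\lspan{\bX^\tr}$ enters). If one wants the explicit form, summing over the $\bY$ with fixed row space contained in $\lspan{\bX}$ and invoking Lemma~\ref{lemma:cond} recovers $P_{Y|X}(\bY|\bX)=P_{\lspan{Y^\tr}|\lspan{X^\tr}}(\lspan{\bY^\tr}|\lspan{\bX^\tr})/\cmat{\rank(\bX)}{\rank(\bY)}$.

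The routine ingredients are the linear-algebra bookkeeping — existence of the rank decomposition, the identities $\lspan{\mathbf{D}^\tr}=\lspan{\bX^\tr}$ and $\lspan{\mathbf{E}^\tr}=\lspan{\bY^\tr}$, and the fact that same-size matrices with equal row space differ by left multiplication by an invertible matrix. The one point that needs care, and the crux of the argument, is the upgrade of \eqref{eq:condoo1} to ``$\Pr\{\mathbf{D}H=\mathbf{E}\}$ is independent of the full-row-rank factor $\mathbf{D}$'': this holds precisely because fixing the column space of $\bX$ leaves the full-row-rank factor of $\bX$ completely unconstrained, so sweeping $\bX$ through a single column space is the same as sweeping $\mathbf{D}$ through all full-row-rank matrices. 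Once that is in place, the rest is forced.
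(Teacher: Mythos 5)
Your proposal is correct and takes essentially the same route as the paper's proof: in both, condition \eqref{eq:condoo1} together with Lemma~\ref{the:symm} yields that $\Pr\{\mathbf{D}H=\mathbf{E}\}$ is the same for all full-row-rank $\mathbf{D}$ (the paper's \eqref{eq:2}), and the dependence on $\mathbf{E}$ is then reduced to $\lspan{\mathbf{E}^\tr}$ by the same invertible change of variable. Your side observation that the argument actually shows $P_{Y|X}(\bY|\bX)$ depends on $\bX$ only through $\rank(\bX)$ is a correct (and slightly sharper) reading of the same computation.
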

\begin{IEEEproof}
Fix a degraded LOC.
Since \eqref{eq:condoo1} implies the condition given in \eqref{eq:id90sl},
the subspace degradation is unique for a degraded LOC.
Fix full-row-rank $r\times M$ matrices $\mathbf{D}$ and $\mathbf{D}'$,
and an $r\times N$ matrix $\mathbf{E}$.
By \eqref{eq:condoo1}, for any full-column-rank matrix $\mathbf{B}$,
\begin{equation*}
  P_{Y|X}(\mathbf{BE}|\mathbf{BD}) = P_{Y|X}(\mathbf{BE}|\mathbf{BD}').
\end{equation*}
By Lemma~\ref{the:symm},
\begin{equation}
  \label{eq:2}
  \Pr\{\mathbf{D}H=\mathbf{E}\} = \Pr\{\mathbf{D}'H=\mathbf{E}\}.
\end{equation}
We show that the LOC is row-space symmetric using the above equality.

Fix any input $\bX$ and $\bX'$ and output $\bY$ and $\bY'$ satisfying $\lspan{\bY} \leq
\lspan{\bX}$, $\lspan{\bY'} \leq \lspan{\bX'}$, $\lspan{\bX^\tr} =
\lspan{\bX'^\tr}$ and $\lspan{\bY^\tr} = \lspan{\bY'^\tr}$.
Then we can write $\bX = \mathbf{BD}$, $\bY = \mathbf{BE}$,
$\bX'=\mathbf{B}'\mathbf{D}$ and $\bY'=\mathbf{B}'\mathbf{E}'$
(cf. \eqref{eq:ssod-1} and \eqref{eq:ssod-2} in
Appendix~\ref{sec:symmopt}).
Since $\lspan{\mathbf{E}^\tr} = \lspan{\mathbf{E}'^\tr}$, there exists
a full-rank square matrix $\Phi$ such that $\mathbf{E} = \Phi
\mathbf{E}'$.
Then we have
\begin{IEEEeqnarray}{rCl}
  P(\bY|\bX)
  & = & \label{eq:us1}
  \Pr\{\mathbf{D}H =\mathbf{E}\} \\
  & = & \IEEEnonumber
  \Pr\{\mathbf{D}H = \Phi\mathbf{E}'\} \\
  & = & \IEEEnonumber
  \Pr\{\Phi^{-1}\mathbf{D}H = \mathbf{E}'\} \\
  & = & \label{eq:us4}
  \Pr\{\mathbf{D}H =\mathbf{E}'\} \\
  & = & \label{eq:us5}
  P(\bY'|\bX'),
\end{IEEEeqnarray}
where \eqref{eq:us1} and \eqref{eq:us5} follow from
Lemma~\ref{the:symm}, and \eqref{eq:us4} follows from
\eqref{eq:2} and $\rank(\Phi^{-1}\mathbf{D}) = \rank(\mathbf{D})$.
The proof is completed by noting that \eqref{eq:us5} is sufficient for
a LOC being row-space-symmetric.
\end{IEEEproof}

The above theorem tells us that all the LOCs studied in
\cite{silva08c,siavoshani11, nobrega11, nobrega11a} have a unique
subspace degradation.  Now we have a better understanding of why the
capacity of these LOCs can be achieved by only optimizing
the input rank distribution.

We say a LOC is \emph{rank-symmetric} if its transition matrix is
rank-symmetric (see Definition~\ref{def:rsu2}). 
We see that  $\loc(H,T)$ is {rank-symmetric} if and only if there exists a function
  $\mu:\mathbb{Z}^+\times \mathbb{Z}^+\rightarrow [0\ 1]$ such that
\begin{equation*}
   P_{Y|X}(\bY|\bX) = \left\{ \begin{array}{ll}
    \mu(\rank(\bX),\rank(\bY)) &
    \lspan{\bY}\leq
    \lspan{\bX} \\ 0 & \text{otherwise}, \end{array}\right. \label{eq:uni0qq}
\end{equation*}
where $\mathbb{Z}^+$ is the set of nonnegative integers.

By the definition, we see that a rank-symmetric LOC is also row-space-symmetric
(cf. Definition~\ref{def:rsu} and Definition~\ref{def:rsu2}). The following theorem gives a
stronger characterization of rank-symmetric LOCs.

\begin{lemma}\label{lemma:uniform}
   A rank-symmetric LOC is degraded.
\end{lemma}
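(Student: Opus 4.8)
The plan is to verify the two equivalent conditions \eqref{eq:condoo1} and \eqref{eq:condoo2} that characterize degraded LOCs, and to do so directly from the defining property of a rank-symmetric LOC, namely that $P_{Y|X}(\bY|\bX)=\mu(\rank(\bX),\rank(\bY))$ whenever $\lspan{\bY}\subseteq\lspan{\bX}$ and $P_{Y|X}(\bY|\bX)=0$ otherwise. First I would check \eqref{eq:condoo1}: suppose $\lspan{\bX}=\lspan{\bX'}$, so in particular $\rank(\bX)=\rank(\bX')$. For any $\bY$, the condition $\lspan{\bY}\subseteq\lspan{\bX}$ is equivalent to $\lspan{\bY}\subseteq\lspan{\bX'}$; hence either both transition probabilities vanish, or both equal $\mu(\rank(\bX),\rank(\bY))=\mu(\rank(\bX'),\rank(\bY))$. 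Either way $P_{Y|X}(\bY|\bX)=P_{Y|X}(\bY|\bX')$, so \eqref{eq:condoo1} holds.

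The second step is to verify \eqref{eq:condoo2}, i.e.\ that for every input distribution $p_X$, every $\bX$, and every pair $\bY,\bY'$ with $\lspan{\bY}=\lspan{\bY'}$, one has $P_{Y|X}(\bY|\bX)/p_Y(\bY)=P_{Y|X}(\bY'|\bX)/p_Y(\bY')$. The key observation is twofold. On one hand, since $\lspan{\bY}=\lspan{\bY'}$, we have $\rank(\bY)=\rank(\bY')$ and, for any $\bX$, $\lspan{\bY}\subseteq\lspan{\bX}\iff\lspan{\bY'}\subseteq\lspan{\bX}$; therefore $P_{Y|X}(\bY|\bX)=P_{Y|X}(\bY'|\bX)$ for every $\bX$ (both equal $\mu(\rank(\bX),\rank(\bY))$ or both vanish). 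On the other hand, summing over $\bX$ against $p_X$ gives $p_Y(\bY)=\sum_{\bX}p_X(\bX)P_{Y|X}(\bY|\bX)=\sum_{\bX}p_X(\bX)P_{Y|X}(\bY'|\bX)=p_Y(\bY')$. So both the numerators and the denominators of the two ratios in \eqref{eq:condoo2} coincide, and the identity holds trivially (with the usual convention that $0/0$ is handled by the fact that $p_Y(\bY)=0$ forces $P_{Y|X}(\bY|\bX)=0$ as well). Hence \eqref{eq:condoo1} and \eqref{eq:condoo2} both hold, so the LOC is degraded.

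There is no real obstacle here: the whole argument rests on the single elementary fact that, because the column space of a rank-$r$ matrix is a subspace of $\lspan{\bX}$ precisely when the inclusion of column spaces holds, the transition probability $P_{Y|X}(\bY|\bX)$ of a rank-symmetric LOC depends on $\bX$ only through $\lspan{\bX}$ and on $\bY$ only through $\lspan{\bY}$ (via $\rank(\bY)$ together with the inclusion relation, which itself depends on $\bY$ only through $\lspan{\bY}$). The mild care needed is just in stating the $p_Y(\bY)=0$ degenerate case so that \eqref{eq:condoo2} is meaningful, but this is exactly the same convention already implicit in Definition~\ref{def:degraded} and in the log-sum-inequality derivation preceding \eqref{eq:condoo1}. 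If one preferred, one could also prove the lemma in one line by invoking Theorem~\ref{the:degradedrowspace}'s converse-style reasoning, but the direct verification above is cleaner and self-contained.
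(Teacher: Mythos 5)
Your proof is correct and follows essentially the same approach as the paper: verify \eqref{eq:condoo1} directly from the fact that the transition probability depends only on the pair of column spaces, then verify \eqref{eq:condoo2} by observing that both $P_{Y|X}(\cdot|\bX)$ and $p_Y(\cdot)$ depend on the output matrix only through its column space, so both the numerator and denominator of each ratio are unchanged when $\bY$ is replaced by $\bY'$ with $\lspan{\bY}=\lspan{\bY'}$. The paper presents the second step by explicitly expanding $p_Y(\bY)$ as a sum over $U'\supseteq\lspan{\bY}$ weighted by $\mu$, whereas you simply note the pointwise equality $P_{Y|X}(\bY|\bX)=P_{Y|X}(\bY'|\bX)$ and sum; the two are the same argument stated at different levels of detail, and your added remark about the $p_Y(\bY)=0$ degenerate case is a harmless refinement.
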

\begin{IEEEproof}%
   We can check that
   \eqref{eq:condoo1} and \eqref{eq:condoo2} hold for a rank-symmetric LOC.
   By the definition of rank-symmetric LOCs, we know that
   $P_{Y|X}(\bY|\bX)$ only depends on $U$ and $V$, which verifies \eqref{eq:condoo1}. By the same
   property of rank-symmetric LOCs,
  \begin{IEEEeqnarray*}{rCl}
    p_Y(\bY)
    & = &
    \sum_{\bX':V\leq \lspan{\bX'}} P_{Y|X}(\bY|\bX) p_X(\bX) \\
    & = &
    \sum_{U'\in \Pj(\ffield^T):V\leq
      U'} \mu(\dim(U'),\dim(V)) \sum_{\bX:\lspan{\bX}=U' }  p_{X}(\bX)
    \\
    & = &
    \sum_{r} \mu(r,\dim(V))
    \sum_{U'\in \Gr(r,\ffield^T):V\leq U'} p_{\lspan{X}}(U').
  \end{IEEEeqnarray*}
  This verifies \eqref{eq:condoo2}.
\end{IEEEproof}

But a degraded LOC may not be
rank-symmetric.

\begin{example}
  Consider $\loc_{2}(H_2,1)$, $H_2\in \ffield^{2\times 2}$ as an example. %
For the distribution of $H$ as given in Table~\ref{tab:dist1},
we can calculate that
\begin{IEEEeqnarray*}{rCl}
  P_{Y|X}(z_1|z_i) & = & \frac{1}{6}, \quad i=1,2,3 \\
  P_{Y|X}(z_2|z_i) & = & \frac{1}{6}, \quad i=1,2,3 \\
  P_{Y|X}(z_3|z_i) & = & \frac{1}{3}, \quad i=1,2,3 \\
  P_{Y|X}(z_0|z_i) & = & \frac{1}{3}, \quad i=1,2,3,
\end{IEEEeqnarray*}
   where 
\begin{equation}\label{eq:z}
  z_0  =  [0\ 0], \quad z_1  =  [1\ 0], \quad
  z_2 =  [0\ 1], \ \text{and}\  z_3  =  [1\ 1].
\end{equation}
We can check by \eqref{eq:iffloc21} that $\loc_{2}(H,1)$ with the distribution of $H$ given in
Table~\ref{tab:dist1} is degraded. But this LOC is not rank symmetric.
\end{example}

 \begin{table}
   \centering
   \caption{A distribution over $\ffield_2^{2\times 2}$. Each numbered
     cell is the probability mass of the matrix whose first column is
     the row index of the table and second column is the column index of the table.}
   \label{tab:dist1}
   \begin{tabular}{c||c|c|c|c}
     & $\begin{bmatrix}0\\0\end{bmatrix}$
     & $\begin{bmatrix}1\\0\end{bmatrix}$
     & $\begin{bmatrix}0\\1\end{bmatrix}$
     & $\begin{bmatrix}1\\1\end{bmatrix}$ \\
     \hline \hline
     $\begin{bmatrix}0\\0\end{bmatrix}$ & 0 & $\frac{1}{12}$ & $\frac{1}{12}$ & $\frac{1}{12}$ \\
     \hline
     $\begin{bmatrix}1\\0\end{bmatrix}$ & $\frac{1}{12}$ & $\frac{1}{6}$ & 0 &
     0 \\
     \hline
     $\begin{bmatrix}0\\1\end{bmatrix}$ & $\frac{1}{12}$ &
     0 & $\frac{1}{6}$ & 0 \\
     \hline
     $\begin{bmatrix}1\\1\end{bmatrix}$ & $\frac{1}{12}$ &  0 &
     0 & $\frac{1}{6}$ \\
   \end{tabular}
 \end{table}

The following theorem shows the relation between uniform-given-rank
transfer matrices and rank-symmetric LOCs.

\begin{theorem}\label{the:uniformeq}
  Let $H$ be a random matrix with dimension $M\times N$.
  i) If $T\geq M$ and  $\loc(H,T)$ is rank-symmetric,
      then $H$ is uniform-given-rank.
  ii) If $H$ is uniform-given-rank, then $\loc(H,T)$ is rank-symmetric.
\end{theorem}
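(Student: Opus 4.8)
The plan is to prove the two implications separately. The engine in both directions is Lemma~\ref{the:symm} together with the observation that the property of being uniform given rank is invariant under $\bH\mapsto \mathbf{G}_1\bH\mathbf{G}_2$ for invertible $\mathbf{G}_1$ ($M\times M$) and $\mathbf{G}_2$ ($N\times N$), since such operations preserve $\rank(\bH)$.

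For i), I would use $T\geq M$ to fix an input $\bX_0\in\ff^{T\times M}$ of full column rank $M$. Viewed as a linear map $\ff^M\to\ff^T$, $\bX_0$ is injective, so for every $\bH\in\ff^{M\times N}$ the only $H$ with $\bX_0 H=\bX_0\bH$ is $H=\bH$, and moreover $\lspan{\bX_0\bH}\subseteq\lspan{\bX_0}$. Hence $p_H(\bH)=\Pr\{\bX_0 H=\bX_0\bH\}=P_{Y|X}(\bX_0\bH\mid\bX_0)=\mu(\rank(\bX_0),\rank(\bX_0\bH))=\mu(M,\rank(\bH))$, using rank symmetry in the third step and $\rank(\bX_0\bH)=\rank(\bH)$ in the last. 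Thus $p_H(\bH)$ depends on $\bH$ only through $\rank(\bH)$, which is exactly Definition~\ref{def:uni2} (and comparing the total mass of rank-$s$ matrices gives $\mu(M,s)=p_{\rank(H)}(s)/\cmat{M,N}{s}$). The hypothesis $T\geq M$ is essential here: when $T<M$ no injective $\bX_0$ exists, $\{H:\bX_0 H=\bY\}$ is a nontrivial coset, and individual values $p_H(\bH)$ cannot be read off.

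For ii), fix $\bX\in\ff^{T\times M}$ and $\bY\in\ff^{T\times N}$. If $\lspan{\bY}\not\subseteq\lspan{\bX}$ then $P_{Y|X}(\bY|\bX)=0$ since $\lspan{\bX H}\subseteq\lspan{\bX}$ always, so assume $\lspan{\bY}\subseteq\lspan{\bX}$. Writing $\bX=\mathbf{B}\mathbf{D}$ and $\bY=\mathbf{B}\mathbf{E}$ with $\mathbf{B}$ of full column rank, $\mathbf{D}$ a full-row-rank $r\times M$ matrix ($r=\rank(\bX)$) and $\mathbf{E}$ an $r\times N$ matrix (so $\rank(\mathbf{E})=\rank(\bY)$), Lemma~\ref{the:symm} gives $P_{Y|X}(\bY|\bX)=\Pr\{\mathbf{D}H=\mathbf{E}\}$. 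It then remains to show this probability depends on $(\mathbf{D},\mathbf{E})$ only through $(\rank(\mathbf{D}),\rank(\mathbf{E}))=(r,\rank(\bY))$; that common value is the desired $\mu(r,\rank(\bY))$. The point is that for invertible $\mathbf{T}$ ($r\times r$), $\mathbf{G}_1$ ($M\times M$), $\mathbf{G}_2$ ($N\times N$), \[\Pr\{\mathbf{D}H=\mathbf{E}\}=\Pr\{\mathbf{T}\mathbf{D}\mathbf{G}_1 H=\mathbf{T}\mathbf{E}\mathbf{G}_2^{-1}\},\] because left-multiplying the equation by $\mathbf{T}$ is probability-preserving for any $H$, while replacing $H$ by $\mathbf{G}_1 H$ and then by $H\mathbf{G}_2$ does not change the distribution of $H$ (uniform given rank is invariant under invertible row and column operations). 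Now any full-row-rank $r\times M$ matrix can be brought to $[\,\mathbf{I}_r\ \mathbf{0}\,]$ by a suitable choice of $(\mathbf{T},\mathbf{G}_1)$; the stabilizer of $[\,\mathbf{I}_r\ \mathbf{0}\,]$ under these moves still lets $\mathbf{T}$ range over all invertible $r\times r$ matrices (paired with a block-triangular $\mathbf{G}_1$) and $\mathbf{G}_2$ range freely, which is enough to further reduce $\mathbf{E}$ to the $r\times N$ matrix with an $s\times s$ identity block in the top-left corner and zeros elsewhere, $s=\rank(\mathbf{E})$. Hence every admissible pair $(\mathbf{D},\mathbf{E})$ with $\rank(\mathbf{D})=r$ and $\rank(\mathbf{E})=s$ is carried to one canonical pair, so $\Pr\{\mathbf{D}H=\mathbf{E}\}$ depends only on $(r,s)$, and $\loc(H,T)$ is rank symmetric.

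I expect the main obstacle to be the bookkeeping in ii): checking that the simultaneous transformation $(\mathbf{D},\mathbf{E})\mapsto(\mathbf{T}\mathbf{D}\mathbf{G}_1,\mathbf{T}\mathbf{E}\mathbf{G}_2)$ — with the \emph{same} $\mathbf{T}$ acting on both components — is transitive on pairs of prescribed ranks, i.e. that after normalizing $\mathbf{D}$ enough freedom survives to normalize $\mathbf{E}$. A slightly more hands-on alternative that sidesteps this is to first reduce to $\mathbf{D}=[\,\mathbf{I}_r\ \mathbf{0}\,]$, so that $\mathbf{D}H$ is just the top $r\times N$ block of $H$, and then write $\Pr\{\mathbf{D}H=\mathbf{E}\}=\sum_{t} c(r,s,t)\,p_{\rank(H)}(t)/\cmat{M,N}{t}$, where $c(r,s,t)$ is the number of rank-$t$ matrices in $\ff^{M\times N}$ whose first $r$ rows equal $\mathbf{E}$; a short counting argument (again invariance under $\mathbf{E}\mapsto\mathbf{E}\mathbf{G}_2$) shows $c$ depends on $\mathbf{E}$ only through $s=\rank(\mathbf{E})$, giving the same conclusion. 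Either route, combined with part i), proves the theorem.
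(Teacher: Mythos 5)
Your proof is correct, and part~i) is essentially identical to the paper's (fix a full-column-rank $\bX_0$, read off $p_H(\bH) = P_{Y|X}(\bX_0\bH\mid\bX_0) = \mu(M,\rank(\bH))$). The interesting comparison is in part~ii), where you and the paper both use Lemma~\ref{the:symm} plus the invariance of a uniform-given-rank $H$ under $H\mapsto \Phi H\Psi$, but with different bookkeeping.

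The paper avoids the common-$\mathbf{T}$ entanglement you flag as ``the main obstacle'' by a more careful choice of the decomposition: following the computation leading to~\eqref{eq:yam3}, it picks the full-column-rank basis $\mathbf{B}=[\mathbf{B}_0\ \mathbf{B}_1]$ of $\lspan{\bX}$ so that $\mathbf{B}_0$ is already a basis of $\lspan{\bY}$. Then $\bX=\mathbf{B}\mathbf{D}$ with $\mathbf{D}$ full-row-rank $r\times M$, and $\bY=\mathbf{B}\left[\begin{smallmatrix}\mathbf{E}\\\mathbf{0}\end{smallmatrix}\right]$ with $\mathbf{E}$ full-row-rank $s\times N$ — not, as in your primary route, a general $r\times N$ matrix of rank $s$. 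With both $\mathbf{D}$ and $\mathbf{E}$ full-row-rank and the zero block fixed, any two pairs $(\mathbf{D},\mathbf{E})$, $(\mathbf{D}',\mathbf{E}')$ of the same shape are related by \emph{independent} right multiplications $\mathbf{D}'=\mathbf{D}\Phi$, $\mathbf{E}'=\mathbf{E}\Psi$, and because $\left[\begin{smallmatrix}\mathbf{E}\Psi\\\mathbf{0}\end{smallmatrix}\right] = \left[\begin{smallmatrix}\mathbf{E}\\\mathbf{0}\end{smallmatrix}\right]\Psi$, the substitution $H\mapsto\Phi H\Psi^{-1}$ alone gives $\Pr\{\mathbf{D}'H=[\begin{smallmatrix}\mathbf{E}'\\\mathbf{0}\end{smallmatrix}]\}=\Pr\{\mathbf{D}H=[\begin{smallmatrix}\mathbf{E}\\\mathbf{0}\end{smallmatrix}]\}$; no left multiplier $\mathbf{T}$ shared between the two components is ever needed. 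Your primary route, working with $\bY=\mathbf{B}\mathbf{E}$ for a general rank-$s$ block $\mathbf{E}$, must instead invoke transitivity of the joint $(\mathbf{T},\mathbf{G}_1,\mathbf{G}_2)$ action — the stabilizer computation you sketch — which is correct but more work. Your counting alternative (normalize $\mathbf{D}=[\mathbf{I}_r\ \mathbf{0}]$, write the probability as $\sum_t c(r,s,t)p_{\rank(H)}(t)/\cmat{M,N}{t}$, and show $c$ depends on $\mathbf{E}$ only through its rank) is also sound and closer in spirit to the paper's clean argument; note only that invariance under $\mathbf{E}\mapsto\mathbf{E}\mathbf{G}_2$ alone is not transitive on rank-$s$ blocks, so you should explicitly pair it with the left action $\mathbf{E}\mapsto\mathbf{T}\mathbf{E}$ coming from block-triangular $\mathbf{G}_1$. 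Both of your routes, once these details are spelled out, prove the theorem; the paper's decomposition trick is simply the shortest path.
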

\begin{IEEEproof}%
  Proof of i).
  Fix $\bX\in
  \ffield^{T\times M}$ with $\rank(\bX)=M$. The existence of
  such $\bX$ follows from $T\geq M$. For any $\bY\in
  \ffield^{T\times N}$, we have a unique $\bH$ such that
  $\bY=\bX\bH$. Since the LOC is rank-symmetric,
  \begin{align*}
    p_{H}(\bH) & = \Pr\{\bY = \bX H\} \\
    & = \mu(\rank(\bX),\rank(\bY)) \\
    & = \mu(M,\rank(\bH)).
  \end{align*}
  Therefore $H$ is uniform-given-rank.

  Proof of ii). Fix $\bX\in \ffield^{T\times M}$ and $\bY\in
  \ffield^{T\times N}$ with $\rank(\bX)=r$, $\rank(\bY)=s$ and
  $\lspan{\bY}\leq \lspan{\bX}$.
  By the similar procedure for obtaining \eqref{eq:yam3}, we have
  \begin{equation*}
    P_{Y|X}(\bY|\bX) = \Pr\left\{   \mathbf{D} H = \begin{bmatrix} \mathbf{E} \\
      \mathbf{0} \end{bmatrix} \right\},
  \end{equation*}
  for certain full row-rank matrices $\mathbf{D}$ and $\mathbf{E}$ are full row-rank matrices satisfying
  $\lspan{\mathbf{D}^\tr} = \lspan{\bX^\tr}$ and $\lspan{\mathbf{E}^\tr} =
  \lspan{\bY^\tr}$.
  Fix any full-row-rank matrices $\mathbf{D}'\in\ffield^{r\times M}$ and
  $\mathbf{E}'\in\ffield^{s\times N}$.
  Find full rank matrices $\Phi$ and $\Psi$ such that $\mathbf{D}'
  = \mathbf{D}\Phi$ and $\mathbf{E}' = \mathbf{E}\Psi$.
  We have
  \begin{equation*}
    \Pr\left\{   \mathbf{D'} H = \begin{bmatrix} \mathbf{E'} \\
      \mathbf{0} \end{bmatrix} \right\}
     =
    \Pr\left\{   \mathbf{D} \Phi H \Psi^{-1} = \begin{bmatrix} \mathbf{E} \\
      \mathbf{0} \end{bmatrix} \right\} 
     = 
    \Pr\left\{   \mathbf{D} H = \begin{bmatrix} \mathbf{E} \\
      \mathbf{0} \end{bmatrix} \right\},
  \end{equation*}
  where the last equality follows that $H$ is uniform-given-rank. Hence
  \begin{equation*}
    P_{Y|X}(\bY|\bX) = \Pr\left\{   \mathbf{D'} H = \begin{bmatrix} \mathbf{E'} \\
      \mathbf{0} \end{bmatrix} \right\}.
  \end{equation*}
  So $P_{Y|X}(\bY|\bX)$ only relates to the ranks of $\bX$ and $\bY$,
  i.e., $\loc(H,T)$ is rank-symmetric.
\end{IEEEproof}

There exists
rank-symmetric LOCs with non-uniform-given-rank transfer matrices.

\begin{example}
  We give an example of a rank-symmetric LOC
that has a non-uniform transfer matrix.
Consider $\loc_{2}(H_2,1)$, $H_2\in \ffield^{2\times 2}$ with
\begin{equation*}
  p_{H_2}(\mathbf{H})= \frac{1}{4}, \ \text{for}\ \mathbf{H} =
  \begin{bmatrix} 1 & 0 \\ 0 & 0 \end{bmatrix},
  \begin{bmatrix} 0 & 1 \\ 1 & 0 \end{bmatrix},
  \begin{bmatrix} 0 & 0 \\ 0 & 1 \end{bmatrix},
  \begin{bmatrix} 1 & 1 \\ 1 & 1 \end{bmatrix}.
\end{equation*}
We can check that $H_2$ is not uniform-given-rank, but we
can verify that $\loc_{2}(H,1)$ is rank-symmetric.
\end{example}

In the last of this section, we verify a claim given in
Section~\ref{sec:boundsss}.  Note that any transfer matrix $H$ can be
converted to a uniform-given-rank transfer matrix $H^*$ with the same
rank distribution \cite{nobrega11, nobrega11a, siavo12avc} obtained by
$H^* = \Phi H \Psi$, where $\Phi$ and $\Psi$ are independent uniformly
distributed random matrices in $\Fr(\ffield^{M\times M})$ and
$\Fr(\ffield^{N\times N})$, respectively. Hence $C(H,T) \geq C(H^*,T)
= C_{\subs}(H^*,T)$. We show that the lower bound on $C(H,T)$ given in
Theorem~\ref{the:89s} is at least as good as $C_{\subs}(H^*,T)$.

Let $p_X$ be a uniform-given-rank input distribution that achieves
$C_{\subs}(H^*,T)$, the existence of such a distribution is guaranteed
by Theorem~\ref{the:uniquesd}. Let $P^*$ and $P$ be the transition
matrices corresponding to $H^*$ and $H$ respectively. For any input
matrix $\bX'$ with $\rank(\bX') = r$, we have 
\begin{IEEEeqnarray*}{rCl}
  P^*_{\rank(Y)|\rank(X)}(s|r) 
  & = & 
  P^*_{\rank(Y)|X}(s|\bX') \\
  & = &
  \Pr\{ \rank(\bX' \Phi H \Psi) = s\} \\
  & = &
  \Pr\{ \rank(\bX' \Phi H) = s\} \\ 
  & = & 
  \sum_{\bX:\rank(\bX)=r} \Pr\{ \rank(\tilde X H) = s, \tilde X = \bX \} \\
  & = &
  \sum_{\bX:\rank(\bX)=r} P_{\rank(Y)|X}(s|\bX) p_{X|\rank(X)}(\bX|r)
  \\
  & = & 
  P_{\rank(Y)|\rank(X)}(s|r),
\end{IEEEeqnarray*}
where the first equality follows from \eqref{eq:di8s}; and $\tilde X = \bX'\Phi$ is uniformly distributed among all input matrices with
rank $r$, and has the same distribution of $p_{X|\rank(r)}(\bX|r)$.
Therefore for $p_X$,
\begin{IEEEeqnarray*}{rCl}
  I(X;Y)|_{p_H}  & \geq & 
  \locrate(\rank(X),\rank(Y))|_{P_{\rank(Y)|\rank(X)}} 
   + I(\lspan{X^\tr};\lspan{Y^\tr})|_{P_{\lspan{Y^\tr}|\lspan{X^\tr}}} \\
  & \geq & 
  \locrate(\rank(X),\rank(Y))|_{P_{\rank(Y)|\rank(X)}}
   + I(\rank(X);\rank(Y))|_{P_{\rank(Y)|\rank(X)}} \\
  & = & 
  \locrate(\rank(X),\rank(Y))|_{P_{\rank(Y)|\rank(X)}^*} 
   + I(\rank(X);\rank(Y))|_{P_{\rank(Y)|\rank(X)}^*} \\
  & = & 
  C_{\subs}(H^*,T),
\end{IEEEeqnarray*}
where the first inequality follows from Theorem~\ref{the:89s} and the
last equality follows from Theorem~\ref{the:uniquesd}. Thus $C(H,T)
\geq C_{\subs}(H^*,T)$.

\section{Concluding Remarks}
\label{sec:con}

In this paper, we studied upper and lower bounds for both the Shannon
capacity and the subspace coding capacity of LOCs. We characterized
various classes of LOCs with different properties of these bounds, where row spaces and ranks of input and output matrices play important roles.

Our results provide some guidelines for coding design.
Subspace coding is good for LOCs with a unique subspace degradation
since otherwise we have difficulty to find an optimal input
distribution for subspace coding.  For general LOCs, we can use
constant-rank uniform-given-row-space input distribution for subspace
codes since 1) such an optimal input distribution is relatively easy
to compute, and 2) the loss of rate, compared with the subspace coding
capacity, can be small for typical parameters.  Further, it is not
always optimal to uniform input and output of a LOC for applying
subspace coding.

We are motivated to consider other coding schemes for LOCs since
for many cases either the optimal subspace coding scheme is difficult
to find or subspace coding is not capacity achieving.  Readers are
referred to \cite{yang12sumas} for a superposition based coding
scheme that can achieve rate higher than subspace coding.

\appendices

\section{Symmetry Properties in Channel Capacity
  Optimization}
\label{sec:symmopt}

We discuss how the symmetry properties is used to solve the optimization problem for finding the channel capacity of LOCs.
This is useful for getting some numerical results.
We first introduce some notations that will be used in this section
and Appendix~\ref{sec:pf-claims}.

Let $\mathbf B$ be a $t\times r$ matrix with rank $r$, i.e., $\mathbf
B$ is of full column-rank. For a $t\times
m$ matrix $\mathbf A$ with $\lspan{\mathbf A}\leq \lspan{\mathbf
  B}$, define $\mathbf A/\mathbf B$ to be a matrix such that $\mathbf A
= \mathbf B(\mathbf A/\mathbf B)$.  The notation ``$/$'' is well
defined because i) there always exists $\mathbf C$ such that $\mathbf
A = \mathbf{BC}$ since $\lspan{\mathbf A}\leq \lspan{\mathbf B}$
and ii) such $\mathbf C$ is unique since $\mathbf B$ is full column
rank.

Let $\bX$ and $\bY$ be the input and output matrices of $\loc(H,T)$,
respectively, with $\lspan{\bY}\leq \lspan{\bX}$.  A decomposition
of $\bX$ and $\bY$ as in Lemma~\ref{the:symm} can be found as follows.
First, fix a full column rank matrix $\mathbf B$ with $\lspan{\bX} =
\lspan{\mathbf B}$. Then, $\bX = \mathbf B(\bX/\mathbf B)$ and $\bY = \mathbf B
  (\bY/\mathbf B)$. By  Lemma~\ref{the:symm},
\begin{equation*}%
  P_{Y|X}(\bY|\bX)  = \Pr\{ (\bX/\mathbf B) H = \bY/\mathbf B \}.
\end{equation*}

For $U \in
  \Pj(\ffield^M)$, let $\mathbf{D}_U$ be a $\dim(U)\times M$ matrix
  with $\lspan{\mathbf{D}_U^\tr} = U$.  For any $\bX\in
  \ffield^{T\times M}$ and $\bY\in \ffield^{T\times N}$ with
  $\lspan{\bY}\leq \lspan{\bX}$, we can write
 \begin{IEEEeqnarray}{rCl}
   \bX & = & \label{eq:ssod-1}
   \mathbf{BD}_{\lspan{\bX^\tr}}, \\
   \bY & = & \label{eq:ssod-2}
   \mathbf{BE},
 \end{IEEEeqnarray}
 where $\mathbf{B}^T = \bX^\tr/\mathbf{D}_{\lspan{\bX^\tr}}^{\tr}$ and
 $\mathbf{E} = \bY/\mathbf{B}$.

Due to the symmetry properties of the matrix of transition
probabilities in Lemma~\ref{prop:1}, it is not necessary to
calculate $P_{Y|X}(\bY|\bX)$ for all pairs of $\bX$ and $\bY$.
 For each subspace $U \in \Pj(\min\{T,M\},
\ffield^{M})$, we choose one full row rank matrix $\mathbf D$ with
$\lspan{\mathbf D^\tr} = U$ to compute $(\Pr\{\mathbf D H = \mathbf
E\}: \mathbf E\in \ffield^{k\times N})$.  Then for any $\bX$ and $\bY$
with $\lspan{\bY} \leq \lspan{\bX}$ and $\lspan{\bX^\tr} = U$, we
know $P_{Y|X}(\bY|\bX) = \Pr\{\mathbf D H = \mathbf
\bY/(\bX^\tr/\mathbf{D}^\tr)^\tr\}$.  The overall complexity of
computing the transition matrix is
\begin{equation*}
  \sum_{k=0}^{\min\{T,M\}} \gcos{M}{k} q^{kN}
  < \left\{\begin{array}{ll}
      cq^{MN} & M\leq \min\{T,N\} \\
      c'q^{L(M+N-L)} & \text{otherwise},
    \end{array} \right.
\end{equation*}
where $L=\min\{T,(M+N)/2\}$, $c$ and $c'$ are constants. The
inequality for $M\leq \min\{T,N\}$ is obtained as
follows\footnote{This method is suggested by an anonymous reviewer,
  who is thereby acknowledged.}.  We have
\begin{IEEEeqnarray*}{rCl}
  \cmat{N}{k} & = & q^{Nk}\prod_{i=N-k+1}^N (1-q^{-i}) \\
  & > & q^{Nk} \prod_{i=1}^\infty (1-q^{-i}) \\
  & \geq & \kappa q^{Nk}, \IEEEyesnumber \label{eq:kappa}
\end{IEEEeqnarray*}
where $\kappa = \prod_{i=0}^\infty (1- 2^{-i}) \approx 0.28879$ is a constant \cite{cooper00}.
Thus
\begin{IEEEeqnarray*}{rCl}
  \sum_{k=0}^{M} \gcos{M}{k} q^{kN}
  & < & 1/\kappa \sum_{k=0}^{M}\gcos{M}{k} \cmat{N}{k} \\
  & = & 1/\kappa \sum_{k=0} \cmat{M,N}{k} \\
  & = & 1/\kappa q^{MN},
\end{IEEEeqnarray*}
where the last equality is obtained by \eqref{eq:ckss66}.  When $M > \min\{T,N\}$, we have
\begin{IEEEeqnarray*}{rCl}
  \sum_{k=0}^{\min\{T,M\}} \gcos{M}{k} q^{kN}
  & = & \sum_{k=0}^{\min\{T,M\}} \frac{\cmat{M}{k}}{\cmat{k}{k}} q^{kN} \\
  & < & \sum_{k=0}^{\min\{T,M\}} \frac{1/\kappa q^{Mk}}{q^{k^2}} q^{kN} \\
  & = & 1/\kappa\sum_{k=0}^{\min\{T,M\}} q^{k(M+N-k)}, \IEEEyesnumber \label{eq:apb}
\end{IEEEeqnarray*}
where the inequality is obtained by \eqref{eq:kappa} with $M$ in place
of $N$ and $\cmat{k}{k} < q^{k^2}$. Note that the $k(M+N-k)$ in
\eqref{eq:apb} takes its maximum at $k=\min\{\min\{T,M\},(M+N)/2\}=L$.
Then by a technique similar to the one used in
\cite[Lemma 1]{gadouleau10}, the inequality for $M > \min\{T,N\}$ is
obtained, where the constant $c' = 1/\kappa \sum_{i=0}^\infty
2^{-i^2} \approx 5.4137$.

After obtaining the transition matrix, we can find an optimal
input distribution by solving the maximization problem in
Theorem~\ref{the:diq}, which is equivalent to
finding an optimal distribution over
$\Pj(\min\{T,M\},\ffield^{M})$.
Since
$|\Pj(\min\{M,T\},\ffield^{M})| = \sum_{k=0}^{\min\{M,T\}} \gco{M}{k}$,
we can bound the number of probability masses to determine as
\begin{equation*}
  \sum_{k=0}^{\min\{M,T\}} \gcos{M}{k} <
  \left\{
      \begin{array}{ll}
        \Theta_1q^{M^2/4} &  T\geq M/2 \\
        \Theta_2q^{T(M-T)} & \text{otherwise},
      \end{array}
   \right.
\end{equation*}
where $\Theta_1$ and $\Theta_2$ are constants. The inequality for
$T\geq M/2$ is obtained by \cite[Lemma 1]{gadouleau10}, while the
inequality for $T< M/2$, is obtained by \cite[Proposition 1]{gadouleau10}.

\section{Proof of Claims in the Proof of Theorem~\ref{the:89s}}
\label{sec:pf-claims}

\begin{IEEEproof}[Proof of Claim~\ref{claim:1}]
Define random variable $Y^{(r)} = Y$ for $r = 0$. For $r = 1, \ldots, \min\{T,M\}$,
define random variables $Y^{(r)}$ and $Y^{(r,\Phi)}$ with $\Phi\in \Fr(\ffield^{r\times r})$ over $\ffield^{T\times N}$ as follows.
For $\lspan{\bY}\leq \lspan{\bX}$, let
\begin{equation*} %
    P_{Y^{(r,\Phi)}|X}(\bY | \bX)  = \left \{
    \begin{array}{ll}
      P_{Y^{(r-1)}|X}(\bY | \bX) & \rank(\bX)\neq r, \\
      \Pr\{\mathbf{D}_{\lspan{\bX^\tr}} H = \Phi \mathbf{E} \} &
      \rank(\bX)=r,
    \end{array}
    \right.
\end{equation*}
where $\mathbf{E} =
\bY/(\bX^\tr/\mathbf{D}_{\lspan{\bX^\tr}}^{\tr})^{\tr}$ (cf.~\eqref{eq:ssod-1}
and \eqref{eq:ssod-2}). Random variables $Y^{(r)}$ are over $\ffield^{T\times N}$ such that for $\lspan{\bY}\leq \lspan{\bX}$,
\begin{equation*}%
  P_{Y^{(r)}|X}(\bY|\bX) = \frac{1}{\cmat{r}{r}} \sum_{\Phi\in\Fr(\ffield^{r\times
  r})} P_{Y^{(r,\Phi)}|X}(\bY | \bX).
\end{equation*}
Note that when $\rank(\bX) > r$,
\begin{equation}\label{eq:ccis-0}
  P_{Y^{(r,\Phi)}|X}(\bY | \bX)  = P_{Y^{(r)}|X}(\bY | \bX)  = P_{Y|X}(\bY|\bX).
\end{equation}

We will show that for $r=1,\ldots,\min\{M,T\}$,
\begin{equation}
  \label{eq:claim11id}
  P_{Y^*|X} = P_{Y^{(\min\{M,T\})}|X},
\end{equation}
and
\begin{equation}
  \label{eq:claim11ls}
  \mutual(X;Y^{(r,\Phi)}) = \mutual(X;Y^{(r-1)}).
\end{equation}
Since for a fixed $p_X$, mutual information $I(X;Y)$ is a convex
function of the transition probabilities, we have
\begin{equation*}
  \mutual(X;Y^{(r)}) \leq  \frac{1}{\cmat{r}{r}} \sum_{\Phi\in\Fr(\ffield^{r\times
  r})} \mutual(X;Y^{(r,\Phi)})  = \mutual(X;Y^{(r-1)}).
\end{equation*}
Then, the lemma is proved by
\begin{equation*}
  \mutual(X;Y) = \mutual(X;Y^{(0)}) \geq  \mutual(X;Y^{(\min\{M,T\})}) = \mutual(X;Y^*).
\end{equation*}

We first prove \eqref{eq:claim11id}.
For $\bX$ and $\bY$ with $\rank(\bX)=r$, $\rank(\bY)=s$, $\lspan{\bY^\tr}=V$, $\lspan{\bX^\tr}=U$ and $\lspan{\bY}\leq \lspan{\bX}$,
by definition,
  \begin{equation*}
  P_{Y^{(\min\{M,T\})}|X}(\bY|\bX) =
    \frac{1}{\cmat{r}{r}} \sum_{\Phi \in \Fr(\ffield^{r \times r})} \Pr\{\mathbf{D}_{U} H = \Phi \mathbf{E} \},
  \end{equation*}
  where $\mathbf{D}_U$ and $\mathbf{E}$ are defined in
  \eqref{eq:ssod-1} and \eqref{eq:ssod-2}, respectively.
For $\mathbf{E}_0 \in \mathcal{E} \triangleq
\{\mathbf{K}\in \ffield^{r\times N}:\lspan{\mathbf{K}^{\tr}}
=V\}$, let $\mathcal{C}(\mathbf{E}_0) =
\{\mathbf{C}\in \Fr(\ffield^{r\times r}):\mathbf{CE} = \mathbf{E}_0\}$.  We
see that $\{\mathcal{C}(\mathbf{E}_0), \mathbf{E}_0\in \mathcal{E}\}$
gives a partition of $\Fr(\ffield^{r\times r})$. Since $\mathcal{C}(\mathbf{E}_0)$ for all $\mathbf{E}_0\in \mathcal{E}$
have the same cardinality, $|\mathcal{C}(\mathbf{E}_0)| =
\frac{|\Fr(\ffield^{r\times r})|}{|\mathcal{E}|} =
\frac{\cmat{r}{r}}{\cmat{r}{s}}$ for all $\mathbf{E}_0\in
\mathcal{E}$. Therefore,
\begin{IEEEeqnarray*}{rCl}
  \frac{1}{\cmat{r}{r}} \sum_{\Phi \in
      \Fr(\ffield^{r \times r})} \Pr\{\mathbf{D}_{U} H = \Phi
    \mathbf{E} \}
  & = &
  \frac{1}{\cmat{r}{r}} \sum_{\mathbf{E}_0 \in \mathcal{E}} \sum_{\Phi \in \mathcal{C}(\mathbf{E}_0)} \Pr\{\mathbf{D}_{U} H = \Phi \mathbf{E}\} \\
  & = &
  \frac{1}{\cmat{r}{r}} \sum_{\mathbf{E}_0 \in \mathcal{E}} |\mathcal{C}(\mathbf{E}_0)| \Pr\{\mathbf{D}_{U} H = \mathbf{E}_0\} \\
  & = &
  \frac{1}{\cmat{r}{s}} \sum_{\mathbf{E}_0 \in \mathcal{E}} \Pr\{\mathbf{D}_{U} H = \mathbf{E}_0\} \\
  & = &
  \frac{1}{\cmat{r}{s}} \Pr\{\lspan{(\mathbf{D}_{U} H)^\tr} = V\}\\
  & = &
  \frac{1}{\cmat{r}{s}} P_{\lspan{Y^{\tr}}|\lspan{X^{\tr}}}(V|U).
\end{IEEEeqnarray*}
By the
definition of $Y^*$, \eqref{eq:claim11id} is proved.

Now we prove \eqref{eq:claim11ls}. First, we have for $i\neq r$,
\begin{IEEEeqnarray*}{rCl}
  p_{\rank(X),Y^{(r,\Phi)}} (i,\bY)
  & = &
    \sum_{\bX:\rank(\bX)=i} P_{Y^{(r,\Phi)}|X}(\bY|\bX)p_{X}(\bX) \\
  & = & %
    \sum_{\bX:\rank(\bX)=i} P_{Y^{(r-1)}|X}(\bY|\bX)p_{X}(\bX) \\
  & = & \IEEEyesnumber \label{eq:ccis-11}
  p_{\rank(X),Y^{(r-1)}} (i,\bY),
\end{IEEEeqnarray*}
where the second equality is obtained by the definition of
$P_{Y^{(r,\Phi)}|X}(\bY|\bX)$ for $\rank(\bX)\neq r$.
Specifically, when $r < i$,
\begin{equation*}
  p_{\rank(X),Y^{(r)}} (i,\bY)
   = p_{\rank(X),Y^{(r-1)}} (i,\bY)
\end{equation*}
by the definition of $P_{Y^{(r)}|X}$ and \eqref{eq:ccis-11}.
Recursively applying the above equality, we have that when $r < i$,
\begin{equation}
  \label{eq:ccis-12}
  p_{\rank(X),Y^{(r)}} (i,\bY) = p_{\rank(X),Y} (i,\bY).
\end{equation}

We also have
\begin{IEEEeqnarray*}{rCl}
  p_{\rank(X),Y^{(r,\Phi)}} (r,\bY)
  & = &
  \sum_{\bX:\rank(\bX)=r} P_{Y^{(r,\Phi)}|X}(\bY|\bX)p_{X}(\bX) \\
  & = &
  \sum_{U\in \Gr(r,\ffield^M)} \sum_{\mathbf{B}\in
    \Fr(\ffield^{T\times r})} P_{Y^{(r,\Phi)}|X}(\bY|\mathbf{BD}_{U})p_{X}(\mathbf{BD}_U) \\
  & = & %
  \sum_{U\in \Gr(r,\ffield^M)} \sum_{\mathbf{B}\in
    \Fr(\ffield^{T\times r})} \Pr\{\mathbf{D}_{U} H = \Phi
  (\bY/\mathbf{B}) \} \frac{p_{\lspan{X^{\tr}}}(U)}{\cmat{T}{r}} \\
  & = & \sum_{U\in \Gr(r,\ffield^M)}
  \frac{p_{\lspan{X^{\tr}}}(U)}{\cmat{T}{r}}
  \sum_{\mathbf{B}'\in
    \Fr(\ffield^{T\times r})} \Pr\{\mathbf{D}_{U} H =
  \bY/\mathbf{B}' \} \\
  & = &
  p_{\rank(X),Y} (r,\bY) \\
  & = & \IEEEyesnumber \label{eq:ccis-31}
  p_{\rank(X),Y^{(r-1)}} (r,\bY),
\end{IEEEeqnarray*}
where the third equality follows that $p_X$ is uniform-given-row-space and the
definition of $P_{Y^{(r,\Phi)}|X}(\bY|\bX)$ for $\rank(\bX) = r$;
the forth equality follows by $\Phi (\bY/\mathbf{B}) =
\bY/(\mathbf{B}\Phi^{-1})$ and substituting $\mathbf{B}\Phi^{-1}$ by
$\mathbf{B}'\in \Fr(\ffield^{T\times r})$; and \eqref{eq:ccis-31}
follow from \eqref{eq:ccis-12}.

Thus, by \eqref{eq:ccis-11} and \eqref{eq:ccis-31}, we have
\begin{IEEEeqnarray*}{rCl}
  p_{Y^{(r,\Phi)}}(\bY)
  & = &
  \sum_{i}p_{\rank(X),Y^{(r,\Phi)}}(i,\bY) \\
  & = &
  \sum_{i}p_{\rank(X),Y^{(r-1)}}(i,\bY) \\
  & = &
  p_{Y^{(r-1)}}(\bY),
\end{IEEEeqnarray*}
 and hence
\begin{equation}\label{eq:ccis-7}
  \entropy(Y^{(r,\Phi)})=\entropy(Y^{(r-1)}).
\end{equation}

Further, for $\bX$ with $\rank(\bX)\neq r$, since
$P_{Y^{(r,\Phi)}|X}(\bY | \bX)  =   P_{Y^{(r-1)}|X}(\bY | \bX)$, we have
\begin{equation}\label{eq:ccis-5}
  \entropy(Y^{(r,\Phi)}|X = \bX)
   =
   \entropy(Y^{(r-1)}|X = \bX).
\end{equation}
On the other hand, for $\bX$ with $\rank(\bX) = r$, by substituting $\bX=\mathbf{BD}_{\lspan{\bX^\tr}}$, we have
\begin{IEEEeqnarray*}{rCl}
  \entropy(Y^{(r,\Phi)}|X = \bX)
  & = & \IEEEyesnumber \label{eq:ccis-41}
  \sum_{\bY:\lspan{\bY}\leq \lspan{\mathbf{B}}}
  \Pr\{\mathbf{D}_{\lspan{\bX^\tr}} H = \Phi (\bY/\mathbf{B}) \} 
   \log
  \frac{1}{\Pr\{\mathbf{D}_{\lspan{\bX^\tr}} H = \Phi (\bY/\mathbf{B}) \}}
  \\
  & = & \IEEEyesnumber \label{eq:ccis-42}
  \sum_{\bY:\lspan{\bY}\leq \lspan{\mathbf{B}}}
  \Pr\{\mathbf{D}_{\lspan{\bX^\tr}} H = \bY/(\mathbf{B}\Phi^{-1}) \}  \log
  \frac{1}{\Pr\{\mathbf{D}_{\lspan{\bX^\tr}} H = \bY/(\mathbf{B}\Phi^{-1}) \}}
  \\
  & = & \IEEEyesnumber \label{eq:ccis-43} 
  \sum_{\bY:\lspan{\bY}\leq \lspan{\mathbf{B}}}
  P_{Y|X}(Y|\mathbf{B}\Phi^{-1}\mathbf{D}_{\lspan{\bX^\tr}}) \log
  \frac{1}{P_{Y|X}(Y|\mathbf{B}\Phi^{-1}\mathbf{D}_{\lspan{\bX^\tr}})}
  \\
  & = & \IEEEyesnumber \label{eq:ccis-44}
  \entropy(Y|X = \mathbf{B}\Phi^{-1}\mathbf{D}_{\lspan{\bX^\tr}})
\end{IEEEeqnarray*}
where \eqref{eq:ccis-41} follows from $\lspan{\bX} = \lspan{\mathbf{B}}$
  and the definition of $P_{Y^{(r,\Phi)}|X}(\bY|\bX)$ for $\rank(\bX) = r$;
\eqref{eq:ccis-42} follows by $\Phi (\bY/\mathbf{B}) =
\bY/(\mathbf{B}\Phi^{-1})$; \eqref{eq:ccis-43} follows from Lemma~\ref{the:symm}; and
\eqref{eq:ccis-44} is obtained by $\lspan{\mathbf{B}\Phi^{-1}\mathbf{D}_{\lspan{\bX^\tr}}} = \lspan{\mathbf{B}}$.

Hence
\begin{IEEEeqnarray*}{rCl}
  \sum_{\bX:\rank(\bX)=r}
    \entropy(Y^{(r,\Phi)}|X=\bX) p_{X}(\bX)
  & = &
  \sum_{U\in \Gr(r,\ffield^M)} \sum_{\mathbf{B}\in \Fr(\ffield^{T\times
      r})} \entropy(Y^{(r,\Phi)}|X=\mathbf{B}\mathbf{D}_{U}) p_{X}(\mathbf{B}\mathbf{D}_{U})  \\
  & = &
  \sum_{U\in \Gr(r,\ffield^M)} \sum_{\mathbf{B}\in \Fr(\ffield^{T\times
      r})} \entropy(Y|X=\mathbf{B}\Phi^{-1}\mathbf{D}_{U}) p_{X}(\mathbf{B} \mathbf{D}_{U}) \\
  & = &
  \sum_{U\in \Gr(r,\ffield^M)} \sum_{\mathbf{B}'\in \Fr(\ffield^{T\times
      r})} \entropy(Y|X=\mathbf{B}'\mathbf{D}_{U}) p_{X}(\mathbf{B}'
  \mathbf{D}_{U}) \\
  & = &
  \sum_{\bX:\rank(\bX)=r}\entropy(Y|X=\bX) p_{X}(\bX) \\
  & = & \IEEEyesnumber \label{eq:85w}
  \sum_{\bX:\rank(\bX)=r}\entropy(Y^{(r-1)}|X=\bX) p_{X}(\bX),
\end{IEEEeqnarray*}
where the second equality follows from \eqref{eq:ccis-44}; the third
equality follows by substituting $\mathbf{B}\Phi^{-1}$ by
$\mathbf{B}'\in \Fr(\ffield^{T\times r})$ and the fact that $p_{X}$ is
uniform-given-row-space; and \eqref{eq:85w} follows from \eqref{eq:ccis-0}.

Therefore,
\begin{IEEEeqnarray*}{rCl}
  \entropy(Y^{(r,\Phi)}|X)
  & = &
  \sum_{\bX:\rank(\bX)\neq r} \entropy(Y^{(r,\Phi)}|X=\bX) p_{X}(\bX) 
  + \sum_{\bX:\rank(\bX)=r} \entropy(Y^{(r,\Phi)}|X=\bX) p_{X}(\bX) \\
  & = & \IEEEyesnumber \label{eq:ccis-6}
  \sum_{\bX:\rank(\bX)\neq r} \entropy(Y^{(r-1)}|X=\bX) p_{X}(\bX) 
  + \sum_{\bX:\rank(\bX)=r} \entropy(Y^{(r-1)}|X=\bX) p_{X}(\bX) \\
  & = &  \IEEEyesnumber \label{eq:ccis-62}
  \entropy(Y^{(r-1)}|X),
\end{IEEEeqnarray*}
where  \eqref{eq:ccis-6} follows from \eqref{eq:ccis-5} and
\eqref{eq:85w}.
Lastly, the equality in \eqref{eq:claim11ls} is proved by \eqref{eq:ccis-7} and \eqref{eq:ccis-62}.
\end{IEEEproof}

\begin{IEEEproof}[Proof of Claim~\ref{claim:2}]
By the definition of $P_{Y^*|X}$,
\begin{IEEEeqnarray*}{rCl}
  \entropy(Y^*|X) & = &
  \sum_{\bX} p_X(\bX) \sum_{\bY:\lspan{\bY}\leq\lspan{\bX}} P_{Y^*|X}(\bY|\bX) \log \frac{1}{P_{Y^*|X}(\bY|\bX)} \\
  & = &
  \sum_r \sum_{U\in \Gr(r,\ffield^M)}
  \sum_{\bX:\lspan{\bX^\tr}=U}
  \frac{p_{\lspan{X^\tr}}(U)}{\cmat{T}{r}}
  \sum_s \sum_{V\in \Gr(s,\ffield^N)} \\
  & & \times
  \sum_{\bY:\lspan{\bY^\tr}=V,\lspan{\bY}\leq\lspan{\bX}}
  \frac{P_{\lspan{Y^\tr}|\lspan{X^\tr}}(V|U)}{\cmat{r}{s}}
  \log \frac{\cmat{r}{s}}{P_{\lspan{Y^\tr}|\lspan{X^\tr}}(V|U)} \\
  & = &
  \sum_r \sum_{U\in \Gr(r,\ffield^M)}
  p_{\lspan{X^\tr}}(U)
  \sum_s \sum_{V\in \Gr(s,\ffield^N)} 
  P_{\lspan{Y^\tr}|\lspan{X^\tr}}(V|U)
  \log \frac{\cmat{r}{s}}{P_{\lspan{Y^\tr}|\lspan{X^\tr}}(V|U)} \\
  & = &
  \sum_{s\leq r} p_{\rank(X)\rank(Y)}(r,s) \log {\cmat{r}{s}} + \entropy(\lspan{Y^\tr}|\lspan{X^\tr}),
\end{IEEEeqnarray*}
which proves the first equality in the claim.

For $\bY$ with $\lspan{\bY^\tr}=V$ and $\rank(\bY^\tr)=s$,
we have
\begin{IEEEeqnarray*}{rCl}
  p_{Y^*}(\bY)
  & = &
  \sum_{\bX:\lspan{\bY}\leq \lspan{\bX}} P_{Y^*|X}(\bY|\bX) p_{X}(\bX) \\
  & = &
  \sum_{r} \sum_{U\in\Gr(r,\ffield^M)} \sum_{\bX:\lspan{\bX^\tr} = U,
    \lspan{\bY}\leq \lspan{\bX}}\frac{1}{\cmat{r}{s}} 
  P_{\lspan{Y^\tr}|\lspan{X^\tr}}(V|U) \frac{1}{\cmat{T}{r}} p_{\lspan{X^\tr}}(U) \\
  & = & %
  \sum_{r} \sum_{U\in\Gr(r,\ffield^M)} \frac{1}{\cmat{T}{s}} P_{\lspan{Y^\tr}|\lspan{X^\tr}}(V|U)p_{\lspan{X^\tr}}(U) \\
  & = &
  \frac{1}{\cmat{T}{s}} p_{\lspan{Y^\tr}}(V)
\end{IEEEeqnarray*}
where the third equality follows from
\begin{IEEEeqnarray*}{rCl}
  |\{\bX:\lspan{\bX^\tr} =
    U,\lspan{\bY}\leq \lspan{\bX}\}|
  & = & \sum_{\tilde U\in \Gr(r,\ffield^T):\lspan{\bY}\leq \tilde U} |\{\bX: \lspan{\bX^\tr} =
    U,\lspan{\bX} = \tilde U \}| \\
  & = & \gcos{T}{r}\frac{\cmat{r}{s}}{\cmat{T}{s}} \cmat{r}{r}.
\end{IEEEeqnarray*}
Here, $\{\tilde U\in \Gr(r,\ffield^T):\lspan{\bY}\leq \tilde U\}$
is calculated in Lemma~\ref{lemma:c1}.
Hence,
\begin{IEEEeqnarray*}{rCl}
  \entropy(Y^*)
  & = &
  \sum_{\bY} p_{Y^*}(\bY) \log \frac{1}{p_{Y^*}(\bY)} \\
  & = &
  \sum_s \sum_{V\in \Gr(s,\ffield^N)}
  \sum_{\bY:\lspan{\bY^\tr}=V}
  \frac{p_{\lspan{Y^\tr}}(V)}{\cmat{T}{s}}
  \log \frac{\cmat{T}{s}}{ p_{\lspan{Y^\tr}}(V)} \\
  & = &
  \sum_s \sum_{V\in \Gr(s,\ffield^N)}
  {p_{\lspan{Y^\tr}}(V)}
  \log \frac{\cmat{T}{s}}{ p_{\lspan{Y^\tr}}(V)} \\
  & = &
  \sum_s p_{\rank(Y)}(s) \log \cmat{T}{s} +
  \entropy(\lspan{Y^\tr}).
\end{IEEEeqnarray*}
\end{IEEEproof}

\section{A Technical Lemma}
 \label{sec:prooffood2}

The following lemma gives a lower bound on the difference
  $R(\ffield^M)-R(V)$ for $V\in \Pj(\ffield^M)$.
The intuition behind the bound is that if the input rank is larger,
the output rank also tends to be larger.

\begin{lemma}\label{lemma:food2}
  Consider $\loc(H,T)$ with $T\geq M$. Fix a uniform-given-row-space input.
  For $V\in \Pj(\ffield^M)$ with $\dim(V)=r<\rank^*(H)$,
  \begin{equation*}
    R(\ffield^M) - R(V) > \Theta(T,r,H)\log  q,
  \end{equation*}
  where
  \begin{IEEEeqnarray*}{rCl}
    \Theta(T,r,H) & \triangleq & (T-M) \sum_{k: k>r} \Pr\{\rank(H)
    \geq k\}
    -  r(M-r) + \log_q \cmatt{r}{r}.
  \end{IEEEeqnarray*}
\end{lemma}

\begin{IEEEproof}%
  Let $\tilde U = \ffield^M$.
  Since $V\leq \tilde U$, there exists a full rank
  $M\times M$ matrix
  \begin{equation*}
    \mathbf D = \begin{bmatrix} \mathbf D_0 \\ \mathbf  D_1 \end{bmatrix}
  \end{equation*}
  such that $\lspan{\mathbf{D}^\tr} = \tilde U$ and
  $\lspan{\mathbf{D}_1^\tr} = V$. By
  Lemma~\ref{lemma:cond},
  \begin{align*}
    \sum_{s\geq k} P_{\rank(Y)|\lspan{X^\tr}}(s| V)  & =
    \Pr\{\rank(\mathbf{D_1} H)\geq k\},
  \end{align*}
  and
  \begin{IEEEeqnarray}{rCl}
    P_{\rank(Y)|\lspan{X^\tr}}(s|\tilde U)
    & = & \IEEEnonumber
    \Pr\{\rank(\mathbf D H)= s\} \\
    & = & \label{eq:in1-c}
    \Pr\{\rank(H) = s\}.
  \end{IEEEeqnarray}
  We know $\Pr\{\rank(H) \geq s\} \geq
  \Pr\{\rank(\mathbf{D_1} H)\geq s\}$. So
  \begin{equation}\label{eq:in1-a}
    \sum_{s\geq k} P_{\rank(Y)|\lspan{X^\tr}}(s|\tilde U)
    \geq \sum_{s\geq k} P_{\rank(Y)|\lspan{X^\tr}}(s| V).
  \end{equation}
  Moreover, for $k$ such that $r <k\leq \rank^*(H)$,
  \begin{equation}\label{eq:in1-b}
    \sum_{s: s\geq k} P_{\rank(Y)|\lspan{X^\tr}}(s|V)
    = 0.
  \end{equation}
  Thus,
  \begin{IEEEeqnarray*}{rCl}
  \sum_{s}s (P_{\rank(Y)|\lspan{X^\tr}}(s|\tilde U) -
    P_{\rank(Y)|\lspan{X^\tr}}(s|V))
  & = &
  \sum_k \sum_{s: s\geq k} (P_{\rank(Y)|\lspan{X^\tr}}(s|\tilde U) -
    P_{\rank(Y)|\lspan{X^\tr}}(s|V)) \\
  & \geq & \IEEEyesnumber \label{eq:in1-1}
    \sum_{k: \rank^*(H)\geq k>r}\sum_{s: s\geq k}
    P_{\rank(Y)|\lspan{X^\tr}}(s|\tilde U)  \\
  & \geq & \IEEEyesnumber \label{eq:in1-2}
    \sum_{k: \rank^*(H)\geq k>r} \Pr\{\rank(H) \geq k\} \\
  & \triangleq & \IEEEyesnumber\label{eq:in1}
  \E[H,r],
  \end{IEEEeqnarray*}
  where \eqref{eq:in1-1} is obtained by \eqref{eq:in1-a} and \eqref{eq:in1-b}; \eqref{eq:in1-2} follows from \eqref{eq:in1-c}.

  By the definition of $R(U)$ in \eqref{eq:g},
  \begin{IEEEeqnarray*}{rCl}
    \frac{R(\tilde U) - R(V)}{\log  q}
    & = &
    \sum_{s}
    P_{\rank(Y)|\lspan{X^\tr}}(s|\tilde U) \left((T-M)s+\log_q
      \frac{\cmatt{T}{s}}{\cmatt{M}{s}}\right) 
    - \sum_s P_{\rank(Y)|\lspan{X^\tr}}(s|\tilde
    V) \left((T-r)s+\log_q
      \frac{\cmatt{T}{s}}{\cmatt{r}{s}}\right)  \\
    & = &
    (T-M) \sum_{s}s (P_{\rank(Y)|\lspan{X^\tr}}(s|\tilde
    U) -P_{\rank(Y)|\lspan{X^\tr}}(s| V))  
    - (M-r)\sum_{s}sP_{\rank(Y)|\lspan{X^\tr}}(s| V) \\
    & & \qquad +
    \sum_s P_{\rank(Y)|\lspan{X^\tr}}(s|\tilde U) \log_q
    \frac{\cmatt{T}{s}}{\cmatt{M}{s}} - \sum_s
    P_{\rank(Y)|\lspan{X^\tr}}(s| V) \log_q
    \frac{\cmatt{T}{s}}{\cmatt{r}{s}}  \\
    & > &
    (T-M)\E[H,r] - r(M-r) + \log_q \cmatt{r}{r}, %
  \end{IEEEeqnarray*}
  where the last inequality follows from \eqref{eq:in1},
  \begin{equation*}
    (M-r)\sum_{s}sP_{\rank(Y)|\lspan{X^\tr}}(s|V) \leq r(M-r),
  \end{equation*}
  \begin{equation*}
     \sum_s P_{\rank(Y)|\lspan{X^\tr}}(s|\tilde U) \log_q
    \frac{\cmatt{T}{s}}{\cmatt{M}{s}} \geq 0,
  \end{equation*}
  and
  \begin{equation*}
    \sum_s
    P_{\rank(Y)|\lspan{X^\tr}}(s|V) \log_q
    \frac{\cmatt{T}{s}}{\cmatt{r}{s}}
     < 
    \sum_s
    P_{\rank(Y)|\lspan{X^\tr}}(s|V) \log_q
    \frac{1}{\cmatt{r}{s}} 
     \leq 
    \log_q
    \frac{1}{\cmatt{r}{r}}.
  \end{equation*}
\end{IEEEproof}

\section{Proof of Lemma~\ref{lemma:rsuloc}}
\label{sec:pll}
  We will show that for a row-space-symmetric LOC,
  \begin{equation} \label{eq:pf-dis}
    \mutual(\lspan{X};\lspan{Y}) \leq \locrate(\rank(X);\rank(Y)) + \mutual(\lspan{X};\rank(Y))
  \end{equation}
  where equality holds if and only if $p_{\lspan{Y}}(U)=p_{\lspan{Y}}(U')$
  for all $U$, $U'$ with $\dim(U) = \dim(U')$.  For convenience, we
  call an input distribution $\beta$-type if for any $U\in \ffield^T$,
  there exists $\bX_U$ with $\lspan{\bX_U}=U$ such that
  $p_X(\bX_U)=p_{\lspan{X}}(U)$.  By Lemma~\ref{the:beta}, there must
  exist a $\beta$-type input distribution achieving $C_{\subs}$.  When
  the input distribution is $\beta$-type,
  \begin{IEEEeqnarray*}{rCl}
    \mutual(\lspan{X};\rank(Y))
    &=&
    \mutual(X;\rank(Y)) \\
    &=&
    \mutual(X\lspan{X^\tr};\rank(Y))\\
    &=& \IEEEyesnumber \label{eq:pf-dis2}
    \mutual(\lspan{X^\tr};\rank(Y)),
  \end{IEEEeqnarray*}
  where the first equality is due to the fact that $X$ is
  $\beta$-type, and the last equality follows from the Markov chain
  $X\rightarrow \lspan{X^\tr}\rightarrow \rank(Y)$
  implied by
  Lemma~\ref{lemma:cond}.
  Then, for  a row-space-symmetric LOC,
  \begin{IEEEeqnarray}{rCl}
    C_{\subs}
    & = &  \label{eq:beta-1}
    \max_{p_X:\beta\text{-type}}\mutual(\lspan{X};\lspan{Y}) \\
    & \leq & \label{eq:beta-2}
    \max_{p_X:\beta\text{-type}}\left[\locrate(\rank(X);\rank(Y)) + \mutual(\lspan{X};\rank(Y))\right] \\
    & = & \label{eq:beta-3}
    \max_{p_X:\beta\text{-type}}\left[\locrate(\rank(X);\rank(Y)) + \mutual(\lspan{X^\tr};\rank(Y))\right] \\
    & \leq & \label{eq:beta-4}
    \max_{p_{\lspan{X^\tr}}}\left[\locrate(\rank(X);\rank(Y)) + \mutual(\lspan{X^\tr};\rank(Y))\right],
  \end{IEEEeqnarray}
  where \eqref{eq:beta-1} follows Lemma~\ref{the:beta},
  \eqref{eq:beta-2} is obtained by applying \eqref{eq:pf-dis} for
  row-space-symmetric LOCs, \eqref{eq:beta-3} follows from
  \eqref{eq:pf-dis2}, and \eqref{eq:beta-4} follows that
  $\locrate(\rank(X);\rank(Y))$ and $\mutual(\lspan{X^\tr};\rank(Y))$
  are related to $p_X$ only through $p_{\lspan{X^\tr}}$.

  To prove \eqref{eq:pf-dis}, fix a
  row-space-symmetric LOC.
  Let $\bX$ be an input matrix with rank $r$.
  Consider two subspaces $V'$ and $V$ of
  $\lspan{\bX}$ with dimension $s$. There exists a full rank matrix
  $\Phi$ such that $\Phi V = V'$. Then, by the property of row-space-symmetric LOCs,
  \begin{IEEEeqnarray*}{rCl}
    P_{\lspan{Y}|X}(V'|\bX)
    & = &
    \sum_{\bY:\lspan{\bY}=V'} P_{Y|X}(\bY|\bX) \\
    & = &
    \sum_{\bY:\lspan{\bY}=V} P_{Y|X}(\Phi \bY|\bX) \\
    & = & %
    \sum_{\bY:\lspan{\bY}=V} \frac{1}{\cmat{r}{s}} P_{\lspan{Y^\tr}|\lspan{X^\tr}}(\lspan{\bY^\tr\Phi^\tr}|\lspan{\bX^\tr}) \\
    & = &
    \sum_{\bY:\lspan{\bY}=V} \frac{1}{\cmat{r}{s}} P_{\lspan{Y^\tr}|\lspan{X^\tr}}(\lspan{\bY^\tr}|\lspan{\bX^\tr}) \\
    & = &
    P_{\lspan{Y}|X}(V|\bX).
  \end{IEEEeqnarray*}
  In other words, for all the subspaces $V$ of $\lspan{\bX}$ with the same dimension, $P_{\lspan{Y}|X}(V|\bX)$ are the same.
  Since by Lemma~\ref{lemma:cond}, %
  \begin{equation*}
    P_{\rank(Y)|\lspan{X^\tr}}(s|\lspan{\bX^\tr})
     = 
    P_{\rank(Y)|X}(s|\bX) 
     = 
    \sum_{V\in \Gr(s,\lspan{\bX})} P_{\lspan{Y}|X}(V|\bX),
  \end{equation*}
  we have for any $V\in \Gr(s,\lspan{\bX})$,
  \begin{equation*}
    P_{\lspan{Y}|X}(V|\bX) = \frac{1}{\gcos{r}{s}}P_{\rank(Y)|\lspan{X^\tr}}(s|\lspan{\bX^\tr}).
  \end{equation*}
  Then we have for $V\leq U$ with $\dim(U)=r$ and $\dim(V)=s$,
  \begin{IEEEeqnarray*}{rCl}
    P_{\lspan{Y}|\lspan{X}}(V|U)
    & = &
    \sum_{\bX:\lspan{\bX}=U} P_{\lspan{Y}|X}(V|\bX) P_{X|\lspan{X}}(\bX|U) \\
    & = &
    \sum_{\tilde U\in\Gr(r,\ffield^M)}
    \sum_{\bX:\lspan{\bX}=U,\lspan{\bX^\tr}=\tilde U}
    \frac{1}{\gcos{r}{s}} \\
    & & \times P_{\rank(Y)|\lspan{X^\tr}}(s|\lspan{\bX^\tr}) P_{X|\lspan{X}}(\bX|U) \\
    & = &
    \sum_{\tilde U\in\Gr(r,\ffield^M)}\frac{1}{\gcos{r}{s}}P_{\rank(Y)|\lspan{X^\tr}}(s|\tilde U) P_{\lspan{X^\tr}|\lspan{X}}(\tilde U|U) \\
    & = & \IEEEyesnumber \label{eq:pf-i2d}
    \frac{1}{\gcos{r}{s}} P_{\rank(Y)|\lspan{X}}(s|U).
  \end{IEEEeqnarray*}
  Substituting \eqref{eq:pf-i2d} into the conditional entropy $\entropy(\lspan{Y}|\lspan{X})$, we obtain
  \begin{equation}\label{eq:pf-8s92}
    \entropy(\lspan{Y}|\lspan{X}) = \sum_{r,s}p_{\rank(X)\rank(Y)}(r,s)\log \gcos{r}{s} + \entropy(\rank(Y)|\lspan{X}).
  \end{equation}
  Further, we have
  \begin{IEEEeqnarray*}{rCl}
    \entropy(\lspan{Y})
    & = &
    \entropy(\lspan{Y}\rank(Y))\\
    & = &
    \entropy(\rank(Y)) + \entropy(\lspan{Y}|\rank(Y))\\
    & = &
    \entropy(\rank(Y)) + \sum_s p_{\rank(Y)}(s) \entropy(\lspan{Y}|\rank(Y)=s) \\
    & \leq & \IEEEyesnumber \label{eq:pf-iwi}
     \entropy(\rank(Y)) + \sum_s p_{\rank(Y)}(s)\log \gcos{T}{s}
  \end{IEEEeqnarray*}
  with equality if and only if $$p_{\lspan{Y}}(V) = p_{\rank(Y)}(\dim(V))/\gcos{T}{\dim(V)}$$ for all $V$.
  Therefore, \eqref{eq:pf-dis} is proved by \eqref{eq:pf-8s92} and \eqref{eq:pf-iwi}. 

\section*{Acknowledgement}

We thank Kenneth W. Shum and Raymond W. Yeung for
helpful discussions.

\end{document}